\documentclass[11pt, a4paper, reqno,oneside]{amsart}
\usepackage{color}

\usepackage{amssymb}

\usepackage{bm}
\usepackage{bbm}

\usepackage{mathcomp}
\usepackage{dsfont}
\usepackage[toc,page]{appendix}
\usepackage{graphicx} 
\usepackage{subfigure}
\usepackage{enumerate}
\usepackage{amsfonts}

\usepackage{amscd}
\usepackage[latin1]{inputenc}
\usepackage{enumitem}
\usepackage[mathscr]{eucal}
\usepackage{indentfirst}
\usepackage{graphicx}
\usepackage{graphics}
\usepackage{pict2e}
\usepackage{epic}
\usepackage{esint}
\usepackage[margin=2.9cm]{geometry}
\usepackage{epstopdf} 
\usepackage{verbatim}
\usepackage{cancel}
\usepackage{amssymb}
\usepackage{xcolor}
\usepackage{dsfont}
\usepackage{hyperref}
\usepackage{physics}
\usepackage{amsmath}

\usepackage{tikz}
\usepackage{float}
\usepackage{pgfplots}

\allowdisplaybreaks

\definecolor{green}{rgb}{0.0, 0.5, 0.5}
\definecolor{yellow}{rgb}{0.5, 0.5, 0}
\definecolor{lgray}{gray}{0.9}
\definecolor{llgray}{gray}{0.95}
\definecolor{lllgray}{gray}{0.975}

\usepackage{hyperref}

\setitemize{itemsep=5pt, topsep=7pt}
\numberwithin{equation}{section}

\newtheorem{definition}{Definition}[section]
\newtheorem{lemma}[definition]{Lemma}

\newtheorem{proposition}[definition]{Proposition}

\newtheorem{theorem}[definition]{Theorem}
\newtheorem{corollary}[definition]{Corollary}

\theoremstyle{remark}%
\newtheorem{example}{Example}%
\newtheorem{remark}{Remark}%

\numberwithin{remark}{section}

\def\cN{\mathcal{N}}
\def\a0{{\rm a}_0}

\def\Im{\mathrm{Im}}

\newcommand{\vertiii}[1]{{\left\vert\kern-0.25ex\left\vert\kern-0.25ex\left\vert #1 
		\right\vert\kern-0.25ex\right\vert\kern-0.25ex\right\vert}}

\newcommand{\inn}[2]{\left\langle#1,\,#2\right\rangle}

\DeclareMathOperator{\supp}{supp}

\newcommand{\const}{\text{const.}}
\newcommand{\Lap}{\Delta}
\newcommand{\di}{\partial}

\newcommand{\br}[1]{\bigl\langle#1\bigr\rangle}

\newcommand{\eps}{\epsilon}

\newcommand{\al}{\alpha}

\newcommand{\Ac}{\mathcal{A}}
\newcommand{\Zb}{\mathbb{Z}}
\newcommand{\Rb}{\mathbb{R}}
\newcommand{\N}{\mathcal{N}}
\newcommand{\cL}{\mathcal{L}}

\newcommand{\E}{\mathcal{E}}

\newcommand{\W}{\mathcal{W}}

\newcommand{\Om}{\Omega}

\renewcommand{\l}{\lambda} 


\newcommand{\sbr}[1]{\left[#1\right]}
\newcommand{\Set}[1]{\left\{#1\right\}}

\newcommand{\od}[2]{\ensuremath{\frac{d#1}{d#2}}}
\newcommand{\md}[6]{\ensuremath{
		\ifinner
		\tfrac{\partial{^{#2}}#1}{\partial{#3^{#4}}\partial{#5^{#6}}}
		\else
		\tfrac{\partial{^{#2}}#1}{\partial{#3^{#4}}\partial{#5^{#6}}}
		\fi
}}
\newcommand{\del}[1]{\left(#1\right)}
\newcommand{\thmref}[1]{Theorem~\ref{#1}}

\newcommand{\secref}[1]{Section~\ref{#1}}
\newcommand{\lemref}[1]{Lemma~\ref{#1}}
\newcommand{\propref}[1]{Proposition~\ref{#1}}

\newcommand{\figref}[1]{Figure~\ref{#1}}
\newcommand{\corref}[1]{Corollary~\ref{#1}}

\DeclareMathOperator{\dG}{\mathrm{d}\Gamma}

\DeclareMathOperator{\Rem}{Rem}

\newcommand{\xidel}[1]{\inn{\xi}{{#1}\xi}}

\renewcommand{\cp}{\mathrm{c}}
\newcommand{\1}{\mathds{1}}
\newcommand{\Fl}{\mathcal{F}_{\perp \varphi_t}^{\leq N}}
\newcommand{\cE}{\mathcal{E}}

\newcommand{\ran}{\rangle}
\newcommand{\lan}{\langle}

\title[Propagation bounds on Bose-Einstein condensates]{Local enhancement of the mean-field approximation for bosons}

\author{Marius Lemm}
\address{Marius Lemm, Department of Mathematics, University of T\"ubingen, 72076 T\"ubingen, Germany }
\email{marius.lemm@uni-tuebingen,de}

\author{Simone Rademacher}
\address{Simone Rademacher, Department of Mathematics, LMU Munich,  Theresienstrasse 39, 80333 Munich, Germany }
\email{simone.rademacher@math.lmu.de}

\author{Jingxuan Zhang}
\address{Yau Mathematical Sciences Center\\
	Tsinghua University\\
	Haidian District\\
	Beijing 100084, China }
\email{jingxuan@tsinghua.edu.cn}
\begin{document}
	
	\begin{abstract}
		We study the quantum many-body dynamics of a Bose-Einstein condensate (BEC) on the lattice in the mean-field regime. We derive a local enhancement of the mean-field approximation: At positive distance $\rho>0$ from the initial BEC, the mean-field approximation error at time $t\leq \rho/v$ is bounded as $\rho^{-n}$, for arbitrarily large $n\geq 1$.    
		This is a consequence of new ballistic propagation bounds on the fluctuations around the condensate.  To prove this, we develop a variant of the ASTLO (adiabatic spacetime localization observable) method for the particle non-conserving generator of the fluctuation dynamics around Hartree states. 
		
		

	\end{abstract}
	
	\maketitle 
	\date{\today}
	
	\section{Introduction}
	We study the quantum many-body dynamics of a system of bosons on the lattice. 
	We formulate the problem as usual in terms of the algebraic operator formalism of creation and annihilation operators, i.e., we consider the symmetric Fock space 
	\begin{align}
		\label{def:Fockspace}
		\mathfrak{F}(\ell^2(\mathbb Z^d))
		=\mathbb C\oplus \bigoplus_{N\geq 1} (\ell^2(\mathbb Z^d))^{\otimes_s N}
	\end{align}
	with $d\geq 3$ representing the spatial dimenion and $\otimes_s N$ denotes the $N$-fold symmetritzed tensor product. On the Fock space, we consider solutions of the $N$-body Schr\"odinger equation
	\begin{align}
		\label{eq:Schroe}
		i \partial_t \psi_{N,t} = H_N \; \psi_{N,t}
	\end{align}
	with self-adjoint Hamiltonian operator given by 
	\begin{align}\label{HNdef}
		H_N =  \sum_{\substack{x,y \in \Zb^d  \\ x \sim y}  } (-\Delta_{x,y}) \;  a_x^*a_y+ {\frac{\l}{ 2N}}\sum_{\substack{x \in \Zb^d  }} \; a_x^*a_x^*a_xa_x \; . 
	\end{align}
	The main large parameter is $N$, the particle number. The interaction parameter $\lambda>0$ is a (small) order-$1$ constant. The prefactor $\frac{1}{N}$ in front of the second sum corresponds to the paradigmatic \textit{mean-field scaling}. 
	
	Since one is interested in the regime of large particle number $N$, eq.~\eqref{eq:Schroe} is effectively a \textit{linear PDE in extremely large dimension}. Understanding its behavior is an instance of the \textit{quantum many-body problem} and it requires the development of mathematical techniques that are quite different from standard PDE theory, because the dimension is so large. 
	

	Mean-field dispersive PDEs of the type \eqref{eq:Schroe} with Hamiltonian operator \eqref{HNdef} constitute a central paradigm in the mathematical study of quantum many-body systems, because they strike a good compromise between being genuinely interacting and not explicitly solvable, but still amenable  to rigorous mathematical analysis. Essentially, the prefactor $\frac{1}{N}$ in \eqref{HNdef} makes the interaction very weak and so the system can be ``expanded'' in a certain sense. As a consequence, bosons in the mean-field regime and their more singular siblings, especially bosons in the Gross-Pitaevskii regime, have been widely studied in the mathematical physics community.
	After breakthroughs by Erd\H{o}s-Schlein-Yau \cite{erdHos2009rigorous,erdos2010derivation} and Rodnianski-Schlein \cite{RS} on the derivation of an effective nonlinear Hartree equation describing the BEC, a lot of effort in the past 15 years has been devoted to many refinements of the dynamical description (e.g., the norm approximation by precisely describing the dynamical fluctuations through Bogoliubov theory); see \cite{AKS,BPS,benedikter2015quantitative,brennecke2019gross,bossmann2022beyond,chen2016klainerman,chen2019derivation,deuchert2023dynamics,DL,GMM,GMMa,knowles2010mean,Kuz,RL23,lewin2015fluctuations,mitrouskas2019bogoliubov,nam2017bogoliubov,pickl2011simple,RS22} and the review \cite{napiorkowski2023dynamics}.

	
	\subsection{First result: Locally enhanced mean-field approximation}
	The aforementioned results imply that the dynamical evolution of an initial BEC $\psi_{N,0}=\varphi_0^{\otimes N}$ can be described for large $N$-values in terms of an $N$-independent nonlinear evolution equations in $d$ dimensions.
	Consider an initially factorized state
	\begin{align}
		\label{purFac}
		\psi_{N,0} = \varphi_0^{\otimes N}. 
	\end{align}
	Due to the interaction term, the factorization is not preserved along the time evolution given by the Schr\"odinger equation \eqref{eq:Schroe}. Nonetheless, a macroscopic fraction of the particles still occupies the same one-particle state, called the condensate, whose dynamics is described by the nonlinear Schr\"odinger equation
	\begin{align}
		\label{Hartree}
		i \partial_t \varphi_t = h_{\varphi_t}\varphi_t \; \quad \text{with} \quad h_{\varphi_t}: =   - \Delta  +  {\l\vert \varphi_t \vert^2}  .
	\end{align}
	Denoting the reduced one-particle density matrix of the time-evolved quantum state $\psi_{N,t}$, by $\gamma_{\psi_{N,t}}$ and considering a one-particle observable with operator kernel $O=O(x,y)$, one has the mean-field approximation (see e.g., \cite{DL,RS})
	\[
	\big\vert \Tr \big(\big(\tfrac{1}{N}\gamma_{\psi_{N,t}} -  \vert \varphi_t \rangle \langle \varphi_t \vert\big)  O\big) \big\vert \leq \frac{C\|O \|_{\rm op} }{N} ,   \qquad t\geq 0.
	\]
	
	These prior approximation results probe the quantum gas \textit{globally} in space. However, we may expect that the approximation can be \textit{locally enhanced} based on the physical principle of \textit{spatial locality} which suggests that quantum information (and more general physical quantities) should only propagate at most with some bounded speed,  up to small errors.
	
	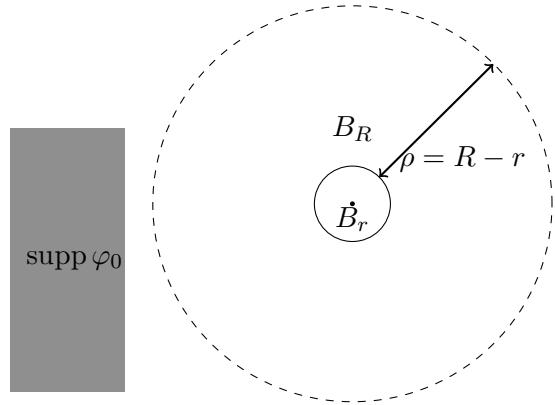
\begin{figure}[H]
		\centering
		\begin{tikzpicture}[scale=.5]
			\draw (0,0) circle (1) node at (0,-.4) {$B_r$};
			
			\filldraw (0,0) circle (0.05);
			
			\draw[dashed] (0,0) circle (3.5*1.5);
			


			\draw[thick,<->] (0.7071,0.7071) to (2.47*1.5,2.47*1.5);
			
			\node at (0,2) {$B_R$};
			
			\node at (2.9,1.2) {$\rho=R-r$};
			
			
			
			\fill[gray!88] (-6,-5) rectangle (-9,2);
			
			\node at (-7.3,- 1.5) {$\supp\varphi_0$ };
		\end{tikzpicture}
		\caption{The geometry in the first result (see \thmref{thm:trace}). If the initial condensate is supported away from $B_R$ and the test observable $O$ is supported in $B_r$, then  the mean-field approximation error $\mathrm{Tr}((\gamma_{\psi_{N,t}}-N\ket{\varphi_t}\bra{\varphi_t})O)$ is bounded by $\rho^{-n}$ for times $t\leq \rho/v$ where $\rho=R-r$ is the distance that has to be traversed.}\label{figBall1}
		
	\end{figure}
	
	By turning this principle into mathematical bounds, we improve the mean-field approximation at positive distances $\rho>0$ away from the initial BEC for times $t$ that are short compared to $\rho$. More precisely, in \thmref{thm:trace} below, we establish that for initial states $\psi_{N,0}= \varphi_0^{\otimes N}$ which are supported away from the origin, i.e.,
	\[
	\varphi_0 (x) = 0, \quad \text{for all} \quad x \in B_{R}\; ,
	\]
	the mean-field approximation at smaller distances distance $r< R$ from the origin, is significantly enhanced if the time is short compared to $\rho=R-r$, the distance that has to be traversed.

	Consider the situation shown in Figure \ref{figBall1} and let $O$ be any local observable acting on the smaller ball $B_r$, i.e.,
	\[
	O(x,y) = \mathds{1}_{x \in B_r} O(x,y) \mathds{1}_{y \in B_r}.
	\]
	Our first main result says that the mean-field approximation error satisfies the improved bound
	\begin{equation}\label{eq:main1}
		\big\vert \Tr \big(\big(\tfrac{1}{N}\gamma_{\psi_{N,t}} -  \vert \varphi_t \rangle \langle \varphi_t \vert\big)  O\big) \big\vert \leq
		\frac{C \|O \|_{\rm op} }{N}
		\frac{ 1  }{\rho^n},   \qquad \textnormal{for } t \leq  \frac{\rho}{ v},
	\end{equation}
	for \textit{any} $n\geq 1$. Here, $v$ is a state-independent, order-$1$ constant that plays the role of a velocity bound. In other words, we prove that for times $t \leq  \frac{\rho}{ v}$, the mean-field approximation error at distance $\rho>0$ from the initial condensate is enhanced to $\mathcal O(\rho^{-\infty})$.

	\subsection{Second result: Propagation bound on fluctuations}
	The concrete way in which we exploit the general principle of spatial locality is by showing that fluctuations  around the Bose-Einstein condensate effectively propagate at bounded speed $v>0$. That is, we derive a ``ballistic'' propagation bound on the fluctuations (a propagation bound is called ``ballistic'', if it is effective for times proportional to distance). This ballistic propagation bound then implies Theorem \ref{thm:trace} because it says that the fluctuations did not have sufficient time to propagate from $B_R^c$ to $B_r$, for times that are short compared to $(R-r)/v$,

	The propagation bound concerns the number of excitations, as encoded in a self-adjoint operator $\mathcal N^+$. The quality of the bound on the number of fluctuations $\mathcal N^+$ directly translates to the quality of the mean-field approximation.
	
	The standard bounds on $\mathcal N^+$ are of global nature. For instance, the influential works \cite{RS,lewin2015fluctuations} proved 
	\[
	\frac{\langle \psi_{N,t},(\mathcal N^++1)\psi_{N,t}\rangle}{\langle \psi_{N,0},(\mathcal N^+ +1)\psi_{N,0}\rangle}\leq Ce^{Ct} 
	\]
	where $\psi_{N,t}$ solves \eqref{eq:Schroe}. Most importantly, the bound says that the number of excitations is bounded independently of $N$. The term $e^{Ct}$ is somewhat undesirable because it rapidly grows in time. At the same time, the exponential growth appears to be very natural because it arises from a Gr\"onwall argument and indeed it is probably optimal in general, i.e., without any further assumptions on the nonlinear effective evolution equation.
	
	One physically natural assumption is that of dispersion (validity of nonlinear Strichartz estimates). This allows to completely remove the growth in time, as shown recently by Dietze-Lee \cite{DL} (see also \cite{Lee})
	\[
	\frac{\langle \psi_{N,t},(\mathcal N^++1)\psi_{N,t}\rangle}{\langle \psi_{N,0},(\mathcal N^+ +1)^2\psi_{N,0}\rangle}\leq C \; .
	\]
	
	Our approach is to consider the \textit{local} fluctuation numbers $\mathcal N_X^+$ with $X\subset\mathbb Z^d$. We show  that in the situation of Figure \ref{figBall1}, the fluctuations inside $B_r$ at time $t$ are $\mathcal O(\rho^{-\infty}) $ for $t\leq \frac{\rho}{v}$. More precisely, we derive the enhanced local bound
	 	\begin{equation}\label{eq:main2}
			\frac{\langle \psi_{N,t},(\mathcal N^+_{B_r}+1)\psi_{N,t}\rangle}{\langle \psi_{N,0},(\mathcal N^+  +1)^2\psi_{N,0}\rangle}\leq \frac{C}{\rho^n},
	\end{equation} 
	for $t\leq \frac{\rho}{v}$ and arbitrarily large $n\geq 1$. The precise statement is in Theorem \ref{thm:nloc} below. We point out that Theorem \ref{thm:nloc} is actually more general than the situation depicted in Figure \ref{figBall1}, because it does not require the initial state to be a pure condensate.
	
	\subsection{Related literature on many-body propagation bounds}
	Our idea is to exploit the general principle of spatial locality, which suggests that physical quantities effectively exhibit a maximal speed. In relativistic systems, this maximal speed is the speed of light, but this is neither mathematically valid nor physically meaningful (simply because the speed of light is so large)  in a non-relativistic model as the one described by \eqref{eq:Schroe} and \eqref{HNdef}. A suitable analog of the speed of light for many-body lattice systems goes by the name of ``Lieb-Robinson bounds'' \cite{lieb1972finite}, which showed how spatial locality can be used in a many-body context. Note in passing that the standard Lieb-Robinson bounds are concerned with  quantum spin systems, which have bounded interactions and therefore do not apply in our setting.

	Several recent works have developed propagation bounds (including Lieb-Robinson bounds) for Bose-Hubbard  Hamiltonians    \cite{FLS,FLSa,kuwahara2021lieb,kuwahara2024effective,kuwahara2024enhanced,LRSZ,LRZ,schuch2011information,yin2022finite}.
	These bounds actually apply to our Hamiltonian (which is a Bose-Hubbard Hamiltonian with the interaction scaled by $N^{-1}$). However, they are not able (nor designed) to give results like ours here. The reason is that 
	these other approaches are designed to treat all the bosons alike. This has the consequence that they are not able to separate the Bose-Einstein condensate, which is comprised of $\sim N$ many bosons, from the fluctuations, which are comprised of $\mathcal O(1)$ many bosons. Therefore, these bounds would lead to an additional $N$-factor on the right-hand sides of \eqref{eq:main1} and \eqref{eq:main2}, which would no longer be a useful bound. Therefore, the bounds developed in \cite{FLS,FLSa,kuwahara2021lieb,kuwahara2024effective,kuwahara2024enhanced,LRSZ,LRZ,schuch2011information,yin2022finite} are inconclusive for what we undertake here. \textit{This is why we need to develop new propagation bounds that are specially tailored to tracking the fluctuations.} Propagation bounds for the nonlinear Hartree equation were proved in \cite{arbunich2023maximal,huang2021uncertainty} and here we treat for the first time the fluctuations.

	As far as we are aware, propagation bounds on fluctuations have not played a large role so far in the dynamical description of bosons in the mean-field regime. A notable exception is a work by Erd{\"o}s-Schlein \cite{erdHos2009quantum} which used Lieb-Robinson type ideas between different particle number sectors inspired by \cite{nachtergaele2007lieb}, but without considering the spatial structure.  
	
	Methodologically, our result here builds on the  developments on bosonic propagation bounds, but with a new twist. In a nutshell, our key contribution is to demonstrate how the ASTLO (adiabatic spacetime localization observables) method that was originally developed for Bose-Hubbard Hamiltonians (see \cite{FLS,FLSa,LRSZ,LRZ}) can be adapted to the \textit{particle non-conserving} generator of the fluctuation dynamics. 
	
		Lastly, 	we would also like to mention that  semiclassical propagation bounds have played an important role in the mean-field description of fermions; see, e.g., \cite{elgart2004nonlinear,benedikter2014mean,lafleche2023strong,fresta2023effective}. These bounds are again concerned with exploiting the general principle of spatial locality, but in a physical and mathematical setting that is quite different from ours.
	
	
	\subsection{Setup and assumptions}
	
	We consider the Hamiltonian $H_N$ from \eqref{HNdef}.
	Here $\abs{\l}\le \l_0$ for some $\l_0>0$ to be determined later,
	and $-\Lap_{x,y}$ stands for the discrete Laplacian 
	\begin{align}
		\label{def:Laplace}
		-\Delta_{x,y} = \begin{cases}
			-1 & \text{if} \quad x\sim y,\\
			2d & \text{if}\quad  x=y,\\
			0& \text{otherwise}.
		\end{cases}
	\end{align}
	Here, $x\sim y$ denotes nearest neighbors on $\mathbb Z^d$. 
	
	

	
	For $t\geq 0$, we denote by $\varphi_t$  the solution to the Hartree equation \eqref{Hartree} with initial data $\varphi_0$,    $\psi_{N,t}$ the solution to the Schr\"odinger equation \eqref{eq:Schroe} with initial data $\psi_{N,0}$, and $\gamma_{\psi_{N,t}}$ the associated reduced one-particle  density matrix satisfying 
	$$\Tr  \gamma_{\psi_{N,t}}   O 
	=  \langle \psi_{N,t}, \dG (O) \psi_{N,t} \rangle,$$
	for any $1$-particle operator $O$, where $\dG$ denotes the second quantization map.
	  We abbreviate ${\ell^p}\equiv{\ell^p(\Zb^d)}$ for $1\le p\le \infty$,  $B_r:=\Set{x\in\Zb^d:\abs{x}\le r}$ for $r>0$, and  $\norm{\cdot}_{\mathrm{op}}$  for the operator norm  on $\ell^2$.

	We let  $d\ge3$ and assume
	the validity of the following Strichartz-type \textit{dispersive estimate (DE)} for the solution $\varphi_t$ of the Hartree equation \eqref{Hartree}:
	\begin{enumerate}[label=$\mathrm{(DE)}$]
		\item \label{disCond}The initial state $\varphi_0\in\ell^1$ with $\norm{\varphi_0}_{\ell^2}=1$, 
		and 	there exist $c,\,T>0$ independent of $\varphi$ such that
		\begin{align}\label{dis}
			\int_0^t \norm{\varphi_s}_{\ell^\infty}\,ds\le c \norm{\varphi_0}_{\ell^1},\qquad t\leq T.
		\end{align}
	\end{enumerate}
	
	The dispersive estimate  \ref{disCond}, that we assume in the following, can be verified for small $\lambda$.	The proofs are in \secref{sec43} and  Appendix \ref{App1} and some examples are given below.
	\begin{example}\label{ex1}
		Let $d\ge4$. Then there exist constants $\l_0,\,c>0$ depending only on $d$ s.th.~if $\abs{\l}\le\l_0$, then \eqref{dis} holds for all $t\ge0$, i.e., $T=\infty$.
	\end{example}
	
	\begin{example}\label{ex2}
		Let $d=3$. Then there exist absolute constants $\l_0,\,c>0$ s.th.~if $\abs{\l}\le\l_0$ and 	$T:= { {e^{1/\sqrt{\abs{\l}}}-1}}$, then \eqref{dis} holds.
	\end{example}

	To summarize, in dimensions $d\ge4$, Assumption \ref{disCond} holds globally in time and in $d=3$ it holds up to a large time $T=e^{1/\sqrt{\abs{\l}}}-1$, which is not very restrictive for us, since we are anyway only interested in bounds that hold for sufficiently short times. For additional information about Assumption \ref{disCond}, see Remark \ref{remark:disCond} at the end of this section. 

	\subsection{Results}
	Our first theorem provides a local enhancement of the mean-field approximation in the geometric situation schematically depicted in \figref{figBall1}.

	\begin{theorem}[Locally enhanced mean-field approximation] \label{thm:trace} 
		Let $d\ge3$ and suppose that condition \ref{disCond} hold.  Assume  
		\begin{itemize}
			\item  $\psi_{N,0}= \varphi_0^{\otimes N}$ is purely factorized;
			\item $\varphi_0$ satisfies, for some $r>0,\,\rho >4d$,
			\begin{align}\label{phi0cond}
				\varphi_0 (x) = 0, \quad \text{for all} \quad x \in B_{r+\rho}\; . 
			\end{align}
		\end{itemize}

		Then for any $4d<v\le\rho$ and any integer $n\geq 1$, there exists a constant $C>0$ depending only on $n,\,v,\,d$, $\abs{\l}$, the constant $c$ in \eqref{dis}, and the $\ell^1$ norm of $\varphi_0$, such that for any bounded local operator $O$ acting on $\ell^2$ with kernel satisfying 
		\begin{align}
			\label{ass:O}
			O(x,y) = \mathds{1}_{x \in B_r} O(x,y) \mathds{1}_{y \in B_r}  ,
		\end{align}
there holds
		\begin{align}
			\big\vert \Tr \del{\big(\gamma_{\psi_{N,t}} - N \vert \varphi_t \rangle \langle \varphi_t \vert\big)  O} \big\vert \leq\frac{ C \|O \|_{\rm op}    }{\rho^n},   \quad t \leq  \frac{\rho}{ v}  \; .\label{110}
		\end{align}
	\end{theorem}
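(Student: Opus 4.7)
The plan is to pass to the excitation Fock space via the Lewin-Nam-Serfaty-Solovej (LNSS) unitary map $\mathcal{U}_N(t)\colon\mathfrak{F}^{(N)}\to\mathfrak{F}_{\perp\varphi_t}^{\leq N}$ and reduce the mean-field error to quantities about the fluctuation vector $\xi_t:=\mathcal{U}_N(t)\psi_{N,t}$, which I will then control using the local fluctuation bound of \thmref{thm:nloc} together with a ballistic one-body propagation estimate for $\varphi_t$ implied by \ref{disCond}.

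Setting $Q_t:=1-|\varphi_t\rangle\langle\varphi_t|$, the standard LNSS algebra $\mathcal{U}_N(t)\,a^*(f)\,\mathcal{U}_N(t)^*=\langle\varphi_t,f\rangle\sqrt{N-\mathcal{N}^+}+a^*(Q_tf)$ yields the exact identity
\begin{align*}
\Tr\bigl((\gamma_{\psi_{N,t}}-N|\varphi_t\rangle\langle\varphi_t|)O\bigr)
&= -\langle\varphi_t,O\varphi_t\rangle\langle\xi_t,\mathcal{N}^+\xi_t\rangle + \langle\xi_t,\dG(Q_tOQ_t)\xi_t\rangle \\
&\quad + 2\Re\langle\xi_t,\sqrt{N-\mathcal{N}^+}\,a^*(Q_tO\varphi_t)\xi_t\rangle.
\end{align*}
In parallel I would establish a one-body Hartree propagation bound: for each integer $m\geq1$, $\|\mathds{1}_{B_r}\varphi_t\|_{\ell^2}\leq C_m\rho^{-m}$ for $t\leq\rho/v$, obtained by adapting the one-body ASTLO method (cf.\ \cite{arbunich2023maximal,huang2021uncertainty}), with \ref{disCond} controlling the nonlinearity perturbatively. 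This in particular delivers $|\langle\varphi_t,O\varphi_t\rangle|\leq C\|O\|_{\rm op}\rho^{-2m}$ and $\|Q_tO\varphi_t\|_{\ell^2}\leq C\|O\|_{\rm op}\rho^{-m}$.

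Each of the three terms above is then bounded separately. The diagonal term is controlled by combining Hartree propagation with the global bound $\langle\xi_t,\mathcal{N}^+\xi_t\rangle\leq C$ from \cite{DL}, giving a contribution of order $\|O\|_{\rm op}\rho^{-2m}$. For the quadratic term, I write $Q_tOQ_t=O-|\varphi_t\rangle\langle\varphi_t|O-O|\varphi_t\rangle\langle\varphi_t|+\langle\varphi_t,O\varphi_t\rangle|\varphi_t\rangle\langle\varphi_t|$; the dominant contribution $\langle\xi_t,\dG(O)\xi_t\rangle$ is supported on $B_r\times B_r$, and Cauchy-Schwarz gives $|\langle\xi_t,\dG(O)\xi_t\rangle|\leq\|O\|_{\rm op}\langle\xi_t,\mathcal{N}_{B_r}^+\xi_t\rangle\leq C\|O\|_{\rm op}\rho^{-n}$ via \thmref{thm:nloc}, while the remaining pieces acquire factors of $\mathds{1}_{B_r}\varphi_t$ or $\langle\varphi_t,O\varphi_t\rangle$ and are also controlled by Hartree propagation.

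The hard part will be the cross term, where a direct Cauchy-Schwarz using $\|\sqrt{N-\mathcal{N}^+}\xi_t\|\leq\sqrt N$ would re-introduce an unwanted $N$-factor. To avoid this I would split $Q_tO\varphi_t=(Q_tO\varphi_t)\mathds{1}_{B_r}+(Q_tO\varphi_t)\mathds{1}_{B_r^c}$. On $B_r^c$ the identity $(Q_tO\varphi_t)(x)=-\langle\varphi_t,O\varphi_t\rangle\varphi_t(x)$ provides the double smallness $\|(Q_tO\varphi_t)\mathds{1}_{B_r^c}\|_{\ell^2}\leq C\|O\|_{\rm op}\rho^{-2m}$; on $B_r$ the localized estimate $\|a(f)\xi_t\|\leq\|f\|_{\ell^2}\sqrt{\langle\xi_t,\mathcal{N}_{B_r}^+\xi_t\rangle}$ (valid for $f$ supported in $B_r$) pairs $\|f\|_{\ell^2}\leq C\|O\|_{\rm op}\rho^{-m}$ with the local fluctuation bound $\langle\xi_t,\mathcal{N}_{B_r}^+\xi_t\rangle\leq C\rho^{-n}$ from \thmref{thm:nloc}. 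Tuning the polynomial orders $m$ and $n$ at complementary rates---both available arbitrarily large in the respective propagation bounds---then absorbs the $\sqrt N$ and yields the claimed estimate~\eqref{110}.
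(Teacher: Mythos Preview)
Your treatment of the diagonal and quadratic terms is essentially the same as the paper's, and the one-body Hartree propagation bound you invoke is exactly what is used (it is stated as \thmref{thmNLSMVB}). The gap is in the cross term.

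Your claim that ``tuning the polynomial orders $m$ and $n$ at complementary rates \ldots\ absorbs the $\sqrt N$'' cannot work, because $N$ and $\rho$ are \emph{independent} parameters. Both of your pieces carry an unavoidable $\sqrt N$: for $f$ supported in $B_r$ one has
\[
\bigl|\langle\xi_t,\sqrt{N-\mathcal N^+}\,a^*(f)\xi_t\rangle\bigr|
=\bigl|\langle a(f)\sqrt{N-\mathcal N^+}\,\xi_t,\xi_t\rangle\bigr|
\le \|f\|_{\ell^2}\sqrt{\langle\xi_t,(N-\mathcal N^+)\mathcal N_{B_r}\xi_t\rangle}
\le \sqrt N\,\|f\|_{\ell^2}\sqrt{\langle\mathcal N_{B_r}\rangle_t},
\]
and similarly for the $B_r^c$ piece. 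So your bound is of the form $C\sqrt N\,\rho^{-m}$ (or $C\sqrt N\,\rho^{-m-n/2}$), and no fixed choice of $m,n$ makes this $\le C'\rho^{-n}$ uniformly in $N$. The constant in \eqref{110} is required to be $N$-independent.

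The paper resolves this not by squeezing more $\rho$-decay, but by a dynamical cancellation that gains a genuine $N^{-1/2}$. One writes, for $f_{(t;s)}=\mathcal L_{(t;s)}q_tO\varphi_t$ with $\mathcal L_{(t;s)}$ the Bogoliubov flow \eqref{fEq},
\[
\sqrt N\,\langle\Omega,\mathcal W_N^*(t;0)\phi_+(q_tO\varphi_t)\mathcal W_N(t;0)\Omega\rangle
= i\sqrt N\int_0^t ds\;\bigl\langle\Omega,\mathcal W_N^*(t;s)\bigl(\partial_s+i[\mathcal L_N(s),\cdot\,]\bigr)\phi_+(f_{(t;s)})\,\mathcal W_N(t;s)\Omega\bigr\rangle,
\]
using that $\phi_+(f_{(t;0)})$ annihilates the vacuum. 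The point is that $f_{(t;s)}$ is chosen so that the \emph{leading} (Bogoliubov) part of the generator cancels, and what remains comes either from the correction $\mathcal O(N^{-1})$ in the modified CCR or from the remainder terms $\mathcal R_{N,s}^{(j)}=\mathcal O(N^{-1/2})$. This produces an overall factor $N^{-1/2}$ that kills the $\sqrt N$ out front, and the surviving bound is $C\|O\|_{\rm op}\|\varphi_t\|_{\ell^2(B_r)}\int_0^t\langle\mathcal N+1\rangle_{(t;s)}\,ds$, which is then made small by the Hartree propagation estimate on $\|\varphi_t\|_{\ell^2(B_r)}$ alone. Your argument is missing this Bogoliubov--Duhamel step; without it the cross term cannot be closed.
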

	
	This theorem is proved in \secref{sec:trace}.	Comparing with existing result (e.g.~\cite{RS, Lee, DL}), it says that mean-field approximation (l.h.s.~of \eqref{110}) is small for fixed $N$, as long as    {the system is probed sufficiently far away from the initial states $\varphi_0$.}
	\begin{remark}
		The lower bound on the velocity $v>4d$  in Theorem \ref{thm:trace} arises as follows: the threshold value $2d$ naturally arises as an upper bound on the $1$-body momentum operator $i[-\Lap,\abs{x}]$ (this fact can be easily verified by Schur's test) and two successive propagation bounds are used in the proof. By contrast, the upper bound	$v\le\rho$ is purely technical and can be replaced by $v \le \mu\rho$ for \thmref{thm:trace}  for any $\mu>0$, with the final estimate \eqref{110}   depending on $\mu$. The situation is similar for \thmref{thm:nloc}. 
	\end{remark}


	\medskip

	We come to our second main result, the propagation bound on the number of fluctuations orthogonal to the condensate,
	\begin{align}\label{N+def}
		\mathcal{N}^+ (t) = \sum_{i=1}^N q_t^{(i)},
	\end{align}
	where $q_t^{(i)}$ is an operator on  $\ell_s^2( \Zb^{dN}):=[\ell^2(\Zb^d)]^{\otimes _s N}$, acting as $q_t = 1 - \vert \varphi_t \rangle \langle \varphi_t \vert$ on the $i$-th particle and as identity on all other particles. We prove that the \textit{local} number of excitations 
	\begin{align}
		\mathcal{N}_{B_r}^+ (t) = \sum_{i=1}^N  \big( q_t \mathds{1}_{B_r} q_t \big)^{(i)}\label{Nr} 
	\end{align}
	satisfies a propagation bound that depends on the initial  geometric configuration of $\psi_{N,0}$ and $\varphi_0$; see \figref{figBall}.


	{	\begin{theorem}[Propagation bound on fluctuations]
			\label{thm:nloc} 
			Let $d\ge3$ and let condition \ref{disCond} hold.  Assume  
			that the initial state is of the more general form
				\[\psi_{N,0} = \sum_{j=0}^N \varphi_0^{\otimes (N-j)} \otimes_s \xi_{0}^{(j)},
				\]
			where $\xi_{0}^{(j)} \in  \big( \ell^2_{\perp \varphi_0}( \mathbb{Z}^d)\big)^{\otimes_s N}$ and $\ell^2_{\perp \varphi_0}( \mathbb{Z}^d)$ denotes the orthogonal complement of $\varphi_0$ in $\ell^2( \mathbb{Z}^d)$,
			and 
			\begin{itemize}
				
				\item $\varphi_0$ satisfies, for some $r>0,\,\rho>2d$, 
				\begin{align}\label{phi0cond'}
					\varphi_0 (x) = 0, \quad \text{for all} \quad x \in B_{r+2\rho}\; ; 
				\end{align}
				\item  $\psi_{N,0}$ has no fluctuations in $B_{r+\rho}$, i.e. 
				%
				\begin{align}
					\label{psi0Cond}
					\big\langle \psi_{N,0}, \; \mathcal{N}_{B_{r+\rho}}^+ (0) \;  \psi_{N,0} 
						=0.
				\end{align}
			\end{itemize}
			Then for any $2d<v\le \rho$ and any integer $n\geq 1$, there exists a constant $C>0$ depending only on $n,\,v,\,d$, $\abs{\l}$, the constant $c$ in \eqref{dis}, and the $\ell^1$ norm of $\varphi_0$, such that
			\begin{align}
				\label{115}
				\big\langle \psi_{N,t}, \; \mathcal{N}_{B_r}^+ (t) \;  \psi_{N,t} \big\rangle \le \frac{C }{\rho^n}\br{\del{\cN^+(0)+1}^2}_0,\quad t\le\frac{\rho}{v}.
			\end{align}
	\end{theorem}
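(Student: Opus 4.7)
My plan is to implement the ASTLO method in the excitation picture. Apply the standard Lewin--Nam--Serfaty--Solovej-type unitary $U_{N,t}:\ell_s^2(\Zb^{dN})\to\mathfrak F^{\le N}_{\perp\varphi_t}$ to pass from $\psi_{N,t}$ to the excitation vector $\xi_t=U_{N,t}\psi_{N,t}$. The fluctuation dynamics is $i\partial_t\xi_t=\cL_N(t)\xi_t$ with a generator $\cL_N(t)$ containing a kinetic term $\dG(-\Lap)$, a one-body potential $\lambda\dG(|\varphi_t|^2)$, the $O(N^{-1})$ particle-conserving quartic interaction, and -- crucially -- the particle-\emph{non-}conserving Bogoliubov pair terms of the schematic form $\tfrac{\lambda}{2}\sum_x\varphi_t(x)^2(b_x^*b_x^*+b_xb_x)$. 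Under $U_{N,t}$, the local fluctuation observable reads $\cN^+_{B_r}(t)=\dG(\mathds 1_{B_r})$ on the excitation Fock space, and assumption \eqref{psi0Cond} becomes $\dG(\mathds 1_{B_{r+\rho}})\xi_0=0$.

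Next I would build the ASTLO family. Fix a smooth non-increasing $g$ with $g=1$ on $(-\infty,0]$, $g=0$ on $[1,\infty)$, an auxiliary velocity $v'\in(2d,v)$, and a step width $\delta\in(0,\rho)$. For $s\in[0,t]$ set $\chi_s(x)=g\del{(|x|-r-\rho+\delta+v's)/\delta}$ and $A(s):=\dG(\chi_s)$, so that $\supp\chi_0\subset B_{r+\rho}$ and $\chi_t\ge\mathds 1_{B_r}$ whenever $\delta+v't\le\rho$. Then $A(0)\xi_0=0$ by the support hypothesis, and it remains to estimate the Heisenberg derivative $\partial_sA+i[\cL_N,A]$. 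The $\partial_s$ contribution equals $-v'\dG(\chi_s')/\delta$, where $\chi_s'\ge0$ is a bump localized in the thin annulus $B_{r+\rho-v's}\setminus B_{r+\rho-\delta-v's}$. A discrete Schur-type estimate yields $\pm\dG(i[-\Lap,\chi_s])\le(2d/\delta)\dG(\chi_s')$ plus adiabatic remainders supported in the same annulus. The one-body potential $\lambda\dG(|\varphi_s|^2)$ commutes with $A(s)$, and the quartic part contributes only an $O(N^{-1})$ correction. Combining, and using $v'>2d$,
\[
\partial_sA(s)+i[\cL_N(s),A(s)]\le-\tfrac{v'-2d}{\delta}\dG(\chi_s')+\mathrm{Bog}(s)+\mathrm{Rem}(s).
\]

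The main obstacle is the Bogoliubov remainder $\mathrm{Bog}(s)$, whose relevant commutator with $A(s)$ equals $\lambda\sum_x\chi_s(x)\varphi_s(x)^2(b_x^*b_x^*+b_xb_x)$; this has no definite sign, and a naive Cauchy--Schwarz bound $\le\|\chi_s\varphi_s^2\|_{\ell^\infty}(\cN^++1)$ would destroy all the gain. Standard particle-conserving ASTLO therefore does not suffice. The new ingredient is to combine the ASTLO with a Hartree finite-speed-of-propagation bound: under \ref{disCond}, a Duhamel argument based on the polynomial off-light-cone decay of the free lattice propagator $e^{it\Delta}$ -- or directly the Hartree propagation bounds of \cite{arbunich2023maximal,huang2021uncertainty} -- applied to the support condition \eqref{phi0cond'} yields $|\varphi_s(x)|\le C_n\rho^{-n}$ uniformly on $B_{r+2\rho-v's}$ for $s\le\rho/v$. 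Since $\supp\chi_s\subset B_{r+\rho-v's}\subset B_{r+2\rho-v's}$, this gives $\|\chi_s\varphi_s^2\|_{\ell^\infty}=O(\rho^{-2n})$, and the Bogoliubov contribution is absorbed as a polynomially small error. With the one-step bound in hand, I would iterate the ASTLO $n$ times by stacking $n$ concentric cutoffs inside $B_{r+\rho}\setminus B_r$, each step extracting a factor $\sim\rho^{-1}/n$ from the negative adiabatic term and feeding the current $\dG(\chi_s')$ as the target observable for the next. Closing all remainders with the global a priori bound $\langle\xi_s,(\cN^++1)^2\xi_s\rangle\le C\langle(\cN^+(0)+1)^2\rangle_0$ from \cite{DL,Lee} then produces \eqref{115}.
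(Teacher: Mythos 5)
Your plan is conceptually aligned with the paper's actual proof: pass to the excitation picture via $\mathcal U_{N,t}$, run an ASTLO argument on $\dG(\chi_s)$, identify the $\varphi$-dependent Bogoliubov pair terms as the genuinely new obstacle, and neutralize them with a ballistic propagation bound for the Hartree flow exploiting the support condition \eqref{phi0cond'}, finally closing with a global moment bound on $\cN^+$. This is the same chain of ideas as Theorem \ref{thmLocFlucEst} $\to$ Theorem \ref{thmNLSMVB} $\to$ Theorem \ref{thmBallFluc}, combined with Theorem \ref{thm35}.

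Three points where the details diverge from the paper and deserve care. First, the commutator with the pair term need not be relegated to a remainder controlled by the \emph{global} $(\cN^++1)$: since $|b_xb_x|\le n_x$ and $|\widetilde K_{2,t}(x)|\le|\varphi_t(x)|^2$, one gets $i[\mathrm{Bog},\dG(\chi_s)]\le 2|\lambda|\,\|\varphi_t\|^2_{\ell^\infty(\supp\chi_s)}\dG(\chi_s)$ --- i.e.\ the localization is preserved and the Bogoliubov term enters the Gr\"onwall factor $M_R(t)$ multiplicatively (cf.\ Lemma \ref{lemma:commutator} and \eqref{319}). The Hartree propagation bound is still needed (to show $M_R(t)=O(1)$ rather than $O(1)\cdot\rho$), but the structure is multiplicative, not additive; your ``naive Cauchy--Schwarz'' worry is slightly misplaced. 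Second, the iteration you describe as ``stacking $n$ concentric cutoffs'' is not how the paper obtains the polynomial order: the paper uses a \emph{single} transition region of width $s\sim\rho$ and extracts $s^{-n}$ from a Taylor-like commutator expansion of $[\dG(-\Lap),\dG(\chi)]$, generating a hierarchy $\N_{j_2',ts},\dots,\N_{j_n',ts}$ of ASTLOs in the same geometry which are then bounded recursively (Propositions \ref{prop36}--\ref{propAest1}); a pure concentric telescope over $n$ shells of width $\rho/n$ does not obviously produce $\rho^{-n}$ and would need its own estimate. Third, for the Hartree propagation you cite continuum references, whereas the required lattice propagation estimate is a different result (here Theorem \ref{thmNLSMVB}, from \cite{Zha}); likewise the global moment bound \eqref{Nest} is reproved in the lattice setting from the discrete Strichartz estimate \eqref{DisEst1}, since the continuum bounds of \cite{DL,Lee} rest on $t^{-d/2}$ decay which fails on $\Zb^d$ in low dimension.
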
}
	This theorem is proved in \secref{sec5}.
	
		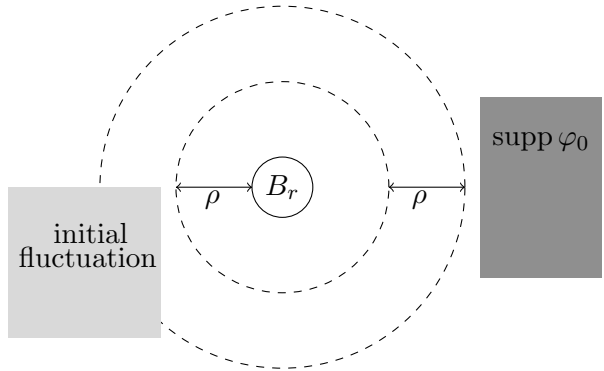
\begin{figure}[h!]
		\centering
		\begin{tikzpicture}[scale=.4]
			\draw (0,0) circle (1) node at (0,0) {$B_r$};
			
			
			\draw[dashed] (0,0) circle (3.5);
			
			\draw[dashed] (0,0) circle (6);
			
			\draw[<->] (-1,0) -- (-3.5,0)  ;
			
			\draw[<->] (3.5,0) -- (6,0)  ;


			
			
			\node at (-2.3,-0.4) {$\rho$};
			\node at (4.5,-0.4) {$\rho$};
			
			
			\fill[gray!88] (6.5,-3) rectangle (10.5,3);
			
			\fill[gray!30] (-4,-5) rectangle (-9,0);
			
			\node at (8.5, 1.5) {$\supp\varphi_0$};
			\node at (-6.3,- 1.5) { initial };
			\node at (-6.4,- 2.3) { fluctuation};
		\end{tikzpicture}
		\caption{The assumption on the geometry in \thmref{thm:nloc}, which is a weaker assumption than that for Theorem \ref{thm:trace}.  Estimate \eqref{115} says that if the initial $\varphi_0$ is supported $2\rho$-away from $B_r$ and there are initially no fluctuations on $B_{r+\rho}$, then  the local fluctuation number in $B_r$ is suppressed up to time $t\le\rho/v$.}\label{figBall}  
		
	\end{figure}

	Physically, Theorem \ref{thm:nloc} controls two processes in which fluctuations can move: either (a) by the condensate moving through space and fluctuations being created from the condensate or (b) by existing fluctuations moving through space. Of course, these two processes are not independent from each other and controlling their joint propagation as described by the generator of the fluctuation dynamics is at the heart of the proof. 


	Notice that Theorem \ref{thm:nloc} does not require $\psi_{N,0}$ to be globally factorized as in \eqref{purFac}. It only requires that the initial condensate part and the initial fluctuations are supported away from the ball $B_r$ where we probe the system as shown in \figref{figBall}. 
	If one assumes the stronger condition that $\psi_{N,0}$ is purely factorized, i.e., \eqref{purFac}, then condition \eqref{psi0Cond} holds for all $\rho>0$ and \eqref{115} improves to
	$$				\big\langle \psi_{N,t}, \; \mathcal{N}_{B_r}^+ (t) \;  \psi_{N,t} \big\rangle \le \frac{C  }{\rho^n},\qquad t\le \frac\rho v.$$

	
	We close the presentation of the main results with some remarks on the dispersive estimate  \ref{disCond} that we assumed throughout.

	\begin{remark}[Remark on the dispersive estimate  \ref{disCond}]\label{remark:disCond}
		\;
		\begin{enumerate}[label=(\roman*)]
			
			\item 
			The reason that for $d=3$ the dispersive estimate  \ref{disCond} does not necessarily hold globally in time is not technical, but rather has to do with the discrete Laplacian truly having  slightly weaker dispersive properties than its continuum counterpart. Indeed, the natural long-time scaling of the free evolution $\|e^{it\Delta}f\|_\infty$ is only $t^{-d/3}$ on the lattice compared to the $t^{-d/2}$-decay that is familiar from the continuum; see \cite{SK,HY} for explanations how $t^{-d/3}$ arises from properties of the Fourier symbol. For us, the presence of $T= {e^{1/\sqrt{\abs{\l}}}-1}$ in $d=3$ does not play an important role, because we are anyway only interested in proving a result for reasonably short times.
			
			\item The condition $\abs{\l}\le \l_0$ is also used to ensure global-wellposedness. Indeed, through the rescaling $\varphi_t \mapsto \sqrt{\abs{\l}}\varphi_t$ of the condensate wave function, the condition on $\l$  can be replaced by smallness condition in $\ell^2$-norm of the initial data. 
			
			\item 
			If we assume well-posedness of the Hartree equation, then it is also possible to drop \ref{disCond} altogether. In this case, the bounds hold for all $d\geq 1$ with  the constant $C$ in \eqref{110},  \eqref{115} replaced by $Ce^{Ct}$; see \eqref{512}, \eqref{110'}. 
			This bound  still locally enhances prior results for times $t\ll \log (R-r)$.
			

			\item 
			The $\ell^1$-norm of initial data enters the main results \eqref{110} and \eqref{115} through the  assumption of dispersive estimate \ref{disCond}. We typically consider the initial condensate function $\varphi_0$, which is always $\ell^2$-normalized, to be spread over an order-$1$ length scale and so $\|\varphi_0\|_1$ plays the role of an order-$1$ constant. 

		\end{enumerate}

	\end{remark}
	\subsection{Organization of the paper}
	In \textit{\secref{sec:flucdyn}}, we present preliminary results that are used repeatedly in the sequel, including the definition for the fluctuation dynamics, its basic properties,  and some commutator bounds. 
	In \textit{\secref{secLocFlucEst}}, we prove a key local fluctuation estimate that consistutes the main ingredient in the proof of the main results.	
	In \textit{\secref{sec4}}, combining certain commutator bounds for the fluctuation dynamic and the dispersive estimate  for the condensate dynamic \eqref{Hartree}, we prove a priori $\mathcal O(1)$ estimates on the total fluctuation. 	
	In \textit{Sections \ref{sec5}} and \textit{\ref{sec:trace}}, we use the preceding results to prove the main Theorems \ref{thm:trace} and \ref{thm:nloc}. 
	In the three Appendices, we collect relevant dispersive estimates for the discrete Hartree equations and the proofs of some technical lemmas.

	\section{Fluctuations around the Hartree evolution} 
	\label{sec:flucdyn} 
	
\subsection{Truncated Fock space $\Fl$}
	To control the evolution of quantum fluctuation, we need to factor out contributions of the condensate  dynamics given by the Hartree equation \eqref{Hartree} . To this end we observe that {given a solution $\varphi_t,\,t\ge0$ evolving according to \eqref{Hartree},} any $\psi_N \in \ell^2_s( \Zb^{dN})$ has a unique decomposition as 
	\begin{align}
		\psi_N = \sum_{j=0}^N \varphi_t^{\otimes (N-j)} \otimes_s \xi_{t}^{(j)}, \quad \text{where} \quad \xi_{t}^{(j)} \in  \ell^2_{\perp \varphi_t}( \mathbb{Z}^d)^{\otimes_s j}
	\end{align}
	into products of the condensate wave function $\varphi_t$ and excitations $\xi_t^{(j)}$ that are orthogonal to products of the condensates. There is a unitary, first introduced in \cite{LNSS15},  
	\begin{align}
		\mathcal{U}_{N,t} :\ell^2_s( \Zb^{dN}) \rightarrow \bigoplus_{j=0}^N \ell^2_{\perp \varphi_t}( \mathbb{Z}^d)^{\otimes_s j} := \mathcal{F}_{\perp \varphi_t}^{\leq N}, \quad \mathcal{U}_{N,t}  \psi_N =\lbrace \xi_t^{(0)}, \dots, \xi_t^{(N)} \rbrace =:  \xi_{N,t} \label{UNtDef}
	\end{align}
	mapping an $N$-particle wave function $\psi_N$ to  excitations $\xi_{N,t}$. Note that in contrast to the standard bosonic Fock space \eqref{def:Fockspace} that is built over $\ell^2( \mathbb{Z}^d)$, the truncated Fock space $\Fl$ is built over $\ell^2_{\perp \varphi_t}( \mathbb{Z}^d)$, the orthogonal complement of $\varphi_t$ in $\ell^2( \mathbb{Z}^d)$. Moreover, $\Fl$  contains sequences $\xi_{N,t}$ of at most $N$ elements. 
	
		On the truncated Fock space $\mathcal{F}_{\perp \varphi_t}^{\leq N}$ we furthermore have for any operator $A$ on $\ell^2( \mathbb{Z}^d)$ that 
	\begin{align}
		\label{eq:id2}
		\dG (q_t A q_t) = \dG (A).
	\end{align}
	Specifically, for the local excitation number operator (c.f.~\eqref{Nr})
	\begin{align}
		\mathcal{N}_{X}^+ (t):= \sum_{i=1}^N  \big( q_t \mathds{1}_{X} q_t \big)^{(i)} ,\quad X\subset \Zb^d,
	\end{align}
  the following identity holds on $\Fl$ for any $X$: 
	\begin{align}
		\mathcal{N}_X^+ (t) = \mathcal{N}_X := \sum_{z \in X} n_z  \; , \quad \text{with} \quad n_z  := a_z^*a_z \label{Nid} \; . 
	\end{align}
	
			\subsection{Fluctuation dynamic $\W_N(t;s)$} 
	We are interested in fluctuations around (products of) the condensate that we study through the so-called fluctuation dynamics on $\Fl$:
	\begin{align}
		\mathcal{W}_N (t;s) = \mathcal{U}_{N,t} e^{-iH_N (t-s)}\mathcal{U}_{N,s}^* \label{def:W},
	\end{align}
	which satisfies 
	\begin{align}
		i \partial_t \mathcal{W}_N (t;s) = \mathcal{L}_N (t) \mathcal{W}_N(t;s) ,
	\end{align}
	with generator 
	\begin{align}
		\label{def:L}
		\mathcal{L}_N (t) =  \mathcal{U}_{N,t}^* H_N \mathcal{U}_{N,t} + \big( i \partial_t \mathcal{U}_{N,t}^* \big)\mathcal{U}_{N,t}  \; . 
	\end{align}
	
		Recall the global number of excitation is described by the operator $\cN^+(t)$ from \eqref{N+def}.
	We can explicitly compute the generator $\mathcal{L}_{N} (t)$ using that the unitary $\mathcal{U}_N (t)$ acts  on products of creation and annihilation operators for $f,g \in \ell^2_{\perp \varphi_t}( \mathbb{Z}^d)$ as 
	\begin{align}\label{def:b}
		\left\{\begin{aligned}
			\mathcal{U}_{N,t}^* a^*(\varphi_t) a(f) \mathcal{U}_{N,t} =& \sqrt{N - \mathcal{N}^+ (t)} a(f)  =: \sqrt{N} b(f),  \\
			\mathcal{U}_{N,t}^* a^*(f) a(\varphi_t) \mathcal{U}_{N,t} =&  a^*(g)\sqrt{N - \mathcal{N}^+ (t)}  =: \sqrt{N } b^*(g),
		\end{aligned}\right.
	\end{align}
	and
	\begin{align}
		\label{eq:propU}
		\mathcal{U}_{N,t}^* a^*( \varphi_t) a(\varphi_t) \mathcal{U}_{N,t} = N - \mathcal{N}^+ (t), \quad \mathcal{U}_{N,t}^* a^*(f) a(g) \mathcal{U}_{N,t} = a^*(f) a(g) \; . 
	\end{align}	
	In \eqref{def:b} we introduced modified creation and annihilation operators $b^*(f),b(g)$ that, in contrast to $a^*(f),a(g)$, leave the truncated Fock space $\mathcal{F}_{\perp \varphi_t}^{\leq N}$ invariant. This properties comes with the price of modified canonical commutation relations (CCR) 
	\begin{align}
		[b(f), b^*(g)] = \bigg( 1 - \frac{\mathcal{N}^+ (t)}{N} \bigg) \langle g,f \rangle - \frac{1}{N} a^*(g) a(f) , \quad [b^*(f), b^*(g)] = [b(f), b(g)] = 0 ,
	\end{align}
	which, compared to  the standard CCR 
	\begin{align}
		\label{eq:CCR}
		[a(f), a^*(g)] = \langle f, g \rangle, 
	\end{align}
	have a correction term of order $\mathcal O(N^{-1})$.

			\subsection{Basic properties of $\W_N(t)$}

	Based on \eqref{def:b} and \eqref{eq:propU}, explicit computations show that the generator of the fluctuation dynamics \eqref{def:W} is given by 
	\begin{align}
		\label{eq:L-sum}
		\mathcal{L}_N (t) := \mathbb{H} + \sum_{j=1}^N \mathcal{R}_{N,t}^{(j)},
	\end{align}
	where the leading order, quadratic term is given by 
	\begin{align}
		\mathbb{H}  = \dG \big( h_{\varphi_t} + \lambda \widetilde{K}_{1,t} \big) + \frac{\lambda}{2} \sum_{ x \in\Zb^d}  \big[ \widetilde{K}_{2,t} (x) b_x^*b_x^* + \overline{\widetilde{K}}_{2,t} (x) b_x b_x \big] \label{211}.
	\end{align}
	Here, with $J: \ell^2( \Zb^d) \rightarrow \ell^2( \Zb^d)$ denoting the anti-linear operator $Jf = \overline{f}$, we write
	\begin{align}
		\label{tKdef}
		\widetilde{K}_{1,s} =& q_s K_{1,s} q_s,\quad \widetilde{K}_{2,s} = (Jq_sJ) K_{2,s} q_s,\\
		K_{1,s}(x) =& \varphi_s (x) \overline{\varphi}_s (x), \quad K_{2,s} (x) = \varphi_s (x) \varphi_s (x) . \label{tkDef2}
	\end{align}
	The remainder terms are given by 
	\begin{align}
		\label{def:Ri}
		\mathcal{R}_{N,t}^{(1)} =& \frac{\lambda}{2} \dG \big( q_t \big[\vert \varphi_t \vert^2 \varphi_t + \widetilde{K}_{1,t} {-\mu_t}\big] q_t \big)  \frac{1- \mathcal{N}^+ (t)}{N} + \lambda {\frac{\mathcal{N}^+(t) }{\sqrt{N}} b(q_t \vert \varphi_t \vert^2 \varphi_t )} + {\rm h.c.},   \\
		\mathcal{R}_{N,t}^{(2)} =& \frac{\lambda}{\sqrt{N}} \sum_{x \in \Zb^d} \varphi_t (x) a^*(q_{t,x} ) a(q_{t,x} ) b (q_{t,x} ) + {\rm h.c.}, \notag \\
		\mathcal{R}_{N,t}^{(3)} =& \frac{\lambda}{N}\sum_{x \in \Zb^d}  a^*(q_{t,x} ) a^*(q_{t,x} )a(q_{t,x} ) a(q_{t,x} ) \; . \label{213}
	\end{align}
	Here,  in \eqref{def:Ri}, we set \begin{align}
		\label{mutDef}
		2\mu_t := \sum_{x \in \mathbb{Z}^d} \vert \varphi_t (x) \vert^2 \; \vert \varphi_t (y) \vert^2.
	\end{align} 

	In the remainder of this paper, we will derive various bounds on (powers of) the number of excitations $\mathcal{N}^+ (t)$  w.r.t.~to the solution $\psi_{N,t}$ of the Schr\"odinger equation \eqref{eq:Schroe} that is given by 
	\begin{align}
		\langle \psi_{N,t}, \mathcal{N}^+ (t) \psi_{N,t} \rangle = \langle e^{-iH_N t} \psi_{N,0}, \mathcal{N}^+ (t) e^{-iH_N t} \psi_{N,0} \rangle  \; . 
	\end{align}
	With the definition of the fluctuation dynamics $\mathcal{W}_N (t;0)$ in \eqref{def:W} and the observation that 
	\begin{align}
		\label{eq:id1}
		\mathcal{U}_{N,t} \dG (q_t A q_t ) \mathcal{U}_{N,t}^* = \dG (q_t A q_t )
	\end{align}
	for any operator $A$ on $\ell^2( \mathbb{Z}^d)$ from \eqref{eq:propU}, and thus, in particular $\mathcal{U}_{N,t} \mathcal{N}^+ (t) \mathcal{U}_{N,t}^* = \mathcal{N}^+ (t)$, we find  
	\begin{align}
		\langle \psi_{N,t}, \mathcal{N}^+ (t) \psi_{N,t} \rangle = \langle \mathcal{W}_{N}(t;0) \mathcal{U}_{N,0} \psi_{N,0}, \mathcal{N}^+ (t) \mathcal{W}_{N}(t;0) \mathcal{U}_{N,0} \psi_{N,0} \rangle .
	\end{align}

	Lastly, we summarize in the following  lemma various estimates to bound the commutators with the generator of the fluctuation dynamics $\cL_N(t)$:

	\begin{lemma}
		\label{lemma:commutator}  
		For $h: \Zb^d \rightarrow \mathbb{R}$, we have 
		\begin{align}\label{217}
			i \big[ & \mathcal{L}_N (t) - \dG( - \Delta ), \sum_{z \in \Zb^d} h(z) n_z \big] \notag \\
			& \leq 2 \vert \lambda \vert  \sum_{x \in \Zb^d} \vert h(x) \vert \;  \bigg( 2 \vert \varphi_t (x) \vert^2 n_x  + \frac{\vert \varphi_t (x) \vert^3}{\sqrt{N}} \mathcal{N} n_x^{1/2} + \frac{\vert \varphi_t (x) \vert}{\sqrt{N}} n_x (n_x+1)^{1/2} \bigg),
		\end{align}
		as on operator inequality on $\mathcal{F}_{\perp \varphi_t}^{\leq N}$. 
		Moreover, there exists a universal constant $C>0$ such that for $\psi, \xi \in \mathcal{F}_{\perp \varphi_t}^{ \leq N}$, we have
		\begin{align}
			\big\vert \big\langle \xi,  \big[ \mathcal{N}, \mathcal{L}_N (t)- \dG ( -\Delta) \big]  \psi \big\rangle \big\vert &\leq C \vert \lambda \vert  \| \varphi_t \|_{\ell^\infty} \| ( \mathcal{N} + 1) \psi \| \; \| \xi \|  ,\label{eq:comm1}\\ 
			\big\vert \big\langle \xi,  ( \mathcal{N} + 3 )^{-1/2} \big[ \mathcal{N}, \big[ \mathcal{N}, \mathcal{L}_N (t) - \dG ( -\Delta)\big] \big] \psi \big\rangle \big\vert &\leq C \vert \lambda \vert  \| \varphi_t \|_{\ell^\infty} \| ( \mathcal{N} + 1)^{1/2} \psi \| \; \| \xi \| \; .  \label{eq:comm2}
		\end{align}
	\end{lemma}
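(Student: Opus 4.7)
The three estimates will be proven by a direct computation using the explicit form \eqref{eq:L-sum}--\eqref{213} of $\cL_N(t)$. After subtracting $\dG(-\Lap)$, the generator splits into a particle-number-preserving (PNP) quadratic piece $\dG(\l\abs{\varphi_t}^2+\l\widetilde K_{1,t})$, a pair creation/annihilation piece $\frac{\l}{2}\sum_x[\widetilde K_{2,t}(x) b_x^*b_x^*+\mathrm{h.c.}]$, and the three remainders $\cR^{(j)}_{N,t}$ carrying explicit $N^{-1/2}$ or $N^{-1}$ prefactors. The plan is to commute each of these with the observable in question and bound the result using standard Fock-space estimates.

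For \eqref{217} we commute with $\sum_z h(z) n_z$. The purely multiplicative part $\l\dG(\abs{\varphi_t}^2)$ is diagonal in position and contributes zero. For $\l\dG(\widetilde K_{1,t})$, we expand $\widetilde K_{1,t}=(1-\ket{\varphi_t}\bra{\varphi_t})\abs{\varphi_t}^2(1-\ket{\varphi_t}\bra{\varphi_t})$ and use the identity $[\dG(A),\sum_z h(z)n_z]=\sum_{x,y}A(x,y)(h(y)-h(x)) a_x^*a_y$, combined with $\abs{h(y)-h(x)}\le\abs{h(x)}+\abs{h(y)}$ and pointwise Cauchy--Schwarz, to bound its contribution by the first term on the right of \eqref{217}. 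For the pair-creation piece, the pointwise computation $[b_x^* b_x^*,n_z]=2\delta_{xz}b_x^*b_x^*$ reduces the commutator to $-i\l\sum_x h(x)\widetilde K_{2,t}(x)b_x^*b_x^*+\mathrm{h.c.}$; we then apply the operator inequality $\pm(b_x^*b_x^*+b_xb_x)\le 2 n_x+O(N^{-1})$ on $\Fl$ together with $\abs{\widetilde K_{2,t}(x)}\le\abs{\varphi_t(x)}^2$, yielding again the first summand. Finally, for each remainder $\cR^{(j)}_{N,t}$ we expand $q_{t,x}(y)=\delta_{xy}-\overline{\varphi_t(x)}\varphi_t(y)$, so that every $a^\sharp(q_{t,x})$ and $b^\sharp(q_{t,x})$ splits into a $\delta$-piece localized at $x$ and a rank-one-correction piece. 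Commuting with $\sum_z h(z)n_z$ then produces sums over $x$ of products of at most three creation/annihilation operators at the single site $x$, which can be bounded pointwise using $\norm{a_x\psi}^2=\lan\psi,n_x\psi\ran$ and $\norm{a^*(\varphi_t)\psi}^2\le\norm{(\cN+1)^{1/2}\psi}^2$. The second and third summands on the right of \eqref{217} arise respectively from the $b$-linear part of $\cR^{(1)}_{N,t}$ (carrying $\cN^+/\sqrt N$ and hence a factor $\cN n_x^{1/2}/\sqrt N$) and from the cubic $\cR^{(2)}_{N,t}$ (producing $n_x(n_x+1)^{1/2}/\sqrt N$).

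The proofs of \eqref{eq:comm1} and \eqref{eq:comm2} are shorter, because $\cN$ commutes with every PNP operator, so only the number-changing parts of $\cL_N(t)$ contribute. Using the identities $[\cN,b^*(f)]=b^*(f)$ and $[\cN,b(f)]=-b(f)$, valid on $\Fl$, the commutator with the pair-creation piece returns $\l\sum_x\widetilde K_{2,t}(x)b_x^*b_x^*-\mathrm{h.c.}$ (an overall factor of two), and analogous formulas hold for the non-PNP parts of $\cR^{(1)}_{N,t},\cR^{(2)}_{N,t}$. Bounding these by Cauchy--Schwarz and the standard estimates $\norm{b(f)\psi}\le\norm{f}\norm{\cN^{1/2}\psi}$, $\norm{b^*(f)\psi}\le\norm{f}\norm{(\cN+1)^{1/2}\psi}$, together with $\sum_x\abs{\widetilde K_{2,t}(x)}^2\le\norm{\varphi_t}_{\ell^\infty}^2\norm{\varphi_t}_{\ell^2}^2$, yields \eqref{eq:comm1}. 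For \eqref{eq:comm2} the double commutator produces the same non-PNP operators multiplied by numerical constants (e.g.\ $[\cN,[\cN,b^*(f)b^*(g)]]=4b^*(f)b^*(g)$), and the resolvent-type factor $(\cN+3)^{-1/2}$ on the left absorbs one half-power of the number operator, giving $\norm{(\cN+1)^{1/2}\psi}$ on the right, as claimed.

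The main technical obstacle is obtaining the \emph{pointwise in $x$} structure of the right-hand side of \eqref{217} (involving $n_x$ rather than $\cN$): this forces us to process each summand at the level of individual lattice sites rather than in terms of global Fock-space norms, and in particular the rank-one $q_t$-pieces hidden in $\widetilde K_{1,t}$, $\widetilde K_{2,t}$ and $q_{t,x}$ must be expanded and the resulting cross terms estimated separately before summing over $x$. By contrast, \eqref{eq:comm1}--\eqref{eq:comm2} are essentially bookkeeping once the non-PNP structure is isolated.
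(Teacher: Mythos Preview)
Your outline for \eqref{eq:comm1} and \eqref{eq:comm2} matches the paper's argument: only the number-nonconserving pieces survive, and the standard $b^\sharp$-bounds plus $\sum_x\abs{\widetilde K_{2,t}(x)}^2\le\norm{\varphi_t}_{\ell^\infty}^2$ finish the job.

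For \eqref{217} your plan works in principle but misses a structural shortcut the paper exploits. Since the inequality is asserted \emph{on} $\Fl$, identity \eqref{eq:id2} applies: $\dG(q_tAq_t)=\dG(A)$ and, more generally, $a(q_{t,x})\psi=a_x\psi$ for every $\psi\in\Fl$ because $a(\varphi_t)\psi=0$. Using this, the paper observes that $\dG(\l\widetilde K_{1,t})$, the $\dG$-part of $\cR^{(1)}_{N,t}$, and all of $\cR^{(3)}_{N,t}$ become \emph{diagonal in position} on $\Fl$ and therefore commute exactly with $\sum_z h(z)n_z$; no Cauchy--Schwarz is needed there. What remains is only the pair term of $\Hb$, the $b$-linear part of $\cR^{(1)}_{N,t}$, and $\cR^{(2)}_{N,t}$, each already written with bare $a_x,b_x$ at a single site, and these give the three summands on the right of \eqref{217} via the pointwise bounds $\abs{b_xb_x}\le n_x$, $\cN^+\abs{b_x}\le\cN\,n_x^{1/2}$, and $\abs{a_x^*a_xb_x}\le n_x(n_x+1)^{1/2}$.

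Your proposed expansion of $q_{t,x}=\delta_x-\overline{\varphi_t(x)}\varphi_t$ produces rank-one pieces $\overline{\varphi_t(x)}\,a^\sharp(\varphi_t)$ that are \emph{not} localized at $x$, so the assertion that the commutator ``produces sums over $x$ of products at the single site $x$'' is not literally true at the operator level. It becomes true only after you note that those rank-one pieces vanish on $\Fl$ --- which is precisely the content of \eqref{eq:id2}. If you make that step explicit, your argument collapses to the paper's; if you instead try to estimate the cross terms directly, you can still close, but you will be doing strictly more work for no gain.
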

	This lemma is proved in Appendix  \ref{sec:commutator-estimates}.

	\section{Local fluctuation estimate}\label{secLocFlucEst}
	In this section, we prove local decay estimates for the fluctuation dynamics $\W_N(t;s)$ defined in \eqref{def:W}.  Within this section we will work in general dimensions $d\ge1$. 
	
	Let   $N\ge2$ and fix a Hartree dynamic $\varphi_t,\,t\ge0$ evolving according to \eqref{Hartree} with $\ell^2$-normalized initial state. For ease of notation, given $\xi_0\in \Fl$ and operator $\Ac$, we write 
	\begin{align}\label{def31}
		\br{\Ac}_t:=\br{\xi_t,\Ac\xi_t},\quad \xi_t:=\W_N(t;0)\xi_0.
	\end{align}
	For $R>r>0$, put
	\begin{align}
		\label{Mdef}
		M_{R}(t) :=& {\abs{\l}}\int_0^t \del{ \norm{\varphi_\tau}_{\ell^\infty(B_R)}+5\norm{\varphi_\tau}_{\ell^\infty(B_R)}^2}\,d\tau,\\
		\label{Edef}
		E_{R,r}(t):=&\  t\ \del{\frac{v}{ R-r }}^{n+1}{\sup_{0\le \tau\le t}\br{\cN_{B_{R+1}}}_\tau}  
		\notag\\+&
		{{{\frac{ \abs{\l}}{{N}}}\sup_{0\le \tau\le t}{\br{\del{\cN+1}^{2}}_\tau}}\int_0^t\del{ \norm{\varphi_\tau}_{\ell^4(B_R)}^4+{\norm{\varphi_\tau}_{\ell^\infty(B_R)}}}}\,d\tau
		.
	\end{align}
		%
		%
		%
		%
	%
	
	Finally, we define the velocity parameter
	\begin{align}
		\label{kappa}
		\kappa:=\sup_x\sum_y \abs{\Delta_{x,y}}\abs{x-y}{=2d}.
	\end{align}
	By Schur's test, it is easy to verify that the $1$-body momentum operator $i[-\Lap,\abs{x}]$ is bounded in norm by $\kappa$. 
	
	The main result of this section is the following:
	\begin{theorem}[Local fluctuation estimate]\label{thmLocFlucEst}
			%
		Let $d\ge1$.
		For every {$n\in\mathbb{N}_+$} and $v > \kappa$ {  there exists $C=C(n, v,d)>0$ such that for all $r>0$, $R\ge r+v$, and $0\le t \le \tfrac{R-r}{v}$, we have}
		\begin{align}
			\label{MVB2}
			&e^{-M_{R}(t)}\br{\N_{B_r}}_t \le  {(1+C(R-r)^{-1})\br{\N_{B_{R}}}_0+CE_{R,r}(t)}.
		\end{align}
		
	\end{theorem}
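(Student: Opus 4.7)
The proof should adapt the ASTLO (adiabatic spacetime localization observable) method to the non-particle-conserving generator $\cL_N(\tau)$ from \eqref{eq:L-sum}. I would introduce a family of smooth, monotone cutoff functions $\chi^{(k)}_s:\Zb^d\to[0,1]$ indexed by $k=0,1,\dots,n$ and $s\in\Rb$, where $\chi^{(k)}_s$ is a mollification of $\1_{B_s}$ at scale $(R-r)/(n+1)$, and consider the spacetime-tilted observables
\begin{align*}
	A^{(k)}(\tau):=\dG\bigl(\chi^{(k)}_{r+v(t-\tau)+k(R-r)/(n+1)}\bigr),
\end{align*}
which drift backwards in $\tau$ at the ballistic speed $v$. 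The target bound $\br{\cN_{B_r}}_t\le\br{A^{(0)}(t)}_t$ would then be reduced inductively in $k$, each step paying a factor $v/(R-r)$, so that after $n$ steps the argument produces the $(v/(R-r))^{n+1}$ decay visible in $E_{R,r}$. The initial-data prefactor $(1+C(R-r)^{-1})\br{\cN_{B_R}}_0$ arises from the smoothing scale of $\chi^{(0)}$ at $\tau=0$.

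\textbf{Main computation.} Differentiating along the fluctuation dynamic gives
\begin{align*}
	\tfrac{d}{d\tau}\br{A^{(k)}(\tau)}_\tau=-v\,\br{\dG(\partial_s\chi^{(k)}_s)}_\tau+i\br{[\cL_N(\tau),A^{(k)}(\tau)]}_\tau.
\end{align*}
Splitting $\cL_N=\dG(-\Lap)+(\cL_N-\dG(-\Lap))$, the kinetic commutator $i[\dG(-\Lap),\dG(\chi^{(k)}_s)]$ equals $\dG$ of a finite-difference operator whose expectation is bounded by $\kappa\,\br{\dG(\abs{\partial_s\chi^{(k)}_s})}_\tau$ via Schur's test and \eqref{kappa}. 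Since $v>\kappa$, the drift term absorbs this kinetic piece and the net contribution is non-positive, plus a remainder supported in a thin annulus inside $B_{R+1}$ which is bounded by $A^{(k+1)}(\tau)/(R-r)$. This is the inductive step that produces one factor of $v/(R-r)$ per iteration.

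\textbf{Control of the non-kinetic commutator.} For $[\cL_N(\tau)-\dG(-\Lap),A^{(k)}(\tau)]$ I would apply \eqref{217} of \lemref{lemma:commutator} with $h=\chi^{(k)}_s$. The leading $2\abs{\l}\sum_x\chi^{(k)}_s(x)\abs{\varphi_\tau(x)}^2 n_x$ is pointwise dominated by $2\abs{\l}\,\|\varphi_\tau\|_{\ell^\infty(B_R)}^2 A^{(k)}(\tau)$, so Gronwall's inequality produces the exponential prefactor $e^{M_R(t)}$ in \eqref{MVB2}. The subleading $1/\sqrt N$ and $1/N$ terms involving $\abs{\varphi_\tau(x)}^3\cN n_x^{1/2}$ and $\abs{\varphi_\tau(x)}n_x(n_x+1)^{1/2}$ are handled by Cauchy--Schwarz in $n_x$ versus $\cN$, combined with the a priori bound $\br{(\cN+1)^2}_\tau=O(1)$ that will be established in \secref{sec4}; this produces the $\|\varphi_\tau\|_{\ell^4(B_R)}^4$ and $\|\varphi_\tau\|_{\ell^\infty(B_R)}$ factors in the second contribution to $E_{R,r}$.

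\textbf{Main obstacle.} The chief difficulty is the non-particle-conserving pair terms $\sum_x[\widetilde K_{2,\tau}(x)b_x^*b_x^*+\hc]$ in $\mathbb{H}$, eq.~\eqref{211}: these do not commute with $\dG(\chi)$ in a form that admits a clean divergence structure, so the Bose--Hubbard ASTLO arguments of \cite{LRSZ,LRZ} cannot be transferred verbatim. \lemref{lemma:commutator} is precisely engineered to package this contribution into pointwise operator bounds weighted by powers of $\abs{\varphi_\tau(x)}$, but doing so introduces $N^{-1/2}$-weighted terms with non-canonical powers $n_x^{1/2}$ and $n_x(n_x+1)^{1/2}$ of the local number operator. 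Closing the $n$-fold iteration therefore hinges on using the $O(1)$ bound on $\cN^+(\tau)$ to convert these contributions into the $\|\varphi_\tau\|_{\ell^4(B_R)}^4/N$-entries of $E_{R,r}$ without losing uniformity in $N$.
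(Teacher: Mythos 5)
Your overall plan has the right ingredients and is in the spirit of the paper's ASTLO argument (spacetime‑tilted observables drifting at ballistic speed, absorption of the kinetic commutator, \lemref{lemma:commutator} plus Gr\"onwall for the Hartree piece, and the a priori $\br{(\cN+1)^2}$ bound from Section~\ref{sec4}), with your iteration organized over a shift index $k$ rather than the paper's derivative‑order recursion in the class $\cE$ (Propositions~\ref{prop36}--\ref{propAest1}). The gap is in your central kinetic‑commutator step: it is \emph{not} true that Schur's test gives $i[\dG(-\Lap),\dG(\chi^{(k)}_s)]\leq\kappa\,\dG(\abs{\partial_s\chi^{(k)}_s})$ plus a thin‑annulus remainder. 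Schur's test only yields $\norm{i[-\Lap,\chi]}_{\mathrm{op}}\leq\kappa\norm{\partial_s\chi}_{\ell^\infty}$, hence $\pm\dG(i[-\Lap,\chi])\leq\kappa\norm{\partial_s\chi}_{\ell^\infty}\cN$, a \emph{global} bound which does not carry the spatial locality your induction requires. The localized one‑body inequality $\pm\,i[-\Lap,\chi]\leq\kappa\abs{\partial_s\chi}$ is in fact false at the boundary of $\supp\partial_s\chi$: for a nearest‑neighbor pair $\{x_0,y_0\}$ with $\partial_s\chi(x_0)=0$ but $\chi(y_0)\neq\chi(x_0)$, the $2\times2$ principal minor of $\kappa\abs{\partial_s\chi}-i[-\Lap,\chi]$ has negative determinant.

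What actually holds is the Taylor‑expansion bound \eqref{ME1'},
\begin{align*}
i[\dG(-\Lap),\N_{f,ts}]\leq \kappa s^{-1}\N_{f',ts}+C_{f,n}\sum_{k=2}^n s^{-k}\N_{j_k',ts}+\frac{C_{f,n}(4d+1)}{s^{n+1}}\N_{B_{R+1}},
\end{align*}
with finitely many intermediate annulus terms at every Taylor order $2\le k\le n$ and a \emph{full‑ball} remainder $\N_{B_{R+1}}$ (not an annulus) carrying the small factor $s^{-(n+1)}$. Obtaining the well‑signed leading term relies on symmetrizing with $\sqrt{f'}$, which is why the class $\cE$ in \eqref{classE} demands $\sqrt{f'}\in C_c^\infty$ — a condition a generic mollified indicator does not satisfy. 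Two further consequences of this gap: (i) you drop the non‑positive drift after absorbing the kinetic piece, but the paper needs that sign to control $\int_0^t f'[\tau]\,d\tau$ in \propref{propIntRME}, which is the engine that converts one integral into initial data at each stage; your plan offers no replacement mechanism for $\int_0^t\br{A^{(k+1)}(\tau)}_\tau\,d\tau$. (ii) Your radii $r+v(t-\tau)+k(R-r)/(n+1)$ reach about $R+\tfrac{n}{n+1}(R-r)$ at $\tau=0,\,k=n$, so even granting the commutator bound the iteration would leave $\N$-expectations on balls far larger than $B_R$ or $B_{R+1}$, unlike \eqref{MVB2} and \eqref{Edef}; the shifts would need to be compressed into $[r,R]$.
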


	\thmref{thmLocFlucEst}  is proved at the end of \secref{secPfThm3.1}. 	The proof  is based on the ASTLO (adiabatic spacetime localization observable) method introduced in \cite{FLS,FLSa} and developed in \cite{LRSZ,LRZ,Zhaa}. 
	
	The ASTLO, introduced in \secref{secDefASTLO}, is a class of propagation observables designed to track and control the evolution of number operators. In \secref{secDiffEst}, we prove a key differential inequality obeyed by the ASTLOs, which shows that ASTLOs are time-decaying up to a small remainder. In \secref{secIntEst} we use Gr\"onwall's inequality to bootstrap the differential inequality for ASTLO, leading to a convenient integral estimate. Finally, in \secref{secPfThm3.1}, combining the integral estimate with certain geometric properties relating ASTLOs to the excitation number operators associated to the propagation regions, we complete the proof of \thmref{thmLocFlucEst}.

	To appreciate  \eqref{MVB2}, one could compare with known results e.g.~\cite{Lee, DL}  in the continuum. Assume for the moment   that there exists $C$ independent of $N$ such that
	$$
	\br{\mathcal{N}}_t\le C.
	$$
	Substituting this global fluctuation estimate  back to bound the supremum in line \eqref{Edef}, we deduce from \eqref{MVB2} a local estimate of the form
	\begin{align}
		\label{locForm}
		\br{\N_{B_r}}_t\le C \text{ $\br{\N_{B_R}}_0+$ small remainder},
	\end{align}
	for any ball $B_{r} $ and for time up to $(R-r)/v $, as long as $R-r$ is large.  Later, in \secref{sec4}, we will turn this heuristic into a rigorous statement; see \corref{cor37} therein.

	\subsection{Definition of the ASTLOs}\label{secDefASTLO}

	Let $v>\kappa$ with $\kappa$ defined in \eqref{kappa} and set %
	$$
	v':=\frac12\del{\kappa+v}.
	$$For any function $f\in L^\infty(\Rb)$, $t\ge0$, and $R,\,s>0$, we define 
	\begin{equation}\label{ftsDef}
		f_{ts}(x)\equiv f(|x|_{ts}):=f\left(\frac{R-v't-|x|}{s}\right).
	\end{equation}
	The ASTLOs are then given by
	\begin{align}\label{propag-obs1}
		{		\N_{f,ts}:=\sum_{x\in \Zb^d}f_{ts}(x)n_x, \quad  f \in \E.}
	\end{align}
	Here, for $\eps:=v-v'>0$, we denote by $\cE$ the function class
	\begin{equation}\label{classE}
		\mathcal{E}\equiv \cE_\epsilon:=\left\{
		f\in C^{\infty} (\mathbb{R})\left\vert \begin{aligned} &f\geq0,\: f\equiv 0 \text{ on } (-\infty,\epsilon/2],f\equiv 1 \text{ on } [\epsilon,\infty) \\ &f'\geq 0,\:\sqrt{f'}\in C_c^{\infty} (\mathbb{R}),\:\supp f'\subset(\eps/2,\eps)
		\end{aligned}\right.\right\}.
	\end{equation}
	Essentially, elements in $\cE$ are {cutoff} functions with compactly supported derivatives, see \figref{figF} below.
	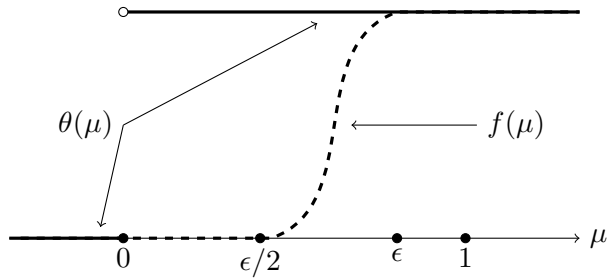
\begin{figure}[H]
		\centering
		\begin{tikzpicture}[scale=3]
			\draw [->] (-.5,0)--(2,0);
			\node [right] at (2,0) {$\mu$};
			\node [below] at (0,0) {$0$};
			\draw [fill] (0,0) circle [radius=0.02];
			
			\node [below] at (.6,0) {$\eps/2$};
			\draw [fill] (.6,0) circle [radius=0.02];
			
			\node [below] at (1.2,0) {$\eps$};
			\draw [fill] (1.2,0) circle [radius=0.02];
			
			\node [below] at (1.5,0) {$1 $};
			\draw [fill] (1.5,0) circle [radius=0.02];

			\draw [very thick] (-.5,0)--(0,0);
			\draw [very thick] (0,1)--(2,1);				
			\filldraw [fill=white] (0,1) circle [radius=0.02];
			
			\draw [dashed, very thick] (-.45,0)--(.65,0) [out=20, in=-160] to (1.2,1)--(2,1);

			\draw [->] (1.55,.5)--(1,.5);
			\node [right] at (1.55,.5) {$f(\mu)$};
			
			\draw [->] (0,.5)--(.85,.95);
			\draw [->] (0,.5)--(-.1,.05);
			\node [left] at (0,.5) {$\theta(\mu)$};
		\end{tikzpicture}
		\caption{A typical function $f\in \cE$ compared with the Heaviside function  $\theta(\mu)$.}\label{figF}
	\end{figure}
	Functions in $\cE$ are easy to construct. Indeed, for any $h\in C^\infty(\Rb)$ with $h\ge0$ and ${\supp h\subset (\eps/2, \eps)}$, let $f_1(\lambda):=\int_{-\infty}^\lambda h^2(s)ds$. Then  $f_1/\int_{-\infty}^\infty h^2(s)ds\in \cE$.
Similarly, one could also check that 
\begin{align}
	\label{E2}
	\text{If $f_1,f_2\in\cE$, then $f_1'+f_2'\le C f_3'$ for some $f_3\in\cE$.}
\end{align}
	
	In what follows we will reformulate \eqref{MVB2} in terms of certain estimates for the ASTLOs, and then  prove the corresponding ASTLO estimates to close the proof of \thmref{thmLocFlucEst}.

	\subsection{Differential inequality for ASTLO}\label{secDiffEst}
	Recall that the generator $\cL_N(t)$ of the fluctuation dynamics is given by \eqref{def:L}.
	Consequently, the evolution of any time-dependent propagation observables $\Phi(t)$ under the fluctuation dynamics is described by the Heisenberg equation
	\begin{align}
		\label{HeisEq}
		\od{}{t} \br{\N_{f,ts}}_t=\br{D\N_{f,ts}}_t, 
	\end{align}
	where $D$ is  the Heisenberg derivative
	\begin{align}\label{HeisDerDef}
		D\Phi(t):=\di_t\Phi(t)+i[\cL_N(t),\Phi(t)].
	\end{align}

	Fix $s>0\,,f\in\cE$, and define, according to    definition \eqref{propag-obs1}, the  ASTLOs 
	\begin{align}
		\label{ASTLO1}
		\Phi(t):=\N_{f,ts},\quad t\ge0.
	\end{align}
	The main result of this section is the following differential inequality:
	\begin{proposition}\label{prop36}
		Let $v'>\kappa$ and  $\delta:= v'-\kappa$. 
		Then for every $f\in\cE$, there exist $C=C(f,n)>0$ and, for $n\ge2$, functions $j_k\in\cE,\,k=2,\ldots,n$ (see \eqref{classE}) such that for all $s>0$ and   $\al,\,\beta,\,t\ge0$, the following operator inequality {holds on $\Fl$}:
		\begin{align}
			\label{DiffEst}
			D\Phi(t)	\le& -\delta s^{-1} {\N_{f',ts}} +  C \sum_{k=2}^n s^{-k} {\N_{j_k' ,ts}}   + \frac{ C(4d+1)}{s^{n+1}} \N_{B_{R+1}}\notag\\
			&+{\frac{\abs{\l}}{N}}\Bigl[ {\cN^{2}} {}\norm{\varphi_t}_{\ell^\al(B_R)}^{\al} + { \del{\cN+\cN^2}  {\norm{\varphi_t}_{\ell^\infty(B_R)}^{1-\beta}} }\Bigr]
			\notag\\
			&\qquad+{\abs{\l}}\del{4\norm{\varphi_t}_{\ell^\infty(B_R)}^2+{ \norm{\varphi_t}_{\ell^\infty(B_R)}^{1+\beta}}+ {\norm{\varphi_t}_{\ell^\infty(B_R)}^{6-\al}} }\Phi(t) .
		\end{align}
		The $k$-sum is dropped for $n=1$. 
	\end{proposition}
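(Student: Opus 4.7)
The plan is to decompose $D\Phi(t) = \partial_t \Phi(t) + i[\cL_N(t), \Phi(t)]$ into three contributions and estimate each separately. From the definition \eqref{ftsDef} one reads off $\partial_t \Phi(t) = -(v'/s)\N_{f',ts}$, which supplies the decaying term but with the prefactor $-v'/s$ rather than the required $-\delta/s = -(v'-\kappa)/s$. The missing $\kappa/s$ will be produced by the kinetic commutator. I would then split
\begin{align*}
i[\cL_N(t),\Phi(t)] = i[\dG(-\Delta),\Phi(t)] + i[\cL_N(t)-\dG(-\Delta),\Phi(t)],
\end{align*}
handling the first piece by a one-body Taylor expansion and the second by Lemma \ref{lemma:commutator}.

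For the kinetic commutator, the identity $[\dG(-\Delta),\Phi(t)] = \dG([-\Delta,f_{ts}])$ (with $f_{ts}$ acting by multiplication on $\ell^2$) reduces matters to the one-body kernel $-\Delta_{x,y}(f_{ts}(y)-f_{ts}(x))$. Taylor-expanding $f_{ts}(y)-f_{ts}(x)$ about $|x|$ to order $n$ produces
\begin{align*}
f_{ts}(y)-f_{ts}(x) = \sum_{k=1}^{n}\frac{(|x|-|y|)^k}{k!\,s^k}\,f^{(k)}_{ts}(x) + R_{n+1}(x,y).
\end{align*}
The $k=1$ term, contracted against $-\Delta_{x,y}$, yields the operator bound $(\kappa/s)\,f'_{ts}$ via Schur's test on the kernel $-\Delta_{x,y}(|x|-|y|)$ with $\kappa$ as in \eqref{kappa}; second-quantizing and adding to $\partial_t\Phi(t)$ produces the prefactor $-\delta/s$ on $\N_{f',ts}$. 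For $k=2,\dots,n$ the same argument gives corrections of the form $Cs^{-k}\N_{j_k',ts}$, where $j_k\in\cE$ is selected using the closure property \eqref{E2} together with the fact that $\supp f^{(k)}\subset(\eps/2,\eps)$ for all $k\ge1$. The Taylor remainder $R_{n+1}$ is controlled by $\|f^{(n+1)}\|_\infty$; since $f_{ts}$ is supported in $B_R$ while $-\Delta$ couples only nearest neighbors, the second-quantized remainder is bounded by $Cs^{-(n+1)}\N_{B_{R+1}}$.

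For the non-kinetic part I would apply the operator inequality \eqref{217} with $h = f_{ts}$. Because $f_{ts}\in[0,1]$ and $\supp f_{ts}\subset B_R$, the first summand on the right of \eqref{217} contributes $\le 4|\l|\,\|\varphi_t\|_{\ell^\infty(B_R)}^2\,\Phi(t)$. For the second and third summands I would use weighted Young's inequality in the factorizations
\begin{align*}
\frac{|\varphi_t(x)|^3}{\sqrt N}\cN n_x^{1/2} &= \Bigl(\frac{|\varphi_t(x)|^{\alpha/2}}{\sqrt N}\cN\Bigr)\cdot\bigl(|\varphi_t(x)|^{3-\alpha/2}\,n_x^{1/2}\bigr),\\
\frac{|\varphi_t(x)|}{\sqrt N}\,n_x(n_x+1)^{1/2} &\le \frac{|\varphi_t(x)|}{\sqrt N}\bigl(n_x + n_x^{3/2}\bigr),
\end{align*}
where the second line uses $(n_x+1)^{1/2}\le 1+n_x^{1/2}$, followed by analogous Young's splits with exponent $\beta$ on each of the two terms. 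Summing against $f_{ts}$ and using the elementary bounds $\sum_x n_x^2\le\cN^2$ and $\sum_x f_{ts}(x)|\varphi_t(x)|^p n_x\le \|\varphi_t\|_{\ell^\infty(B_R)}^p\,\Phi(t)$, together with $\|\varphi_t\|_{\ell^\alpha(B_R)}^\alpha = \sum_{x\in B_R}|\varphi_t(x)|^\alpha$, yields exactly the remaining five terms on the right-hand side of \eqref{DiffEst}.

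The main obstacle I anticipate is executing the one-body Taylor expansion as a genuine operator inequality on $\Fl$ (rather than merely a norm bound) while arranging the higher-order corrections $\N_{j_k',ts}$ to remain ASTLOs in the class $\cE$, so that the resulting differential inequality can be iterated in \secref{secIntEst} to produce the integral estimate \eqref{MVB2}. This requires careful bookkeeping of signs in the Taylor remainder and repeated appeal to the closure property \eqref{E2}, but no additional analytic input beyond what Lemma \ref{lemma:commutator} already provides.
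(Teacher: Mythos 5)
Your proposal is correct and follows essentially the same route as the paper: decompose $D\Phi(t)$ into $\partial_t\Phi(t)=-v's^{-1}\N_{f',ts}$, the kinetic commutator $i[\dG(-\Lap),\Phi(t)]$ (paper cites \cite[eq.~(4.44)]{LRZ} for the Taylor-expansion bound you sketch, yielding $\kappa s^{-1}\N_{f',ts}$ plus $s^{-k}$ corrections and an $s^{-(n+1)}\N_{B_{R+1}}$ remainder), and the interaction part handled by \lemref{lemma:commutator} with $h=f_{ts}$ followed by H\"older/Cauchy--Schwarz/Young with free exponents $\al,\beta$. The only cosmetic deviation is your intermediate step $(n_x+1)^{1/2}\le 1+n_x^{1/2}$ before the $\beta$-split in the third interaction term; the paper instead keeps $n_x(n_x+1)^{1/2}$ intact and factors as $(f_{ts}n_x)^{1/2}\cdot N^{-1/2}(f_{ts}n_x(n_x+1))^{1/2}$, but both yield the same $\norm{\varphi_t}_{\ell^\infty(B_R)}^{1+\beta}\Phi(t)+\norm{\varphi_t}_{\ell^\infty(B_R)}^{1-\beta}(\cN+\cN^2)/N$.
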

	\begin{proof}
		Within this proof we fix $s>0$ and $t\ge0$. All estimates below are independent of $s$ and $t$.
		
		1. 	
		In view of \eqref{HeisEq}, we first control the commutator $
		i[\cL_N(t),\Phi(t)]$. Choosing $h=f_{ts}\ge0$ in \lemref{lemma:commutator} and using definition \eqref{ASTLO1} for $\Phi(t)$,  we find
		\begin{align}
			\label{316}
			i \big[\mathcal{L}_N (t), \Phi(t)    \big]\leq&  i[\dG(-\Lap),\Phi(t)] \notag \\
			& +  {\abs{\l}}\sum_{x \in \Zb^d}  f_{ts}(x)  \;  \bigg( 4 \vert \varphi_t (x) \vert^2 n_x  + 2\frac{\vert \varphi_t (x) \vert^3}{\sqrt{N}} \mathcal{N} n_x^{1/2} +2 \frac{\vert \varphi_t (x) \vert}{\sqrt{N}} n_x (n_x+1)^{1/2} \bigg).
		\end{align}
		The leading term $ i[\dG(-\Lap),\Phi(t)]$ is well-understood and estimates for it are obtained e.g.~in \cite[eq.~(4.44)]{LRZ}. Here we record the result: for any $n=1,2,\ldots$,
		\begin{align}
			\label{ME1}
			{i[\dG(-\Lap),\Phi(t)]}\le& \kappa s^{-1} \N_{f',ts} +  C_{f,n} \sum_{k=2}^n s^{-k} \N_{j_k' ,ts}   +\frac{ C_{f,n}}{s^{n+1}}P,\\
			P:=&  \  \N_{B_{R}}+\sum_{x\in B_{R}}\sum_{y\in\Zb^d}\abs{\Delta_{x,y}}|x-y|^{n+1}{ n_y} . 
		\end{align}
		With the Laplacian matrix given by \eqref{def:Laplace}, the remainder $P$ can be bounded as 
		\begin{align}
			\label{Pest}
			P\le \N_{B_R}+2d\,\N_{B_{R+1}}.
		\end{align}
		Inserting this into \eqref{ME1} yields
		\begin{align}
			\label{ME1'}
			{i[\dG(-\Lap),\Phi(t)]}\le& \kappa s^{-1} \N_{f',ts} +  C_{f,n} \sum_{k=2}^n s^{-k} \N_{j_k' ,ts}   +\frac{ C_{f,n}(4d+1)}{s^{n+1}}\N_{B_{R+1}}.
		\end{align}

		2.	Next, we bound the remaining three terms in line \eqref{316}.  A  key observation is  that by definition \eqref{propag-obs1}, for any $f\in\cE$ and $s>0$, the function $f_{ts}$ satisfies %
		\begin{align}
			\label{fSupp}
			\supp f_{ts}\subset B_R,\quad t\ge0. 
		\end{align} Owning to \eqref{fSupp}, it suffices to take the various summations below over the ball $B_R$ only. 
		
		2.1. For the first  third term in line \eqref{316}, we use H\"older's inequality to obtain 
		%
		\begin{align}
			\label{319}
			&\sum_{x \in \Zb^d}  f_{ts}(x)  4 \vert \varphi_t (x) \vert^2 n_x   \le  {4\norm{\varphi_t}_{\ell^\infty(B_R)}^2 }\Phi(t).
		\end{align}
		
		2.2. For the second term in line \eqref{316}, we apply successively H\"older's, Cauchy-Schwartz, and Young's inequalities to obtain, for any $\al\ge0$,
		%

		\begin{align}
			\label{323}
			& { \sum_{x \in \Zb^d}  f_{ts}(x) 	2\frac{\vert \varphi_t (x) \vert^3}{\sqrt{N}} \mathcal{N} n_x^{1/2}}\notag\\
			\le&2\frac{\norm{\varphi_t}_{\ell^\infty(B_R)}^{3-\frac\al2}}{\sqrt{N}} \sum_{x \in B_R}  f_{ts}(x)\abs{\varphi_t(x)}^{\frac\al 2} {{{\cN}}  n_x^{1/2}}\notag\\
			\le & \sum_{x \in B_R}  2\sbr{{\norm{\varphi_t}_{\ell^\infty(B_R)}^{3-\frac{\al}{2}}}\del{f_{ts}(x)n_x}^{1/2}}\sbr{\frac{1}{\sqrt{N}}\del{f_{ts}(x)\abs{\varphi_t(x)}^{\al} {\cN^2} }^{1/2}}\notag\\
			\le& {\norm{\varphi_t}_{\ell^\infty(B_R)}^{6-\al}}{\Phi(t)}+\frac{\cN^{2}}{N}   \sum_{x \in B_R} f_{ts}(x)\abs{\varphi_t(x)}^{\al}. 		
		\end{align}
		Since  $f_{ts}(x)\le1$ for all $x$ (see \eqref{ftsDef}), we have $${  \sum_{x \in B_R} f_{ts}(x)\abs{\varphi_t(x)}^{\al}\le \norm{\varphi_t}_{\ell^\al(B_R)}^\al 		.}$$
		Plugging this back to \eqref{323},
		we conclude that 
		\begin{align}
			\label{323'}
			{ \sum_{x \in \Zb^d}  f_{ts}(x) 	2\frac{\vert \varphi_t (x) \vert^3}{\sqrt{N}} \mathcal{N} n_x^{1/2}}\le {\norm{\varphi_t}_{\ell^\infty(B_R)}^{6-\al}} {\Phi(t)}+\norm{\varphi_t}_{\ell^\al(B_R)}^{\al}\frac{\cN^{2}}{N }  {}.
		\end{align}
		
		2.3. For the third term in line \eqref{316}, we proceed similarly as in \eqref{323} to obtain, for any $\beta\ge0$, that
		\begin{align}
			\label{3230}
			&{ \sum_{x \in \Zb^d}  f_{ts}(x) 	2 \frac{\vert \varphi_t (x) \vert}{\sqrt{N}} n_x (n_x+1)^{1/2}}\notag\\
			\le&	2\frac{\norm{\varphi_t}_{\ell^\infty(B_R)}^{\frac12+\frac\beta 2}}{\sqrt{N}} \sum_{x \in B_R}  f_{ts}(x)\abs{\varphi_t(x)}^{\frac12-\frac \beta 2} { n_x (n_x+1)^{1/2}}\notag\\
			\le & \sum_{x \in B_R}  2\sbr{{\norm{\varphi_t}_{\ell^\infty(B_R)}^{\frac12+\frac\beta 2}}\del{f_{ts}(x)n_x}^{1/2}}\sbr{\frac{\norm{\varphi_t}_{\ell^\infty(B_R)}^{\frac12-\frac\beta 2}}{\sqrt{N}}\del{f_{ts}(x) n_x(n_x+1)}^{1/2}}\notag\\
			\le& {\norm{\varphi_t}_{\ell^\infty(B_R)}^{1+\beta}}{\Phi(t)}+\frac{\norm{\varphi_t}_{\ell^\infty(B_R)}^{1-\beta}}{N}   \sum_{x \in B_R} f_{ts}(x){n_x(n_x+1)}. 		
		\end{align}
		This implies, since $f_{ts}(x)\le1$, that
		\begin{align}
			\label{3231}
			{			\sum_{x \in \Zb^d}  f_{ts}(x) 	2 \frac{\vert \varphi_t (x) \vert}{\sqrt{N}} n_x (n_x+1)^{1/2}\le {\norm{\varphi_t}_{\ell^\infty(B_R)}^{1+\beta}} {\Phi(t)}+{\norm{\varphi_t}_{\ell^\infty(B_R)}^{1-\beta}}\frac{\cN+\cN^2}{N}  .}
		\end{align}
		
		3. Finally, the temporal derivative for $\Phi(t)$ is easily computed as 
		\begin{align} \label{eq:deriv}
			\di_t\Phi(t)=-s^{-1}v' \, \N_{f',ts}.
		\end{align}
		Plugging \eqref{eq:deriv},  \eqref{ME1'}, \eqref{319}, \eqref{323'}, and \eqref{3231} back to \eqref{HeisDerDef}, we
		conclude 
		\begin{align}
			\label{}
			D\Phi(t)=&\di_t\Phi(t)+i[\cL_N(t),\Phi(t)]\notag\\
			\le& - s^{-1}v' \, \N_{f',ts}\notag\\
			&+\kappa s^{-1} \N_{f',ts} +  C_{f,n} \sum_{k=2}^n s^{-k} \N_{j_k' ,ts}   +\frac{ C_{f,n}(4d+1)}{s^{n+1}}\N_{B_{R+1}}\notag\\
			&+{4\norm{\varphi_t}_{\ell^\infty(B_R)}^2 }\Phi(t)\notag\\
			&+{\norm{\varphi_t}_{\ell^\infty(B_R)}^{6-\al}}{\Phi(t)}+\frac{\cN^{2}}{N} {} \sum_{x \in B_R} f_{ts}(x)\abs{\varphi_t(x)}^{\al}\notag\\
			&+{\norm{\varphi_t}_{\ell^\infty(B_R)}^{1+\beta}} {\Phi(t)}+{\norm{\varphi_t}_{\ell^\infty(B_R)}^{1-\beta}}\frac{\cN+\cN^2}{N}  .
		\end{align}
		Recalling the definition $\delta = v'-\kappa>0$,  we conclude 
		the desired estimate \eqref{DiffEst}.

		%
	\end{proof}
	
	\subsection{Integral estimate for ASTLO}\label{secIntEst}
	The differential inequality \eqref{DiffEst} proved in the last subsection has a crucial   structure:  the second term has the same form as the leading, non-positive term.
	{Lifting \eqref{DiffEst} via Gr\"onwall's inequality}, we arrive at an integral estimate with a similar recursive structure; see \eqref{propag-est1} below. Later on, we will  iterate this integral estimate in the class $\cE$ to control the growth of the ASTLOs, see \secref{secPfThm3.1}. 
	
	For ease of notation, 
	for  any function $g\in L^\infty(\Rb)$  and $N_{g,ts}$ as in \eqref{propag-obs1}, we write
	\begin{align}
		\label{phitDef}
		g[t]:= \br{N_{g,ts}}_t.
	\end{align}
	The main result of this section is the following:
	\begin{proposition}\label{propIntRME}
		Assume \eqref{DiffEst} holds with $f\in\cE$, $\delta>0$. Then there exist ${C=C(f,n,d,\delta)>0}$ and, for $n\ge2$, functions $j_k\in\mathcal{E}$, $2\leq k\leq n$, such that for all $t\ge0$, {$s\ge1$,}
		\begin{align}\label{intRME}
			\int_0^t f'[\tau]\,d\tau
			\leq& C\del{ s f[0]+\sum_{k=2}^ns^{-k+2}j_k[0]+s^{-n}\Rem_n(s,t)},\\
			\Rem_n(s,t):=&  t\ {\sup_{\tau\le t}\br{\cN_{B_{R+1}}}_\tau + s^{n+1}\frac{\abs{\l}}{N}\sup_{\tau\le t}{\br{\del{\cN+1}^2}_\tau}}
			\int_0^t\del{\norm{\varphi_\tau }_{\ell^4(B_R)}^4+{\norm{\varphi_\tau}_{\ell^\infty(B_R)}}}\,d\tau.\label{RemDef}
		\end{align} 
		The sum in \eqref{intRME} is dropped for $n=1$.
	\end{proposition}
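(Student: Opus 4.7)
The plan is to lift the differential inequality \eqref{DiffEst} to an integral estimate via a Gr\"onwall-type manipulation, and then iteratively apply \propref{prop36} to the family $\{j_k\}_{k=2}^n\subset\cE$ in order to progressively reduce the weight of the residual integrals from $s^{-1}$ down to $s^{-n}$.

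In the first step, pair \eqref{DiffEst} with $\xi_\tau\in\Fl$, use the Heisenberg equation $\frac{d}{d\tau}f[\tau]=\br{D\N_{f,\tau s}}_\tau$ (from \eqref{HeisEq}), and select $\al=4$, $\beta=0$ in \eqref{DiffEst}. This produces a scalar differential inequality of the form
\begin{equation*}
\frac{d}{d\tau}f[\tau]+\delta s^{-1}f'[\tau]\le C\sum_{k=2}^n s^{-k}j_k'[\tau]+\frac{C(4d+1)}{s^{n+1}}\br{\N_{B_{R+1}}}_\tau+A(\tau)f[\tau]+B(\tau),
\end{equation*}
with $A(\tau)=C\abs{\l}(\norm{\varphi_\tau}_{\ell^\infty(B_R)}+\norm{\varphi_\tau}_{\ell^\infty(B_R)}^2)$ and $B(\tau)=\frac{\abs{\l}}{N}\br{(\cN+1)^2}_\tau(\norm{\varphi_\tau}_{\ell^4(B_R)}^4+\norm{\varphi_\tau}_{\ell^\infty(B_R)})$. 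Integrating over $[0,t]$, dropping the nonnegative endpoint $f[t]$ on the left, absorbing the $\int_0^t A(\tau)f[\tau]d\tau$ term via Gr\"onwall's inequality (using $\norm{\varphi_\tau}_{\ell^\infty}\le 1$ and $\abs{\l}\le\l_0$ to control the exponential factor), and multiplying through by $\delta^{-1}s$ delivers
\begin{equation*}
\int_0^t f'[\tau]d\tau\le C\Bigl(sf[0]+\sum_{k=2}^n s^{-k+1}\int_0^t j_k'[\tau]d\tau + s^{-n}\Rem_n(s,t)\Bigr),
\end{equation*}
where the two summands of $\Rem_n(s,t)$ collect, respectively, the $\frac{1}{s^{n+1}}\int_0^t\br{\N_{B_{R+1}}}_\tau d\tau$ contribution and the $\int_0^t B(\tau)d\tau$ contribution, and the constant $C$ absorbs the bounded Gr\"onwall factor.

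The second, and main, step is iteration. Each $j_k\in\cE$ also obeys a differential inequality of the same form as \eqref{DiffEst} via a second application of \propref{prop36}, with its own remainder family $j_{k,\ell}\in\cE$, $\ell=2,\ldots,n$. Substituting the analogous integral bound for $\int_0^t j_k'[\tau]d\tau$ back into the previous display replaces the term $\sum_k s^{-k+1}\int_0^t j_k'[\tau]d\tau$ by the boundary terms $\sum_{k=2}^n s^{-k+2}j_k[0]$ plus double-indexed residuals $\sum_{k,\ell} s^{-k-\ell+2}\int_0^t j_{k,\ell}'[\tau]d\tau$ of strictly improved weight. Repeating this $n$ times, every surviving residual integral $\int_0^t\phi'[\tau]d\tau$ then carries weight $\le s^{-n}$, at which point the crude estimate $\int_0^t\phi'[\tau]d\tau\le C t\sup_\tau\br{\N_{B_{R+1}}}_\tau$, valid for every $\phi\in\cE$ because $\phi'$ is bounded with $\supp \phi'_{\tau s}\subset B_R$, feeds precisely into the first summand of $\Rem_n(s,t)$. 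The finite geometric sum from the iteration converges uniformly for $s\ge 1$, yielding \eqref{intRME}.

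The principal technical obstacle is the combinatorial bookkeeping of the iteration: at each level, one has to verify, using the additive closure property \eqref{E2} of derivatives within $\cE$, that the nested remainder functions $j_{k,\ell},j_{k,\ell,m},\dots$ still lie in $\cE$ (so that \propref{prop36} continues to apply), and track that the $s$-weights accumulate so that exactly $s^{-k+2}j_k[0]$ emerges as a boundary term at each level $k$, rather than spreading into a more complicated multi-indexed sum. A secondary subtle point is the Gr\"onwall step, where one must confirm that the exponential factor $e^{\int_0^t A(\tau)d\tau}$ can be absorbed into the constant $C$ without destroying the polynomial structure of the remainder; this is possible here because $A$ is controlled by $C\abs{\l}\norm{\varphi_\tau}_{\ell^\infty(B_R)}$ and the $\ell^\infty$-norm of the $\ell^2$-normalized condensate function is always bounded by one.
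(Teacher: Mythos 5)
Your overall plan — lift \eqref{DiffEst} to an integral estimate via Gr\"onwall, then iterate inside the class $\cE$ using \eqref{E2} until the residual integrals can be folded into $s^{-n}\Rem_n$ — is the same architecture as the paper. But there is a concrete problem with your Gr\"onwall step. You propose to put the exponential Gr\"onwall factor $\exp\bigl(\int_0^t A(\tau)\,d\tau\bigr)$ into the constant $C$ by invoking $\|\varphi_\tau\|_{\ell^\infty}\le 1$ and $|\l|\le\l_0$. Even under those bounds, $\int_0^t A(\tau)\,d\tau\le C|\l|t$ grows linearly in $t$, so $e^{\int_0^t A}$ is \emph{not} bounded and cannot be absorbed into a time-independent $C$ — and, moreover, the statement of \propref{propIntRME} makes no such smallness assumptions on $\varphi_\tau$ or $\l$. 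The paper dodges this entirely: after applying Gr\"onwall and pulling out $e^{-M_R(t)}$ on the left (see \eqref{341'}--\eqref{propag-est1'}), one keeps $e^{-M_R(t)}f[t]+\delta s^{-1}\int_0^t f'[\tau]\,d\tau$ together on the left and simply \emph{drops} the nonnegative term $e^{-M_R(t)}f[t]\ge 0$, so no exponential factor ever multiplies the right-hand side of \eqref{propag-est2}. Also note that the paper separates the negative piece $-\delta s^{-1}f'[\tau]$ inside $B(\tau)$ and moves it to the left \emph{after} Gr\"onwall; your proposed order (integrate, drop $f[t]$, then Gr\"onwall) is not an operation you can actually perform, since Gr\"onwall needs the $f[t]$ term on the left.

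A secondary issue is your termination of the iteration. The residual integral weights form a window $[s^{-(m+1)}, s^{-n+1}]$ after $m$ iterations — only the bottom of the window moves — so after $n-1$ iterations \emph{no} integrals survive, and your statement that ``after $n$ iterations every surviving residual carries weight $\le s^{-n}$'' is vacuous. The paper terminates by applying the integral estimate to $\int j_k'$ at the \emph{reduced} parameter $n-k+1$ (strong induction), so that the residual integrals disappear outright and every new remainder enters with total weight $s^{-k+1}\cdot s^{-(n-k+1)}=s^{-n}$, after which the one observation that closes the argument is $\Rem_m(s,t)\le\Rem_n(s,t)$ for $m\le n$ and $s\ge 1$. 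Your crude bound $\int_0^t\phi'[\tau]\,d\tau\le Ct\sup_\tau\langle\cN_{B_{R+1}}\rangle_\tau$ is true (and could serve as an alternative way to terminate), but you do not actually reach a stage where it is needed, and you omit the $\Rem_m\le\Rem_n$ step that bounds the remainder contributions accumulating at each level. With the Gr\"onwall step corrected and the termination done as in the paper, your plan goes through.
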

	
	\begin{proof}
		1. It follows from the Heisenberg equation \eqref{HeisEq}, the differential inequality \eqref{DiffEst} with $\al=4$, $\beta=0$,  the fact $\norm{\varphi_t}_{\ell^p}\le \norm{\varphi_t}_{\ell^q}$ for $1\le q\le p\le\infty$, and $\norm{\varphi_0}_{\ell^2}=1$ that the function $f[t]\equiv\br{\cN_{f,ts}}\equiv  \br{\Phi(t)}_t$ (see \eqref{phitDef}, \eqref{ASTLO1}) obeys  
		\begin{align}
			\label{338}
			\od{}{t}f[t]\le& A(t)f[t]+B(t),\\
			A(t)=&{\abs{\l} } \del{\norm{\varphi_t}_{\ell^\infty(B_R)} +5\norm{\varphi_t}_{\ell^\infty(B_R)}^2},\label{Adef}\\
			B(t)= &-\delta s^{-1} f'[t] +  C_{f,n} \sum_{k=2}^n s^{-k} j_k'[t]   + \frac{ C_{f,n,d}}{s^{n+1}}\label{Bdef} \br{\N_{B_{R+1}}}_t\\&+\frac{ \abs{\l}}{{N}}{\br{\del{\cN+1}^{2}}_t}\del{ {{\norm{\varphi_t}_{\ell^4(B_R)}^4}} +
				{\norm{\varphi_t}_{\ell^\infty(B_R)} } }
			.\notag
		\end{align}
		{Note that \propref{prop36} is indeed applicable since $\W_N(t;0)$   maps  $\Fl$  into itself according to \eqref{def:W} and \eqref{UNtDef}, so that $\xi_0\in\Fl$ implies $\xi_t\in\Fl$ for all $t$.}
		
		Applying Gr\"onwall's inequality to \eqref{338} yields
		\begin{align}
			\label{341}
			f[t]\le f[0]\exp(\int_0^tA(\tau)\,d\tau)+\int_0^tB(\tau)\exp(\int_\tau^t A(r)\,dr)\,d\tau.
		\end{align}
		Recalling definition $M_{R}(t)$ from \eqref{Mdef}, we see that $\int_\tau^t A\le  M_{R}(t)$ for any $0\le \tau\le t$, and so  \eqref{341} becomes
		\begin{align}
			\label{341'}
			e^{- M_{R}(t)}f[t]\le {f[0] +\int_0^tB(\tau) \,d\tau}.
		\end{align}

		2. Inserting expression \eqref{Bdef} into \eqref{341'}, we find
		\begin{align} \label{propag-est1} 
			&e^{-M_{R}(t)}f[t]+\delta s^{-1}\int_0^tf'[\tau]\, d\tau\notag\\\le& f[0]+C_{f,n} {\sum_{k=2}^n s^{-k}{\int_0^t} {j_k'[\tau]    }\,d\tau+\frac{C_{f,n,d}}{s^{n+1}}  \int_0^t \br{N_{B_{R+1}}}_\tau\,d\tau} \notag\\&+  \frac{\abs{\l}}{N}\int_0^t\br{\del{\cN+1}^2} _\tau\del{\norm{\varphi_\tau }_{\ell^4(B_R)}^4+{\norm{\varphi_\tau}_{\ell^\infty(B_R)} }}  \,d\tau.
		\end{align}
		Here and below, the $k$-sum is dropped for $n=1$. 	
		
		Applying H\"older's inequality   $\int_0^t \br{\cdot}_\tau\,d\tau\le t \sup_{\tau\le t}\br{\cdot}_\tau $ for the penultimate integral in \eqref{propag-est1}, 	we arrive at,  with $\Rem_n(s,t)$  given by \eqref{RemDef},
		\begin{align} \label{propag-est1'} 
			e^{-M_{R}(t)}f[t]+\delta s^{-1}\int_0^tf'[\tau]\, d\tau\le& f[0]+C_{f,n} \sum_{k=2}^n s^{-k}{\int_0^t} {j_k'[\tau]    }\,d\tau+\frac{C_{f,n,d}}{s^{n+1}} \Rem_n(s,t).
		\end{align}
		Since $\delta>0$, estimate \eqref{propag-est1} implies, after dropping $e^{-M_{R}(t)}f[t]\ge0$ and multiplying by $s\delta^{-1}>0$, that 
		\begin{align}
			\int_0^t f'[\tau] \,d\tau\, 
			\le&  \, 
			\frac{s}{\delta}f[0] + \frac{C_{f,n}}{\delta}\sum_{k=2}^n s^{-k+1}{\int_0^t} {j_k'[\tau]    }\,d\tau+\frac{C_{f,n,d}}{\delta s^{n}}\Rem_n(s,t). \label{propag-est2} 
		\end{align}

		3.	We now prove \eqref{intRME} by induction on $n$. For the base case   $n=1$, \eqref{propag-est2} gives the desired result. So let $n\ge2$. Assuming \eqref{intRME} holds for $1,2,\ldots,n-1$, we prove it for $n$. 
		
		Indeed, since the functions $j_k,\,k=2,\ldots,n$ all lie in the class $\cE$, we apply the induction hypothesis with $n-k+1$  to the term $\int_0^t j_k'[\tau]\, d\tau$, yielding
		\begin{align}
			\int_0^t j_k'[\tau] \,d\tau\, 
			\le&  \, 
			\frac{s}{\delta}j_2[0] + \frac{C_{j_k,n}}{\delta}\sum_{p=2}^{n-k+1} s^{-p+1}{\int_0^t} {j_{k,p}'[\tau]    }\,d\tau+\frac{C_{j_k,n,d}}{s^{n-k+1}}\Rem_{n-k+1}(s,t). \label{propag-est2'} 
		\end{align}
		Here the $p$-sum is dropped for $n=k$.  
		Plugging \eqref{propag-est2'} back to bound each term in the sum in \eqref{propag-est2} and
		using {property} \eqref{E2} for the function class $\cE$, we find some $  \tilde j_m\in\cE,\,m=3,\ldots n$ (dropped if $n=3$) depending only on $j_{k,p}$ s.th. 
		\begin{align}
			\int_0^t f'[\tau] \,d\tau
			\le& C_{f,\delta,n,d}  
			\Bigl(s f[0] + \sum_{k=2}^ns^{-k+2}j_k[0] + \sum_{m=3}^n s^{-k+1} \int_0^t   \tilde j_k'[\tau]\,d\tau   \notag\\
			& + s^{-n}(\Rem_n(s,t)+\sum_{k=2}^n\Rem_{n-k+1}(s,t))\Bigr). \label{propag-est33} 
		\end{align}
		Here the $m$-sum is dropped for $n=3$, in which case we are done. 
		
		Note that  there is exactly one less integral on the r.h.s.~of \eqref{propag-est33} compared to \eqref{propag-est2}. Thus, 
		after repeating the above procedure $n-1$ times,  no integral is present on the r.h.s.
		Finally, since $\Rem_{n-k}(s,t)\le \Rem_n(s,t)$ for $s\ge1$ by \eqref{RemDef}, 
		we arrive at the desired estimate \eqref{intRME} for $n\ge2$, $s\ge1$. This completes the induction and \propref{propIntRME} is proved.  
	\end{proof}
	
	\subsection{Proof of \thmref{thmLocFlucEst}}\label{secPfThm3.1}
	{	We  are now ready to prove \thmref{thmLocFlucEst}.} The main mechanism is the following consequence of the integral estimate \eqref{intRME}:
	\begin{proposition}\label{propAest1}
		Assume \eqref{intRME} holds with $\delta>0$. Then there exist ${C=C(f,n,d,\delta)>0}$ and, for $n\ge2$, functions $j_k,\,j_{k,p}\in\mathcal{E}$, $2\leq k\leq n$, $2\le p \le n-k+1$, such that for all $t\ge0$, $s\ge1$,
		\begin{align}
			\label{Aest1}
			e^{-M_{R}(t)}	f[t]\le f[0]+C\sum_{k=2}^ns^{-k+1}j_k[0]+C\sum_{k=2}^n\sum_{p=2}^{n-k+1}{s^{-p-k+2}j_{k,p}[0]}+Cs^{-n-1}\Rem_n(s,t).
		\end{align}
		Here the $k$-sum is dropped for $n=1$ and the $p$-sum is dropped for $k=n$. 
	\end{proposition}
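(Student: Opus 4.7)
The plan is to combine the pre-Gr\"onwall estimate already obtained for $f[t]$ with one explicit iteration of the integral estimate \eqref{intRME}. Concretely, I would start from inequality \eqref{propag-est1'} in the proof of Proposition \ref{propIntRME}, namely
\begin{align*}
e^{-M_{R}(t)}f[t]+\delta s^{-1}\int_0^t f'[\tau]\,d\tau \leq f[0]+C_{f,n}\sum_{k=2}^n s^{-k}\int_0^t j_k'[\tau]\,d\tau+\frac{C_{f,n,d}}{s^{n+1}}\Rem_n(s,t),
\end{align*}
and simply drop the non-negative integral on the left to obtain a bound on $e^{-M_R(t)}f[t]$ itself in terms of $f[0]$, the weighted integrals $\int_0^t j_k'[\tau]\,d\tau$, and the remainder $s^{-(n+1)}\Rem_n(s,t)$.

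Next, I would apply Proposition \ref{propIntRME} a second time, now to each $\int_0^t j_k'[\tau]\,d\tau$, but with the iteration level $n$ replaced by $n-k+1$, since $j_k$ already enters with prefactor $s^{-k}$ and we want the final remainder exponent to land on $s^{-(n+1)}$. This produces functions $j_{k,p}\in\cE$, $2\le p\le n-k+1$, such that
\begin{align*}
\int_0^t j_k'[\tau]\,d\tau \leq C \del{s\,j_k[0]+\sum_{p=2}^{n-k+1} s^{-p+2}\,j_{k,p}[0] + s^{-(n-k+1)}\Rem_{n-k+1}(s,t)}.
\end{align*}
Multiplying by $s^{-k}$ and summing over $k$ converts the leading $s\,j_k[0]$ into $s^{-k+1}j_k[0]$, assembles the double sum $\sum_{k,p} s^{-p-k+2}\,j_{k,p}[0]$, and produces remainder contributions of order $s^{-(n+1)}\Rem_{n-k+1}(s,t)$ for each $k$.

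Finally, I would invoke the monotonicity $\Rem_{n-k+1}(s,t)\le \Rem_n(s,t)$ valid for $s\ge 1$, which is immediate from definition \eqref{RemDef} since larger iteration indices only enlarge the $s^{n+1}$-factor multiplying the non-negative second term. This collapses all remainder contributions into a single $C s^{-(n+1)}\Rem_n(s,t)$, and combining with the remainder already present in \eqref{propag-est1'} yields exactly \eqref{Aest1}. No fresh analytic input is needed beyond Proposition \ref{propIntRME}; the main obstacle is purely bookkeeping, namely tracking the iteration levels across the sum in $k$ so that the final power of $s$ in the remainder matches the claimed $s^{-n-1}$, and ensuring that the functions $j_k$ and $j_{k,p}$ produced by successive applications of \eqref{intRME} indeed remain in the class $\cE$ (which they do by construction of the integral estimate).
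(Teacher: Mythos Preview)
The proposal is correct and follows essentially the same route as the paper: drop the non-negative integral on the left of \eqref{propag-est1'} to isolate $e^{-M_R(t)}f[t]$, apply Proposition~\ref{propIntRME} once more to each $\int_0^t j_k'[\tau]\,d\tau$ at the appropriate reduced level, and absorb all remainders into $\Rem_n(s,t)$ using the monotonicity in $n$ for $s\ge1$. Your choice of iteration level $n-k+1$ is in fact slightly cleaner than the paper's stated ``$n-k$'' (which is a harmless slip there), since it makes the remainder exponent land exactly on $s^{-(n+1)}$ after multiplying by $s^{-k}$.
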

	\begin{remark}
		Not that the double sum in the r.h.s.~of \eqref{Aest1}, if present, is of the order $\mathcal O(s^{-2})$. 
	\end{remark}
	
	\begin{proof}[Proof of \propref{propAest1}]

		Dropping the second one in the l.h.s.~of \eqref{propag-est1'} (which is non-negative) and using definition \eqref{RemDef}, we find that there exist $C=C(f, n,d)>0$ and, for $n\ge2$, functions $j_k\in\cE$ s.th. 
		\begin{align} \label{propEst10}
			&e^{-M_{R}(t)}f[t]\le f[0] +C \sum_{k=2}^n s^{-k}\int_0^t j_k'[\tau]\, d\tau + C s^{-n-1}\Rem_n(s,t). 
		\end{align}
		If $n=1$ then the sum in  \eqref{propEst10} is dropped and we conclude  the desired result. If $n\ge2$, then we apply \eqref{intRME} to estimate the integrated terms. Indeed, for each $k=2,\ldots,n$, we apply \propref{propIntRME} with $n-k$ to get, for some $C=C(f,n,d,v)>0$,
		\begin{align}
			\label{jkpEst}
			\int_0^t j_k'[\tau]\, d\tau \le C \del{sj_k[0]+\sum_{p=2}^{n-k+1}s^{-p+2}j_{k,p}[0]+s^{-(n-k+1)}\Rem_{n-k}(s,t) },
		\end{align}
		where the sum is dropped if $n-k\le 1$. Inserting \eqref{jkpEst} into \eqref{propEst10} and using that $\Rem_{n-k}(s,t)\le \Rem_n(s,t)$ for $s\ge1$ (again, see \eqref{RemDef}), we conclude that
		\begin{align}
 			\label{}
			e^{-M_{R}(t)}	f[t]\le f[0]+C\sum_{k=2}^n\del{s^{-k+1}j_k[0]+\sum_{p=2}^{n-k+1}s^{-k-p+2}j_{k,p}[0]}+Cs^{-n-1}\Rem_n(s,t).
		\end{align}
		From here	we conclude the desired estimate \eqref{Aest1} for $n\ge2$. 
	\end{proof}

	Finally, we use estimate \eqref{Aest1} for ASTLOs to conclude the proof of  \thmref{thmLocFlucEst}. 
	
	\begin{proof}[Proof of \thmref{thmLocFlucEst}]
		Given $v>\kappa+2\delta $, we take $v'=\frac12(\kappa+v)>\kappa+\delta$, $\eps=v-v'>0$ in \eqref{ftsDef}, \eqref{classE}, respectively. 
		We observe that for any function $f\in\cE$,  $r>0$, $R\ge r+v$, and 
		\begin{align}
			\label{stCond}
			{s=\frac{R-r}{v}},
		\end{align}
		there hold
		\begin{align}
			\label{geoEst}
			f_{0s}(x)\le \1_{B_R}(x)	,\quad 	\1_{B_r}(x)\le f_{ts}(x)\quad( 0< t\le s).
		\end{align}
		
		These inequalities are straightforward consequence of the geometric properties of the ASTLOs. See also \cite[Sect.~4.2]{LRZ} for details. Since the map $g\mapsto \dG(g)$ is positivity-preserving, \eqref{geoEst} lift to the following  operator inequalities:
		\begin{align}\label{GeoEst}
			\N_{f,0s}  \le  \N_{B_R} ,\quad 
			\N_{B_r}  \le   \N_{f,ts} \quad( 0\le  t\le s).
		\end{align}

		Now fix any $f\in\cE$. Since $v'-\kappa>\delta$ and $s\ge1$, applying successively \propref{prop36}, \propref{propIntRME}, and \propref{propAest1} yields some $C>0$ and functions $j_k,\,j_{k,p}\in\mathcal{E}$, $2\leq k\leq n$, $2\le p \le n-k+1$ depending only on $f,n,d,\delta$ such that \eqref{Aest1} holds.
		
		Since $f,\,j_k,$ and $j_{k,p}$ are all functions in the class $\cE$, they all satisfy estimates \eqref{GeoEst}. Therefore, applying \eqref{GeoEst} to bound the corresponding terms in \eqref{Aest1} from above and below, we conclude that, under condition  \eqref{stCond},
		\begin{align}			\label{MVBout}
			e^{-M_{R}(t)}	\big\lan N_{B_r} \big\ran_t \le(1+Cs^{-1}) \big\langle N_{B_R} \big\rangle_0 +  C s^{-n-1}  \Rem_n(s,t).
		\end{align}
		Comparing definitions \eqref{RemDef} for $\Rem_n(\cdot)$  and \eqref{Edef} for $E_{R,r}(t)$, we conclude that  
		\begin{align}
			\label{}
			s^{-n-1}\Rem_n(s,t) = E_{R,r}(t) \quad \text{for}  \quad s=\frac{R-r}{v}. 
		\end{align}
		This identity, together with \eqref{MVBout}, yields the desired estimate \eqref{MVB2}. 
		
		This completes the proof of \thmref{thmLocFlucEst}. \end{proof}
	\begin{remark}
By 		 \propref{prop36},   the conclusion of \thmref{thmLocFlucEst} remains valid if  \eqref{Edef} is  replaced by, with any $0\le\al\le6,\,0\le \beta\le 1$,
		\begin{align}
			M_{R,\al,\beta}(t) :=& \abs{\l}\int_0^t \del{\norm{\varphi_\tau}_{\ell^\infty(B_R)}^{6-\al}+\norm{\varphi_\tau}_{\ell^\infty(B_R)}^{1+\beta}+4\norm{\varphi_\tau}_{\ell^\infty(B_R)}^2}\,d\tau,\\
			\label{EabDef}
			\tilde E_{R,r,\al,\beta}(t)=&\  t\ \del{\frac{v}{ R-r }}^{n+1}{\sup_{0\le \tau\le t}\br{\cN_{B_{R+1}}}_\tau}  
			\notag\\+&
			{{\frac{ \abs{\l}}{{N}}\sup_{0\le \tau\le t}{\br{\cN+\cN^{2}}_\tau}}\int_0^t\del{ \norm{\varphi_\tau}_{\ell^\al(B_R)}^\al+{\norm{\varphi_\tau}_{\ell^\infty(B_R)}^{1-\beta}}}}\,d\tau.
		\end{align}

	\end{remark}

	\section{Global fluctuation estimate}\label{sec4}
	In this section, we   prove estimates on the global fluctuation $\br{\cN}_t$ under condition \ref{disCond}. 
	In \secref{secPfCor37}, we bootstrap \thmref{thm35}   with estimate \eqref{MVB2} to obtain an improved local fluctuation bound; see \corref{cor37}. In \secref{sec43}, we verify condition \ref{disCond} for the condensate dynamic \eqref{Hartree} for small $\abs{\l}$ and $\ell^1$ initial data. 
	

	Our main result is the following:
	\begin{theorem}[Global estimate]\label{thm35}
		Let $d\ge3$ and let condition \ref{disCond} hold.  
		Then, for $j=1,2,3$,	there exists a constant $C>0$ depending only on $j$,   the constant $c$ in \eqref{dis}, and the $\ell^1$ norm of $\varphi_0$, such that
		\begin{align}
			\label{Nest}
			\br{(\cN+1)^j}_t  \le& C\,\br{(\cN+1)^j}_0.
		\end{align}

			%
	\end{theorem}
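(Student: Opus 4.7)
The plan is a Grönwall argument on the moments $F_j(t) := \br{(\cN+1)^j}_t$ for $j=1,2,3$, using the commutator estimates of \lemref{lemma:commutator} together with the dispersive condition \ref{disCond}. Since $\xi_t$ solves $i\partial_t \xi_t = \cL_N(t)\xi_t$ and $\dG(-\Lap)$ preserves particle number (hence commutes with $(\cN+1)^j$), Heisenberg's equation gives
\begin{equation*}
\tfrac{d}{dt}F_j(t) = i\br{[\cL_N(t) - \dG(-\Lap),\,(\cN+1)^j]}_t,
\end{equation*}
so the evolution of $F_j$ is driven entirely by the non-particle-conserving pieces of $\cL_N(t)$ (the $b^*b^*$, $bb$, and cubic-in-$b$ remainders visible in \eqref{211}--\eqref{213}).

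The core step is to establish the linear differential inequality
$\bigl|\tfrac{d}{dt}F_j(t)\bigr| \le C_j|\l|\,\|\varphi_t\|_{\ell^\infty}F_j(t)$ for $j=1,2,3$. To this end one expands $[A,(\cN+1)^j] = \sum_{k=0}^{j-1}(\cN+1)^k\,[A,\cN]\,(\cN+1)^{j-1-k}$ with $A:=\cL_N(t)-\dG(-\Lap)$, and pairs each summand symmetrically by distributing $(\cN+1)^{j/2}$-factors across both sides of $[A,\cN]$. A direct application of \eqref{eq:comm1}---which places the full $\cN+1$ weight on only one side of the pairing---would produce mixed moments $F_{2k}^{1/2}F_{2(j-k)}^{1/2}$; by log-convexity these bound $F_j$ from below rather than above, so the estimate would not self-contain. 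To close on $F_j$ alone, one instead exploits the anti-self-adjointness of $[A,\cN]$ to symmetrize the bilinear form, and invokes the double-commutator estimate \eqref{eq:comm2} to absorb the rearrangement corrections $[(\cN+1)^k,[A,\cN]]$ as lower-order terms of the same form $|\l|\,\|\varphi_t\|_{\ell^\infty}F_j$.

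Once the differential inequality is in hand, Grönwall's lemma gives $F_j(t) \le F_j(0)\exp\!\bigl(C_j|\l|\int_0^t\|\varphi_s\|_{\ell^\infty}\,ds\bigr)$, and the bound \eqref{dis} controls the time integral uniformly by $c\|\varphi_0\|_{\ell^1}$ for $t\le T$, yielding the claim with $C=C(j,c,|\l|,\|\varphi_0\|_{\ell^1})$. The main obstacle is the middle step: the asymmetric form of \eqref{eq:comm1} makes naive commutator expansions leak into $F_{j+1}$-type moments that cannot be reabsorbed, and a careful symmetrization together with \eqref{eq:comm2} is needed to keep everything within $F_j$. The restriction $j\le 3$ is natural because only a bounded number of such symmetrizations is required, and it suffices for the downstream applications---in particular, for controlling the remainder $E_{R,r}$ of \thmref{thmLocFlucEst}, which features $\br{(\cN+1)^2}$.
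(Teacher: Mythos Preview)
Your proposal is correct and follows essentially the same approach as the paper. The paper packages the Gr\"onwall step as a separate moment bound (\lemref{lem23}, proved in Appendix~\ref{secPfLem23}) establishing $\br{(\cN+1)^j}_t\le \br{(\cN+1)^j}_0\exp\bigl(C_*|\l|\int_0^t\|\varphi_\tau\|_{\ell^\infty}\,d\tau\bigr)$, and then invokes \ref{disCond} to bound the exponent; your write-up merges these two stages but the underlying mechanism---Leibniz expansion of $[(\cN+1)^j,\cL_N(t)-\dG(-\Delta)]$, application of \eqref{eq:comm1} with all $(\cN+1)$-weight pushed to one side, and use of the double-commutator bound \eqref{eq:comm2} to absorb the rearrangement terms---is identical.
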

	This theorem is proved shortly. The basic idea is to first bound $\br{\cN+1}_t$ generously in terms of a given solution $\varphi_t$ to the Hartree equation \eqref{Hartree}, and then get rid of the dependence on $\varphi_t$ through   dispersive estimate   \eqref{dis}.
	
	
	\begin{remark}\label{remScale}
		Comparing with existing result in the continuum \cite{Lee, DL}, where uniform-in-time estimates for the global fluctuation are obtained in $d=3$, the caveat in our case is that the dispersive estimate for the free propagator $e^{it\Lap}$  on lattice is strictly weaker than the corresponding estimates in the continuum; see \eqref{DisEst1}. 
	\end{remark}

	To shorten notation, for any operator $\mathcal{A}$ on $\mathcal{F}_{\perp \varphi_t}^{\leq N}$ and state $\mathcal{U}_{N,s} \psi_{N,s} \in \mathcal{F}_{\perp \varphi_s}^{\leq N}$, we write
	\begin{align}
		\label{def:Ast}
		\langle \mathcal{A} \rangle_{(t;s)} =  \langle \mathcal{W}_{N}(t;s) \mathcal{U}_{N,s} \psi_{N,s}, \mathcal{A} \mathcal{W}_{N}(t;s) \mathcal{U}_{N,s} \psi_{N,s} \rangle \; . 
	\end{align}
	In particular, for $\xi_t:=\mathcal{U}_{N,t} \psi_{N,t}$ and 	$\br{\Ac}_t:=\br{\xi_t,\Ac\xi_t}$ defined in \eqref{def31}, we have
\begin{align}
	\label{}
	\br{\cN^+_X}_t\equiv \br{\cN^+_X}_{(t,0)},\quad X\subset\Zb^d.
\end{align}
	The proof of \thmref{thm35} is based on estimates for $\br{\cN}_{(t;s)}$ in terms of given condensate dynamics,  as stated in the next lemma:
	\begin{lemma}\label{lem23} Let $\varphi_t,\,t\ge0$ be a solution to the Hartree equation \eqref{Hartree} with initial data $\varphi_0 \in \ell^2$, $\norm{\varphi_0}_{\ell^2}\le1$.
		Then for $j=1,2,3$,  there exists  $C_* = C_* (j)>0$ such that 
		\begin{align}
			\langle (\mathcal{N} + 1)^j \rangle_{(t;s)} \leq \langle (\mathcal{N} + 1 )^j \rangle_{(s;s)} \; \exp \bigg( C_* \abs{\lambda} \int_s^t d\tau \| \varphi_\tau \|_{\ell^\infty} \bigg),\quad 0\le s\le t. \label{eq:moment-bound}
		\end{align}
	\end{lemma}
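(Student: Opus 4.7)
The plan is to prove \eqref{eq:moment-bound} by a differentiate-and-Gr\"onwall argument. Setting $\xi_{N,t}:=\mathcal{W}_N(t;s)\mathcal{U}_{N,s}\psi_{N,s}$ and $F_j(t):=\langle(\mathcal{N}+1)^j\rangle_{(t;s)}=\langle\xi_{N,t},(\mathcal{N}+1)^j\xi_{N,t}\rangle$, the Heisenberg identity together with self-adjointness of the generator gives
\begin{equation*}
	F_j'(t) \;=\; i\,\langle \xi_{N,t}, [\mathcal{L}_N(t), (\mathcal{N}+1)^j]\,\xi_{N,t}\rangle .
\end{equation*}
The target is the differential inequality $|F_j'(t)|\le C_*\,|\lambda|\,\|\varphi_t\|_{\ell^\infty}\,F_j(t)$, after which Gr\"onwall on $[s,t]$ yields the exponential bound with integrand $\|\varphi_\tau\|_{\ell^\infty}$ and constant $C_*=C_*(j)$.

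The first simplification is that $\mathcal{N}$ commutes with $\dG(A)$ for every one-body operator $A$, so the commutator reduces to $[(\mathcal{N}+1)^j, \mathcal{L}_N(t)-\dG(-\Delta)]$ and only the non-number-preserving parts of the generator in \eqref{eq:L-sum}--\eqref{213} contribute: the Bogoliubov quadratic term $\tfrac{\lambda}{2}\sum_x\tilde K_{2,t}(x)b_x^*b_x^*+\mathrm{h.c.}$ from $\mathbb{H}$ and the cubic pieces of $\mathcal{R}_{N,t}^{(1)}$ and $\mathcal{R}_{N,t}^{(2)}$. I would then invoke the Leibniz-type identity
\begin{equation*}
	[(\mathcal{N}+1)^j, B] \;=\; \sum_{k=0}^{j-1}(\mathcal{N}+1)^k\,[\mathcal{N}, B]\,(\mathcal{N}+1)^{j-1-k}
\end{equation*}
to reduce matters to the single commutator $[\mathcal{N}, B]$ sandwiched between powers of $\mathcal{N}+1$. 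Using the pull-through relations $b_x^*\,f(\mathcal{N})=f(\mathcal{N}-1)\,b_x^*$ and $b_x\,f(\mathcal{N})=f(\mathcal{N}+1)\,b_x$ (valid on $\Fl$ up to $O(1)$ shifts), the weights $(\mathcal{N}+1)^\alpha$ can be commuted past the creation and annihilation operators at a cost absorbable into an $O(1)$ multiplicative constant. After redistributing the powers symmetrically as $(\mathcal{N}+1)^j=(\mathcal{N}+1)^{j/2}(\mathcal{N}+1)^{j/2}$ and Cauchy-Schwarz in the $x$-sum, each contribution is bounded by a product of two matching factors $\|(\mathcal{N}+1)^{j/2}\xi_{N,t}\|=F_j(t)^{1/2}$ times a kernel factor controlled by $C|\lambda|\|\varphi_t\|_{\ell^\infty}$; the latter comes from $\|\tilde K_{2,t}\|_{\ell^2}\le\|\varphi_t\|_{\ell^\infty}$ (using $\|\varphi_t\|_{\ell^2}=1$) for the Bogoliubov piece and from the pointwise factor $\varphi_t(x)$ times an $N^{-1/2}$ smallness for the cubic remainders.

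The main obstacle is closing each bilinear form by $F_j(t)$ rather than by $\sqrt{F_{2j-2}(t)\,F_2(t)}$, which is what a direct application of the commutator bound \eqref{eq:comm1} of Lemma \ref{lemma:commutator} produces and which by H\"older dominates $F_j(t)$ in the wrong direction. The resolution is precisely the symmetric splitting of the weight $(\mathcal{N}+1)^j$ above, which relies on the algebraic fact that the Bogoliubov pair operators $b_x^*b_x^*$ and $b_xb_x$ shift $\mathcal{N}$ by an even integer so that the resulting shifts can be reabsorbed into the adjacent weights. For $j=1$ this procedure is the trivial one (no expansion is needed and \eqref{eq:comm1} is already in quadratic-form shape once the symmetrization is carried out), while for $j=2$ and $j=3$ the bookkeeping produces one additional nested $\mathcal{N}$-commutator on each level, handled by \eqref{eq:comm2} and by a completely analogous direct computation for the third commutator.
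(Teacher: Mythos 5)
Your proposal follows essentially the same route as the paper: differentiate, reduce to the commutator $[\mathcal L_N(t)-\dG(-\Lap),(\mathcal N+1)^j]$ via $[\dG(-\Lap),\mathcal N]=0$, expand by Leibniz, redistribute the $(\mathcal N+1)$-weights by pull-through, and close with Cauchy--Schwarz in $x$ plus $\|\tilde K_{2,t}\|_{\ell^2}\le\|\varphi_t\|_{\ell^\infty}$ to land on $C|\lambda|\|\varphi_t\|_{\ell^\infty}F_j(t)$, which is exactly what the packaged estimates \eqref{217}, \eqref{eq:comm1}, \eqref{eq:comm2} of Lemma~\ref{lemma:commutator} encode. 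The only imprecision is the remark that for $j=1$ ``\eqref{eq:comm1} is already in quadratic-form shape once the symmetrization is carried out'': as stated, \eqref{eq:comm1} gives the norm product $\|(\mathcal N+1)\xi\|\,\|\xi\|=F_2^{1/2}F_0^{1/2}\ge F_1$, so one cannot symmetrize it after the fact; instead the paper uses the operator inequality \eqref{217} with $h\equiv1$, which is precisely the honest quadratic-form version of the symmetrized bound your pull-through argument would produce.
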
	
This lemma is proved by standard arguments for moment bounds of $\mathcal{N}$ in the continuum (see for example \cite{DL,RS}) and relies on commutator estimates of the generator of the fluctuation dynamics $\mathcal{L}_N (t)$ with the number of particles $\mathcal{N}$ stated in \lemref{lemma:commutator}.   We  postpone the proof  of Lemma \ref{lem23} to Appendix \ref{secPfLem23} and first prove \thmref{thm35}.
	
	\begin{proof}[Proof of \thmref{thm35} assuming \lemref{lem23}]
		Let 
		\begin{align}
			G(t):=&\exp( C_*\abs{\l}\int_0^t d\tau \| \varphi_\tau \|_{\ell^\infty}  ),\label{Gdef}
		\end{align} 
		with the   constant $C_*>0$ from \lemref{lem23}. By \eqref{eq:moment-bound}, we have
		\begin{align}
			\label{Nest0}
			\sup_{0\le\tau\le t}\br{ (\cN+1)^j }_\tau\le \br{(\cN+1)^j}_0\ G(t).
		\end{align}
		Under condition \ref{disCond}, with  the constant  $c>0$ from the r.h.s.~of \eqref{dis}, we have $G(t)\le e^{C_*\abs{\l}c}$. Plugging this back to \eqref{Nest0} yields  \eqref{Nest}.
		
	\end{proof}
	
	\subsection{Improved local fluctuation estimate}\label{secPfCor37}

	Combining \thmref{thm35}   and \thmref{thmLocFlucEst}, we arrive at the following  improved local fluctuation estimate:
	\begin{corollary}\label{cor37}
		Let the assumption of \thmref{thm35} hold. Then for every {$n\in\mathbb{N}_+$} and $v > \kappa$ (see \eqref{kappa}),   
		there exists  $C=C(n,d, v,\l,c, \norm{\varphi_0}_{\ell^1})>0$    such that for all $r>0$, $R\ge r+v$, and $0\le t \le \tfrac{R-r}{v}$, we have 
		
			\begin{align}
				\label{MVB2sim}
				\br{\N_{B_r}}_t \le& C(1+ (R-r)^{-1})\br{\N_{B_{R}}}_0\notag\\&+C\sbr{ \frac1N+(R-r)^{-n} }\br{(\cN+1)^2}_0 .
			\end{align}  
			%
	\end{corollary}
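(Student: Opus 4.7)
The plan is to combine \thmref{thmLocFlucEst} with the global moment bound of \thmref{thm35} and the dispersive estimate \ref{disCond} to eliminate the $\varphi_t$-dependent prefactor $e^{M_R(t)}$ and the remainder $E_{R,r}(t)$ that appear on the right-hand side of \eqref{MVB2}. Multiplying \eqref{MVB2} by $e^{M_R(t)}$, we only need to show that $e^{M_R(t)} \le C$ and $e^{M_R(t)} E_{R,r}(t)$ is bounded by the claimed right-hand side.

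First I would control $M_R(t)$. Since the Hartree flow preserves the $\ell^2$-norm, $\|\varphi_\tau\|_{\ell^\infty(B_R)}\le\|\varphi_\tau\|_{\ell^\infty}\le\|\varphi_\tau\|_{\ell^2}=1$, so the quadratic term in \eqref{Mdef} can be dominated by the linear one. An application of \ref{disCond} then yields
\[
M_R(t)\le 6|\l|\int_0^t\|\varphi_\tau\|_{\ell^\infty}\,d\tau\le 6|\l| c\|\varphi_0\|_{\ell^1},
\]
so $e^{M_R(t)}$ is bounded by a constant of the allowed form.

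Next I would handle $E_{R,r}(t)$ term by term. For the first summand in \eqref{Edef}, I use the time constraint $t\le (R-r)/v$ to collapse the factor $t(v/(R-r))^{n+1}$ to $v^n(R-r)^{-n}$, and bound $\br{\cN_{B_{R+1}}}_\tau\le \br{\cN+1}_\tau$, which by \thmref{thm35} with $j=1$ is at most $C\br{(\cN+1)}_0\le C\br{(\cN+1)^2}_0$. For the second summand, I apply \thmref{thm35} with $j=2$ to the supremum, and interpolate $\|\varphi_\tau\|^4_{\ell^4(B_R)}\le \|\varphi_\tau\|^2_{\ell^\infty}\le\|\varphi_\tau\|_{\ell^\infty}$ (using again $\|\varphi_\tau\|_{\ell^\infty}\le 1$) so that the whole time integral reduces to $\int_0^t\|\varphi_\tau\|_{\ell^\infty}\,d\tau$, controlled by \ref{disCond}. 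This gives an $N^{-1}\br{(\cN+1)^2}_0$-contribution. Combining these bounds with the first term on the right-hand side of \eqref{MVB2}, which is already in the desired form, yields \eqref{MVB2sim}.

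No substantive obstacle is expected: the argument is a bookkeeping step that extracts, from the two more technical ingredients \thmref{thmLocFlucEst} and \thmref{thm35}, bounds that are free of the condensate profile $\varphi_t$. The one point that needs a small amount of care is making sure the integrand $\|\varphi_\tau\|^4_{\ell^4(B_R)}$ is controlled by the $\ell^\infty$-integral appearing in \ref{disCond}, which is precisely what the interpolation $\|\varphi\|^4_{\ell^4}\le\|\varphi\|^2_{\ell^\infty}\|\varphi\|^2_{\ell^2}$ together with the normalization $\|\varphi_\tau\|_{\ell^2}=1$ accomplishes.
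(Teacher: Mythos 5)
Your proposal is correct and follows essentially the same route as the paper: bound $e^{M_R(t)}$ by a constant using the $\ell^2$-normalization and \ref{disCond}, absorb the first summand of $E_{R,r}(t)$ into $(R-r)^{-n}$ via the time constraint $t\le(R-r)/v$, control the supremum of the moments by \thmref{thm35}, and handle the $\ell^4$-integral via the interpolation $\|\varphi\|_{\ell^4}^4\le\|\varphi\|_{\ell^\infty}^2\|\varphi\|_{\ell^2}^2$ together with $\|\varphi_\tau\|_{\ell^\infty}\le1$. The only cosmetic difference is that you invoke \thmref{thm35} with $j=1$ on the first summand and $j=2$ on the second, while the paper uniformly upgrades both to $\br{(\cN+1)^2}$; either choice is fine.
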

	\begin{remark}
		For $N$ and $R-r$ large, estimate  \eqref{MVB2sim} are indeed of the desired form \eqref{locForm}.
	\end{remark}
	
	
	\begin{proof}[Proof of \corref{cor37}]
		Let
		\begin{align}
			\label{Mdef'}
			M(t)  :=& \abs{\l}\int_0^t \del{ \norm{\varphi_\tau}_{\ell^\infty }+5\norm{\varphi_\tau}_{\ell^\infty }^2}\,d\tau ,\\
			\label{Edef'}
			E(t):=&   { \frac{\abs{\l}}{N}     } \int_0^t  {\norm{\varphi_\tau}^4_{\ell^4  }+\norm{\varphi_\tau}_{\ell^\infty}} \,d\tau.
		\end{align}
		Comparing definitions \eqref{Mdef}--\eqref{Edef} and \eqref{Mdef'}--\eqref{Edef'}, the local fluctuation estimate \eqref{MVB2} becomes, 
		for all $0\le t \le \tfrac{R-r}{v}$,
		\begin{align}\label{100}
			e^{-M(t)}\br{\N_{B_r}}_t \le&   (1+C(R-r)^{-1})\br{\N_{B_{R}}}_0\notag\\&+  \sbr{E(t)+(R-r)^{-n}}\sbr{ \sup_{0\le \tau\le t}\br{(\cN+1)^2}_\tau }.
		\end{align}
		The   factor $\sup_{0\le \tau\le t}\br{(\cN+1)^2}_\tau$ in \eqref{100} is handled by \thmref{thm35}, so it remains to bound the functions $M(t)$ and $E(t)$.
		
		Since $\norm{\varphi_t}_{\ell^2}$ is conserved under the Hartree equation \eqref{Hartree}, we have under condition \ref{disCond} that 
		\begin{align}\label{44Est}
			\int_0^t	{\norm{\varphi _s }_{\ell^4}^4 }\,ds\le& \int_0^t		\norm{\varphi_s}_{\ell^\infty}^2{\norm{\varphi _s }_{\ell^2}^2 }\,ds\le c\norm{\varphi_0}_{\ell^1}.
		\end{align}
		It follows from estimates \eqref{44Est},  \eqref{dis}, and  the  definitions \eqref{Mdef'}--\eqref{Edef'} that
			\begin{align}
				\label{}
				\label{MEst'}
				M(t)   \le& 6c\norm{\varphi_0}_{\ell^1}\abs{\l},\\
				\label{EEst'}
				E(t)   \le&     2c\norm{\varphi_0}_{\ell^1}\frac{\abs{\l}}{N}.
		\end{align}
		Plugging \eqref{MEst'}--\eqref{EEst'} and \eqref{Nest} back to \eqref{100} yields the desired result. 
	\end{proof}

	\subsection{Sufficient conditions for  \ref{disCond}}\label{sec43}
	In this section, we present explicit sufficient conditions for the dispersive estimate \eqref{dis} to be valid. More precisely, we verify Examples \ref{ex1}--\ref{ex2}. 
	\begin{proposition} \label{prop44}
		Take $\varphi_0\in\ell^1$ with $\norm{\varphi_0}_{\ell^2}=1$.
		\begin{enumerate}
			\item  
			Let $d\ge4$. Then there exist constants $\l_0,\,c>0$ depending only on $d$ s.th.~if $\abs{\l}\le\l_0$, then \eqref{dis} holds for all $t\ge0$.
			\item 
			Let $d=3$. Then there exist absolute constants $\l_0,\,c>0$ s.th.~if $\abs{\l}\le\l_0$ and 	$T:= { {e^{1/\sqrt{\abs{\l}}}-1}}$, then \eqref{dis} holds for all $t\le T$.
		\end{enumerate}
	\end{proposition}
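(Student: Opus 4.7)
My plan is to derive the dispersive estimate \eqref{dis} through a Duhamel-bootstrap argument built on the standard linear dispersive bound for the free discrete Schr\"odinger group, combined with mass conservation along the Hartree flow \eqref{Hartree} and cubic interpolation.

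Three ingredients go into the argument. First, the stationary-phase bound for the free lattice evolution, proved e.g.\ in \cite{SK,HY}:
\begin{equation*}
\|e^{it\Delta}f\|_{\ell^\infty} \leq C_d (1+|t|)^{-d/3}\|f\|_{\ell^1}, \qquad t \in \Rb,
\end{equation*}
whose slower $t^{-d/3}$ decay (compared to the continuum $t^{-d/2}$) reflects the degenerate critical points of the symbol $\omega(k)=2\sum_j(1-\cos k_j)$. Second, the $\ell^2$-conservation $\|\varphi_t\|_{\ell^2} = \|\varphi_0\|_{\ell^2} = 1$ under \eqref{Hartree}. Third, the pointwise interpolation
\begin{equation*}
\bigl\||\varphi_s|^2\varphi_s\bigr\|_{\ell^1} = \|\varphi_s\|_{\ell^3}^3 \leq \|\varphi_s\|_{\ell^2}^2\,\|\varphi_s\|_{\ell^\infty} = \|\varphi_s\|_{\ell^\infty}.
\end{equation*}
Applying Duhamel's formula to \eqref{Hartree}, taking $\ell^\infty$-norms, and inserting these three bounds, I obtain the self-consistent integral inequality
\begin{equation*}
\|\varphi_t\|_{\ell^\infty} \leq C_d(1+t)^{-d/3}\|\varphi_0\|_{\ell^1} + C_d|\lambda|\int_0^t (1+t-s)^{-d/3}\|\varphi_s\|_{\ell^\infty}\,ds.
\end{equation*}
I would then close this inequality by a continuity argument in the weighted norm $\sup_{0 \leq t \leq T}(1+t)^{d/3}\|\varphi_t\|_{\ell^\infty}/\|\varphi_0\|_{\ell^1}$, producing a uniform pointwise bound $\|\varphi_t\|_{\ell^\infty} \leq 2C_d(1+t)^{-d/3}\|\varphi_0\|_{\ell^1}$ on the interval where the smallness of the convolution $\int_0^t (1+t-s)^{-d/3}(1+s)^{-d/3}\,ds$ against $|\lambda|$ permits the bootstrap to close; time-integration then yields \eqref{dis}.

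The two regimes now split on the integrability of the kernel $(1+\cdot)^{-d/3}$. For $d\ge 4$, the exponent $d/3>1$ makes the kernel lie in $L^1(\Rb_+)$, so the relevant convolution is uniformly bounded in $t$, the continuity argument closes globally for $|\lambda|\le\lambda_0(d)$ sufficiently small, and $T=\infty$ with $c=c(d)$, proving Example~\ref{ex1}. For $d=3$ the kernel is critical and the convolution grows logarithmically; using the explicit identity $\int_0^t (1+t-s)^{-1}(1+s)^{-1}\,ds = 2\log(1+t)/(2+t)$, the bootstrap is seen to close on a window $[0,T]$ as long as $|\lambda|\log(1+T)$ sits below an absolute threshold. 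The choice $T=e^{1/\sqrt{|\lambda|}}-1$ reduces this parameter to $\sqrt{|\lambda|}$, which is below the threshold for $|\lambda|\le \lambda_0$ small, yielding Example~\ref{ex2}.

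The main technical obstacle is precisely the borderline case $d=3$, where the critical kernel prevents a clean global contraction and forces a careful tradeoff between the size of $\lambda$ and the length $T$ of the bootstrap window, with all intermediate constants having to be tracked through the logarithmic self-convolution. Local well-posedness of \eqref{Hartree} is not an issue since $-\Delta$ is bounded on $\ell^2$ and the cubic nonlinearity is locally Lipschitz on $\ell^2$, so the continuity argument in the weighted space is automatically well-defined and extends as long as the bootstrap assumption holds.
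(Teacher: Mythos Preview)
Your proposal is correct and follows essentially the same route as the paper: the paper's proof of Proposition~\ref{prop44} invokes Theorem~\ref{thmA1} in the appendix, whose proof is precisely the Duhamel--bootstrap argument you outline, with the same free dispersive bound \eqref{DisEst1}, the same interpolation $\||\varphi_s|^2\varphi_s\|_{\ell^1}\le\|\varphi_s\|_{\ell^2}^2\|\varphi_s\|_{\ell^\infty}$, and the same weighted-sup continuity argument splitting on whether $d/3>1$. The only cosmetic difference is that for $d=3$ the paper bounds the critical convolution by a crude splitting at $t/2$ (see \eqref{a20}) rather than your exact partial-fractions identity, but the resulting logarithmic factor and the choice $T=e^{1/\sqrt{|\lambda|}}-1$ are identical.
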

	\begin{proof}
		We use the Strichartz estimates for the Hartree states $\varphi_t$ evolving according to \eqref{Hartree},  proved in \thmref{thmA1}. Indeed,   by \eqref{phiEst1},  for every $d\ge3$, there exist
		$\l_0,\,C>0$ depending only on $d$ such that for all $\abs{\l}\le\l_0$ and $\varphi_0\in\ell^1$ with $\norm{\varphi_0}_{\ell^2}=1$, 
		\begin{align}
			\label{aaaa}
			\norm{\varphi_t}_\infty\le \frac{C}{(1+t)^{d/3}}\norm{\varphi_0}_{1},
		\end{align}
		which is valid for all $t\ge0$ for $d\ge 4$, and for $0\le t \le T:=e^{1/\sqrt{\abs{\l}}}-1$ for $d=3$. Integrating \eqref{aaaa} in time yields
		\begin{align}\label{intEst}
			{\int_0^t\norm{\varphi_\tau }_{\ell^\infty}\,d\tau}\le& 
			\left\{\begin{aligned}
				&C\norm{\varphi_0}_{\ell^1},\quad t\ge0,\ d\ge4,\\
				&	\tfrac{C}{{\sqrt{\abs{\l}}}}\norm{\varphi_0}_{\ell^1} ,\quad 0\le t\le T,\ d=3. 
			\end{aligned}\right.
		\end{align}
		The desired results  follow from here. 
		
		%
		%
		%

	\end{proof}


	\section{Applications of \thmref{thmLocFlucEst} and  Proof of \thmref{thm:nloc}}\label{sec5}

	To render \thmref{thmLocFlucEst} useful, we seek   sufficient conditions, which are in some sense `local in space', that make the constants $M_{R}(t)$, $E_{R,r}(t)$ in  \eqref{Mdef}--\eqref{Edef} small for times up to $t\le (R-r)/v$. In particular, using such conditions together with  \thmref{thmLocFlucEst} and the global fluctuation bounds, we prove \thmref{thm:nloc}.
	

	
	The main result in this section is the following:
	\begin{theorem}
		\label{thmBallFluc}

		Assume the initial condensate $\varphi_0$ satisfies, for some $r,\,\rho>0 $,
		\begin{align}
			\label{phiLoc'}
			\varphi_0(x)=0,\quad \abs{x}\le r+2\rho.
		\end{align} 
		Assume also   the initial fluctuation satisfies %
		\begin{align}
			\label{xiLoc}
			n_x\xi_0=0  ,\quad \abs{x}\le r+\rho.
		\end{align} I.e.~the initial state for \eqref{HNdef} is purely factorized in $B_{r+\rho}$.

		Then for any  $v>\kappa$ (see \eqref{kappa}) and integer $n\ge1$, there exists $C=C(n,v)>0$ s.th.  
		\begin{align}
			\label{511}
			\br{\cN_{B_r}}_t\le C e^{C \rho^{-n}}\rho^{-n}\sup_{0\le\tau\le t}\br{\cN+1}_\tau,\quad t\le \rho/v.
		\end{align}
	\end{theorem}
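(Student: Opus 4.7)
The plan is to apply \thmref{thmLocFlucEst} with $R := r+\rho$, whereupon the hypothesis \eqref{xiLoc} combined with \eqref{Nid} annihilates the initial-data term $\br{\N_{B_R}}_0$, leaving the bound $\br{\N_{B_r}}_t \le C\, e^{M_R(t)}\, E_{R,r}(t)$ for $t \le \rho/v$. The ballistic contribution to $E_{R,r}(t)$ is immediately $\le v^n \rho^{-n}\sup_\tau\br{\cN+1}_\tau$ for such $t$. To pass from the prefactor $\sup_\tau \br{(\cN+1)^2}_\tau / N$ appearing in \eqref{Edef} to the factor $\sup_\tau\br{\cN+1}_\tau$ required by \eqref{511}, I would use the pointwise operator bound $(\cN+1)^2 \le (N+1)(\cN+1)$, which is valid on $\Fl$ because $\cN\le N$ there.

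The remaining task, and the one that carries the real content of the theorem, is to bound the integrated $\ell^\infty(B_R)$ and $\ell^4(B_R)$ norms of the condensate $\varphi_\tau$ that enter $M_R(t)$ and the second piece of $E_{R,r}(t)$. This is the main obstacle and requires a one-body maximal-velocity bound for the Hartree flow \eqref{Hartree}: since $\varphi_0 \equiv 0$ on $B_{R+\rho} = B_{r+2\rho}$, one wants $\|\varphi_\tau\|_{\ell^2(B_R)}^2 \le C_m\rho^{-m}$ for arbitrary $m \ge 1$, uniformly in $\tau\in[0,\rho/v]$. I would prove this by running the ASTLO scheme of \secref{secDefASTLO}--\secref{secIntEst} at the one-body level, computing the Heisenberg derivative of $\langle \varphi_\tau, f_{ts}\varphi_\tau\rangle$ under the Hartree generator $h_{\varphi_\tau} = -\Delta + \lambda|\varphi_\tau|^2$. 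The multiplicative nonlinearity commutes with the spatial cutoff $f_{ts}$, so the only nontrivial contribution is $i[-\Delta, f_{ts}]$, which is precisely the commutator already treated in \propref{prop36} (cf.~\cite[eq.~(4.44)]{LRZ}). Iterating as in \propref{propIntRME}--\propref{propAest1} yields the desired $\mathcal O(\rho^{-m})$ decay, and is structurally simpler than the many-body case because there is no $\mathcal O(1/\sqrt N)$ or $\mathcal O(1/N)$ remainder to reabsorb.

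Once this one-body bound is in hand, the elementary estimates $\|\varphi_\tau\|_{\ell^\infty(B_R)}^2 \le \|\varphi_\tau\|_{\ell^2(B_R)}^2$ and $\|\varphi_\tau\|_{\ell^4(B_R)}^4 \le \|\varphi_\tau\|_{\ell^\infty(B_R)}^2\|\varphi_\tau\|_{\ell^2(B_R)}^2$ transfer the $\rho^{-m}$ decay to all the $\ell^p(B_R)$-norms that appear. Choosing $m$ sufficiently large relative to $n$ makes $M_R(t) \le C\rho^{-n}$, hence $e^{M_R(t)} \le e^{C\rho^{-n}}$, and makes the $\varphi_\tau$-dependent part of $E_{R,r}(t)$ bounded by $C\rho^{-n}\sup_\tau\br{\cN+1}_\tau$ after the absorption step of the first paragraph. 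Combining with the ballistic contribution then yields \eqref{511}.
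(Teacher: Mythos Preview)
Your proposal is correct and follows essentially the same route as the paper: set $R=r+\rho$, use \eqref{xiLoc} to kill $\br{\N_{B_R}}_0$ in \thmref{thmLocFlucEst}, absorb $\br{(\cN+1)^2}_\tau/N$ into $\br{\cN+1}_\tau$ via $\cN\le N$ on $\Fl$, and control the $\varphi_\tau$-dependent pieces of $M_R(t)$ and $E_{R,r}(t)$ by a one-body ballistic bound $\|\varphi_\tau\|_{\ell^2(B_R)}\le C_m\rho^{-m}$ with $m\ge n+1$. The only cosmetic difference is that the paper invokes this one-body bound as the black-box \thmref{thmNLSMVB} (quoted from \cite{Zha}) rather than re-deriving it, but your sketch of the ASTLO argument for \eqref{Hartree}---noting that the real multiplicative nonlinearity commutes with $f_{ts}$ so only $i[-\Delta,f_{ts}]$ survives---is exactly how that result is proved.
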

	This theorem is proved in \secref{secPfThm61}. 
	\begin{proof}[Proof of \thmref{thm:nloc} assuming \thmref{thmBallFluc}]
		With definition \eqref{def31} and the identity \eqref{Nid} on the truncated Fock space $\mathcal{F}_{\perp \varphi_t}^{\leq N}$, we have, for all $t\ge0$,
		\begin{align}
			\langle \mathcal{N}_{B_r} \rangle_{t}  = \langle \psi_{N,t}, \mathcal{N}_{B_r}^+ (t) \psi_{N,t} \rangle .
		\end{align}
		Inserting the global fluctuation bound \eqref{MVB2sim} into \eqref{511} and using that $e^{C\rho^{-n}}\le\const$ for $\rho\ge1$, we arrive at the desired estimate \eqref{115}.
	\end{proof}
	
	
	{Note estimate \eqref{511} \textit{per se} does not require   assumption \ref{disCond}.  For the derivation of \eqref{511},  dispersive estimate \eqref{dis} is not used  and all we need  is the ballistic upper bound for NLS \eqref{Hartree} and proper initial geometric setup; see \thmref{thmNLSMVB}  below.   }

	As a result, for general condensate dynamics with $\ell^2$-normalized initial data, owning to the bound $ \norm{\varphi_t}_{\ell^\infty}\le \norm{\varphi_t}_{\ell^2}\equiv 1$ and \lemref{lem23}, inserting \eqref{eq:moment-bound} to bound the last factor in \eqref{511} yields a general local estimate
	\begin{align}
		\label{512}
		\br{\cN_{B_r}}_t\le \frac {C e^{C  \abs{\l}t }}{\rho^n}\br{\cN+1}_0, \quad t\le \rho/v,
	\end{align}
	for some  $C=C(n,v)>0$. 
	
	\subsection{Proof of \thmref{thmBallFluc}}\label{secPfThm61}
	To begin with,	we consider a very general class of  nonlinear Sch\"odinger equations on $\ell^2(\Zb^d)$, $d\ge1$ of the form
	\begin{align}
		\label{NLS}
		i\di_t \varphi = \cL \varphi+V(t)q+\abs{N(t,\varphi)} . 
	\end{align}
	Here $\cL$ is a self-adjoint operator with kernel $\cL_{x,y}$ satisfying, for some $n\ge1$,
	\begin{align}
		\label{Lcond}
		\sup_x\sum_y\abs{\cL_{x,y}} \abs{x-y}^{n+1}<\infty.
	\end{align}
	For example, if $L_{x,y}$ is given by the discrete Laplacian in \eqref{def:Laplace}, then \eqref{Lcond} holds uniformly for all $n$. 
	We assume the potential $V(t)$ is family of uniformly bounded multiplication operators,  and the  nonlinearity satisfies  $\abs{N(t,\varphi)}\le C_1$ for $t\in\Rb,\,\abs{\varphi}\le C_2$. 
	
	For $Y\subset \Zb^d$ and $\rho\ge0$, write $Y_\rho:=\Set{x\in\Zb^d:\abs{x-y}\le \rho, y\in Y}$ and $Y_\rho^\cp:=\Zb^d\setminus Y_\rho$. We have the following ballistic upper bound for \eqref{NLS}: 
	\begin{theorem}[C.f.~\cite{Zha}]\label{thmNLSMVB}
		For any $v>\sup_x\sum_y\abs{\cL_{x,y}} \abs{x-y}$ and integer $n\in\mathbb{N}_+$,  there exists $C=C(n,M,v)>0$ s.th. for any subset $Y\subset\Zb^d$, $\rho>0$, and any solution  $\varphi_t,\,t\ge0$ to \eqref{NLS} with initial state $\varphi_0$,
		\begin{align}
			\label{NLSpropEst}
			\norm{\varphi_t}_{\ell^2(Y_\rho^\cp)}^2\le (1+C\rho^{-1})\norm{\varphi_0}_{\ell^2(Y^\cp)}^2+ C\rho^{-n},\quad t\le \rho/v.
		\end{align}
	\end{theorem}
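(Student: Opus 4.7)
The plan is to adapt the one-body ASTLO method developed in Sections \ref{secDefASTLO}--\ref{secPfThm3.1} to the scalar equation \eqref{NLS}. Fix $v' \in (\kappa', v)$ with $\kappa' := \sup_x \sum_y |\cL_{x,y}| \, |x-y|$, set $\epsilon := v - v'$, write $d(x, Y) := \inf_{y \in Y} |x - y|$, and for $f \in \cE_\epsilon$ and $s > 0$ define the classical ASTLO
\begin{equation*}
f_{ts}(x) := f\!\left(\frac{d(x,Y) - v' t}{s}\right), \qquad \Phi(t) := \langle \varphi_t, f_{ts}\, \varphi_t \rangle.
\end{equation*}
The geometric endpoint inequalities parallel \eqref{geoEst}: with the choice $s = \rho/v$, one has $f_{0s}(x) \leq \1_{Y^\cp}(x)$ at $t = 0$ (since $f(0) = 0$), while $\1_{Y_\rho^\cp}(x) \leq f_{ts}(x)$ at all $t \leq s$ (since for $x \in Y_\rho^\cp$ the argument of $f$ at $(t,x)$ exceeds $\epsilon$).

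Next I would compute $\tfrac{d}{dt}\Phi$ using \eqref{NLS}. Only the kinetic operator $\cL$ contributes nontrivially: $[V(t), f_{ts}] = 0$ since both are multiplication operators, and for gauge-invariant nonlinearities $N(t,\varphi)$ (such as $\lambda |\varphi|^2 \varphi$, which governs the condensate dynamic \eqref{Hartree}) the pointwise quantity $2\, \Im \langle f_{ts} \varphi_t, N(t, \varphi_t) \rangle$ vanishes. For the commutator $i[\cL, f_{ts}]$, I would apply a Taylor expansion of $f$ to order $n+1$ combined with the moment assumption \eqref{Lcond}, mirroring the derivation of \eqref{ME1'}. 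With $\delta := v' - \kappa' > 0$, suitable $j_k \in \cE_\epsilon$, and using mass conservation $\|\varphi_t\|_{\ell^2} = \|\varphi_0\|_{\ell^2}$, this would yield the scalar differential inequality
\begin{equation*}
\frac{d}{dt}\Phi(t) \leq -\frac{\delta}{s}\langle \varphi_t, f'_{ts}\, \varphi_t \rangle + C \sum_{k=2}^n s^{-k} \langle \varphi_t, j'_{k,ts}\, \varphi_t \rangle + \frac{C}{s^{n+1}}\|\varphi_0\|_{\ell^2}^2.
\end{equation*}

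Integrating in time and iterating within $\cE_\epsilon$ by the recursive scheme of Propositions \ref{propIntRME} and \ref{propAest1} would yield $\Phi(t) \leq (1 + Cs^{-1})\Phi(0) + Cs^{-n}\|\varphi_0\|_{\ell^2}^2$ for $t \leq s$. Setting $s = \rho/v$ and applying the geometric endpoint inequalities then gives
\begin{equation*}
\|\varphi_t\|_{\ell^2(Y_\rho^\cp)}^2 \leq \Phi(t) \leq (1 + C\rho^{-1})\|\varphi_0\|_{\ell^2(Y^\cp)}^2 + C\rho^{-n}\|\varphi_0\|_{\ell^2}^2,
\end{equation*}
which is \eqref{NLSpropEst} once the bounded factor $\|\varphi_0\|_{\ell^2}^2$ is absorbed into the constant. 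The main technical obstacle is confirming that the pointwise terms from $V$ and $N$ do not spoil $\tfrac{d}{dt}\Phi$: for real $V$ and gauge-invariant $N$ this follows from a short mass-current computation, but for a truly generic bounded $N$ one would need a supplementary remainder bound of the same order as the $s^{-n-1}$ correction above, handled by the pointwise hypothesis $|N(t,\varphi)| \leq C_1$ together with the cutoff structure of $f_{ts}$.
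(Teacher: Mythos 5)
The paper does not actually give a self-contained proof of \thmref{thmNLSMVB}: its ``proof'' is a two-sentence citation to \cite{Zha}, first for $Y = B_r$ (\cite[Thm.~1.4]{Zha}) and then noting that \cite[Sect.~2.2]{Zha} explains the extension to general subsets $Y$. Your proposal instead reconstructs the one-body ASTLO argument from that reference, so the ``route'' you take is genuinely different from the paper's (which outsources the work), even though it is the same method used in the cited source. Your construction is sound: the ASTLO $f_{ts}(x) = f((d(x,Y) - v't)/s)$ with $s = \rho/v$ gives the correct geometric sandwich $f_{0s} \le \1_{Y^\cp}$ and $\1_{Y_\rho^\cp} \le f_{ts}$ for $t \le s$ (the $1$-Lipschitz property of $d(\cdot,Y)$ is what replaces the radial coordinate used in the ball case and in \eqref{ftsDef}); the commutator expansion parallels \eqref{ME1'} with $\kappa'$ in place of $\kappa$; and the recursive integration scheme is precisely the scalar analogue of \propref{propIntRME} and \propref{propAest1}. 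This written-out version is actually more informative than the paper's citation.

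One point deserves a sharper treatment than you gave it. You correctly observe that for real $V(t)$ and gauge-invariant $N$ the contribution of those terms to $\od{}{t}\Phi$ vanishes identically, because $2\Im\langle f_{ts}\varphi_t, W\varphi_t\rangle = 0$ whenever $W$ is a real multiplication operator. However, for a truly generic bounded nonlinearity (which the paper's formulation of \eqref{NLS} nominally allows, since it only asserts $\abs{N(t,\varphi)}\le C_1$), the residual term $2\Im\langle f_{ts}\varphi_t, N(t,\varphi_t)\rangle$ is of order $1$, not of order $s^{-n-1}$, and cannot be absorbed into the small remainder as you suggest in your last sentence: merely being pointwise bounded and supported where $f_{ts}$ is active gives you an $O(1)$ quantity integrated over $[0,s]$, which is $O(s)$, not $O(s^{-n})$. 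The theorem can only be correct as stated if the implicit hypotheses in \cite{Zha} guarantee this term vanishes or is already small — in particular, for the application here to the Hartree equation \eqref{Hartree} (where $N(\varphi) = \l\abs{\varphi}^2\varphi$) the vanishing is exact, as you note. So for the theorem's actual use in the paper your argument is complete; but you should not present the generic-$N$ case as merely a ``supplementary remainder bound'' when it is in fact the reason the equation must be restricted to real $V$ and gauge-invariant $N$.

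A minor mismatch: your final inequality carries an extra factor $\norm{\varphi_0}_{\ell^2}^2$ on the $\rho^{-n}$ term. The statement \eqref{NLSpropEst} has no such factor, so either the constant $C = C(n,M,v)$ is understood to absorb it (the undefined parameter $M$ in the theorem likely encodes $\norm{\varphi_0}_{\ell^2}$ along with the moment bound of \eqref{Lcond}), or one must additionally assume a normalization such as $\norm{\varphi_0}_{\ell^2}\le 1$, which holds in the paper's applications. Worth flagging, not a real error.
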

	\begin{proof}
		For $Y=B_r,\,r>0$, the exact same statement is proved in \cite[Thm.~1.4]{Zha}. Next, at the beginning of \cite[Sect.~2.2]{Zha}, it is explained how a straightforward adaption of the arguments extends the result for balls to general subsets $Y$,  leading to \eqref{NLSpropEst}.  
		Thus	we conclude the desired result \eqref{NLSpropEst}. 
	\end{proof}

	Assuming  the initial states $\varphi_0$, $\xi_0$ are both localized some  distance away from the origin, we  use \thmref{thmNLSMVB} together with \thmref{thmLocFlucEst} to show that the local fluctuation near the origin is suppressed up to time proportional to the localization distance.

	\begin{proof}[Proof of \thmref{thmBallFluc}]
		Let $R=r+\rho$. 
		We prove \eqref{511} in two steps: First we use \thmref{thmLocFlucEst} and \eqref{xiLoc} to bound the local fluctuation in terms of $\norm{\varphi_t}_{\ell^2(B_R)}$. Then we use \thmref{thmNLSMVB} and assumption \eqref{phiLoc'} to bound $\norm{\varphi_t}_{\ell^2(B_R)}$.
		
		1. We have by definitions \eqref{Mdef}--\eqref{Edef} that  for all $t\ge0$,
		\begin{align}
			\label{e59}
			M_{R}(t)\le& 6t \sup_{0\le\tau\le t}\norm{\varphi_\tau }_{\ell^2(B_R)},\\
			E_{R,r}(t)\le& \del{t(\rho/v)^{-(n+1)}+2t\sup_{0\le\tau\le t}\norm{\varphi_\tau}_{\ell^2(B_R)}}\sup_{0\le\tau\le t}\br{\cN+1}_\tau .\label{e510}
		\end{align}
		By the localization condition \eqref{xiLoc} on the initial fluctuation, the leading term involving $\br{\cN_{B_R}}_0$ in the r.h.s.~of \eqref{MVB2} vanishes. Therefore, by \thmref{thmLocFlucEst} and the bounds \eqref{e59}--\eqref{e510} above, we find some $C_1=C_1(n,d,v)>0$ s.th.~for $t\le \rho/v$,
		\begin{align}
			\label{513}
			\br{\cN_{B_r}}_t\le& C_1 \exp{ C_1\rho \sup_{0\le\tau\le t}\norm{\varphi_\tau }_{\ell^2(B_R)}} \del{\rho^{-n}+ \rho\sup_{0\le\tau\le t}\norm{\varphi_\tau}_{\ell^2(B_R)}}\sup_{0\le\tau\le t}\br{\cN+1}_\tau.
		\end{align}

		2. Now we apply  the localization estimate \eqref{NLSpropEst} to $Y=\supp\varphi_0$. By the localization condition \eqref{phiLoc'} on the initial condensate, we have $B_R\subset  Y_{\rho}^\cp$.  This yields, for any $v>\kappa$ and $m\ge1$, some $C_2=C_2(m,v)>0$ s.th.
		\begin{align}
			\label{514}
			\sup_{0\le\tau\le t}\norm{\varphi_\tau}_{\ell^2(B_{R}^\cp)}\le\sup_{0\le\tau\le t}\norm{\varphi_\tau}_{\ell^2(Y_\rho^\cp)}\le  C_2\rho^{-m},\quad t\le \rho/v.
		\end{align}
		Choosing $m\ge n+1$ and then inserting \eqref{514} to bound the corresponding quantities in \eqref{513}, we arrive at the desired estimate \eqref{511}.
	\end{proof}

	\section{Convergence of one-particle reduced densities and Proof of \thmref{thm:trace}} 	
	\label{sec:trace}

	In this Section we study the convergence of expectations of local observables and prove Theorem \ref{thm:trace}. The main tools are Theorem \ref{thm:nloc} and the lemma below:

	\begin{lemma}\label{lemma:trace-diff}
		{	Under the same assumptions as in Theorem \ref{thm:trace},}
		there exists a universal constant $C>0$ such that 
		\begin{multline}
			\big\vert \Tr  \del{O \big( \gamma_{\psi_{N,t}} -  N \vert \varphi_t \rangle \langle \varphi_t \vert \big)} \big\vert  \le  C  \|O \|_{\rm op} \\\times\bigg( \langle \mathcal{N}_{B_r} \rangle_{(t;0)}    +  \| \varphi_t \|_{\ell^2(B_r)}^2    \langle  \mathcal{N} \rangle_{(t;0)}  + \|  \varphi_t \|_{\ell^2( {B_r})} \int_{0}^t ds \;  {e^{ \abs{\l}\int_s^t d\tau \; \| \varphi_\tau \|_{\ell^\infty ( \mathbb{Z}^d)}^2} }\langle    \mathcal{N} + 1  \rangle_{(t;s)} \; \bigg) .
			\label{61}
		\end{multline}
	\end{lemma}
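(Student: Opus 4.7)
The plan is to decompose $\gamma_{\psi_{N,t}}-N\vert\varphi_t\rangle\langle\varphi_t\vert$ using the orthogonal splitting $\1 = P_t + q_t$, with $P_t := \vert\varphi_t\rangle\langle\varphi_t\vert$, into the four blocks $P_t\cdot P_t$, $P_t\cdot q_t$, $q_t\cdot P_t$, $q_t\cdot q_t$, and bound each contribution to $\Tr(O\,\cdot\,)$ separately. The diagonal blocks will produce the first two terms of \eqref{61} essentially algebraically, while the off-diagonal blocks---which \emph{a priori} carry a factor $\sqrt{N}$---are handled by a Duhamel expansion in time and yield the integral term.

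For the diagonal pieces, the identity $P_t\gamma_{\psi_{N,t}}P_t = \langle\varphi_t,\gamma_{\psi_{N,t}}\varphi_t\rangle P_t = (N-\langle\cN^+(t)\rangle_{(t;0)})P_t$ together with the locality $O = \1_{B_r}O\1_{B_r}$ gives
\[ \vert\Tr(OP_t(\gamma_{\psi_{N,t}}-NP_t)P_t)\vert \le \|O\|_{\rm op}\|\varphi_t\|_{\ell^2(B_r)}^2\,\langle\cN\rangle_{(t;0)}, \]
matching the second term of \eqref{61}. For the $q_tq_t$ block, $\Tr(Oq_t\gamma_{\psi_{N,t}}q_t) = \langle\psi_{N,t},\dG(q_tOq_t)\psi_{N,t}\rangle$; splitting $O$ into its self-adjoint and antiself-adjoint parts and using the operator inequality $\pm q_tOq_t \le \|O\|_{\rm op}\,q_t\1_{B_r}q_t$ together with $\dG(q_t\1_{B_r}q_t) = \cN^+_{B_r}(t)$ yields the first term.

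For the cross terms, applying $\mathcal U_{N,t}$ via the rules \eqref{def:b} gives, in the original $N$-body formulation,
\[ \Tr(OP_t\gamma_{\psi_{N,t}}q_t) = \langle\psi_{N,t},\,a^*(q_tO\varphi_t)\,a(\varphi_t)\,\psi_{N,t}\rangle =: X(t), \]
with an analogous formula for the adjoint block. A key observation is that $X(0) = 0$ under the purely factorized initial condition $\psi_{N,0} = \varphi_0^{\otimes N}$: indeed $a(\varphi_0)\,\varphi_0^{\otimes N} = \sqrt{N}\,\varphi_0^{\otimes(N-1)}$, which is orthogonal in $\ell^2_s(\mathbb Z^{d(N-1)})$ to $a^*(q_0O\varphi_0)\,\varphi_0^{\otimes(N-1)}$ by $q_0O\varphi_0 \perp \varphi_0$. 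Hence $X(t) = \int_0^t (d/ds)X(s)\,ds$. Differentiating using the Schr\"odinger equation $i\di_s\psi_{N,s} = H_N\psi_{N,s}$, the Hartree equation $i\di_s\varphi_s = h_{\varphi_s}\varphi_s$, and the induced $\di_s q_s$, the result splits into a commutator contribution $i\langle\psi_{N,s},[H_N,a^*(q_sO\varphi_s)a(\varphi_s)]\psi_{N,s}\rangle$ and a kinematic contribution. The essential Rodnianski--Schlein-type cancellation then occurs: the $\mathcal O(\sqrt{N})$ piece of the interaction commutator acting on the condensate slot $a(\varphi_s)$ is exactly annihilated by the matching $\mathcal O(\sqrt{N})$ contribution produced by the nonlinearity $\l\vert\varphi_s\vert^2\varphi_s$ in $h_{\varphi_s}$. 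What remains is a bounded residual which, via Cauchy--Schwarz together with the standard bound $\|a(f)\phi\|\le\|f\|\|\cN^{1/2}\phi\|$ and the locality $\|q_sO\varphi_s\|\le\|O\|_{\rm op}\|\varphi_s\|_{\ell^2(B_r)}$, is controlled by $C\vert\l\vert\|O\|_{\rm op}\|\varphi_s\|_{\ell^2(B_r)}\|\varphi_s\|_{\ell^\infty}^2\,\langle\cN+1\rangle_{(s;s)}$. Converting $\langle\cN+1\rangle_{(s;s)}$ to $\langle\cN+1\rangle_{(t;s)}$ by a Gr\"onwall step analogous to \eqref{eq:moment-bound} of \lemref{lem23} furnishes the exponential factor $\exp(\vert\l\vert\int_s^t\|\varphi_\tau\|_{\ell^\infty}^2\,d\tau)$, and integration in $s\in[0,t]$ produces the third term of \eqref{61}.

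The principal difficulty will be the bookkeeping of the $\sqrt{N}$-cancellation in $(d/ds)X(s)$: one must expand the interaction commutator $[\tfrac{\l}{2N}\sum_xa_x^*a_x^*a_xa_x,\,a^*(q_sO\varphi_s)a(\varphi_s)]$ via the CCR \eqref{eq:CCR} and identify its $\mathcal O(\sqrt{N})$ component with the $\mathcal O(\sqrt{N})$ contribution coming from the nonlinear $\l\vert\varphi_s\vert^2$ term in $h_{\varphi_s}$. Once this resonance is isolated and subtracted, the remaining terms are manifestly subleading and can be estimated by routine Cauchy--Schwarz arguments together with the commutator bounds of \lemref{lemma:commutator}.
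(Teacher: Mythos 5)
Your decomposition into four $P_t/q_t$-blocks and the treatment of the two diagonal blocks are correct and essentially equivalent to what the paper does (the paper instead conjugates once by $\mathcal U_{N,t}$, arriving directly at the two-term identity $\Tr\gamma_{\psi_{N,t}}O - N\langle\varphi_t,O\varphi_t\rangle = \langle\dG(q_tOq_t)\rangle_{(t;0)} + \sqrt N\langle\phi_+(q_tO\varphi_t)\rangle_{(t;0)}$ and then decomposes the first piece). The divergence is entirely in the off-diagonal analysis, which is the heart of the lemma.

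The paper handles the $\mathcal O(\sqrt N)$ off-diagonal piece by a \emph{backward} Duhamel in the truncated Fock space: it introduces the test function $f_{(t;s)}=\mathcal L_{(t;s)}q_tO\varphi_t$ propagated by the full linearized Bogoliubov equation \eqref{fEq} (including the pairing term $-\lambda\widetilde K_{2,s}J$), uses $\langle\Omega,\phi_+(f_{(t;0)})\Omega\rangle=0$, and then the commutator $[\mathbb H,\phi_+(f_{(t;s)})]-\phi_-(i\partial_sf_{(t;s)})$ cancels the \emph{entire} quadratic generator $\mathbb H$, leaving only the sub-leading remainders $\mathcal R^{(j)}_{N,s}$. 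You instead run a \emph{forward} Duhamel in the $N$-body picture, using $X(0)=0$ from pure factorization and hoping for a Rodnianski--Schlein-type cancellation between the interaction commutator and the Hartree nonlinearity. The gap is that this cancellation only removes the \emph{density-density} (mean-field) channel $\lambda|\varphi_s|^2$; the pairing channel, which in the paper's language is the $\widetilde K_{2,s}b^*_xb^*_x$ part of $\mathbb H$, is \emph{not} cancelled by $h_{\varphi_s}$ and appears in $\dot X(s)$ with an a priori factor $\sqrt N$. In the paper's approach this is exactly why $\mathcal L_{(t;s)}$ must solve the full Bogoliubov equation (and not merely the Hartree linearization). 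You would need a separate argument that the pairing terms in $\dot X(s)$ are effectively $\mathcal O(\|\varphi_s\|_{\ell^2(B_r)}\langle\mathcal N+1\rangle)$, which is plausible but is precisely the non-trivial bookkeeping you defer; the reference to the commutator estimates of \lemref{lemma:commutator} does not help here, since those are for $\mathcal L_N(t)$, not for $[H_N,a^*(g)a(\varphi_s)]$. Also note that \eqref{def:b}, \eqref{eq:propU} give $\mathcal U_{N,t}^*a^*(f)a(\varphi_t)\mathcal U_{N,t}=\sqrt N\,b^*(f)$, not $a^*(q_tO\varphi_t)a(\varphi_t)$ in the $N$-body picture -- your formula for $X(t)$ is consistent with working directly in $\ell^2_s(\Zb^{dN})$, but then no paper lemma applies and every estimate must be re-derived.

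A smaller discrepancy: your route produces $\int_0^t\|\varphi_s\|_{\ell^2(B_r)}\cdots\,ds$ with the local factor evaluated at the \emph{intermediate} time $s$ and an exponent $\propto\int\|\varphi_\tau\|_{\ell^\infty}\,d\tau$ (from \lemref{lem23}); the stated lemma has $\|\varphi_t\|_{\ell^2(B_r)}$ pulled out at the \emph{final} time $t$ and the exponent $\propto\int\|\varphi_\tau\|^2_{\ell^\infty}\,d\tau$ (from \lemref{lemma:f}, which controls the backward Bogoliubov flow). Your version is not the one claimed, though it would serve the application in \thmref{thm:trace} equally well, since $\|\varphi_s\|_{\ell^2(B_r)}$ is small uniformly in $s\le t\le\tilde\rho/\tilde v$ by \thmref{thmNLSMVB}. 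In short: the approach is a genuine alternative, the diagonal blocks are fine, but the cancellation in the cross term (especially the pairing channel) is asserted rather than proved, and what you would prove is a cousin of \eqref{61} rather than \eqref{61} itself.
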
 
	%
	%
	%

	We postpone the proof of Lemma \ref{lemma:trace-diff} to \secref{secPfLem61}  and first show the proof of Theorem \ref{thm:trace} from Lemma \ref{lemma:trace-diff}. Note here the r.h.s.~of \eqref{61} are given in terms of the local number of fluctuations $\mathcal{N}_{B_r}$ and the local condensate density $\| \varphi_t \|_{\ell^2( B_r)}$, which are small respectively owning to \thmref{thm:nloc} and \thmref{thmNLSMVB}.

	\begin{proof}[Proof of Theorem \ref{thm:trace} assuming \lemref{lemma:trace-diff}]

		
		To begin with, let $$(\tilde\rho,\tilde v ) :=(\rho/2, v/2).$$  Then for any $4d<v\le \rho$, we have $\tilde\rho\ge \tilde v>2d$ and so \thmref{thm:nloc} holds with this choice of $(\tilde\rho,\tilde v)$. We now bound the three terms in the r.h.s.~of \eqref{61} in order for $t\le \tilde\rho/\tilde v$. 
		

		For the first term, we have by \thmref{thm:nloc} that
		\begin{align}
			\label{62}
			\langle \mathcal{N}_{B_r} \rangle_t\le 	C\tilde\rho^{-n} \br{\del{\cN+1}^2}_0.
		\end{align}
		
		For the second term,  note that since 
		$\supp \varphi_0\subset \Set{\abs{x}\ge r+2\tilde \rho}$,
		  by \thmref{thmNLSMVB} we have
		\begin{align}
			\label{e63}
			\| \varphi_t \|_{\ell^2(B_r)}^2 \le C\tilde\rho^{-n}.
		\end{align}
		This, together with \thmref{thm35}, implies that 
		\begin{align}
			\label{63}
			\| \varphi_t \|_{\ell^2(B_r)}^2    \langle  \mathcal{N} \rangle_t\le C\tilde\rho^{-2n} \br{ {\cN+1} }_0.
		\end{align}

		For the third term, we first use Lemma \ref{lem23} to obtain
		\begin{align}
			\langle (\mathcal{N} + 1) \rangle_{(t;s)}
			\leq& \langle (\mathcal{N} + 1) \rangle_{0} \exp \bigg( C_* \abs{\lambda} \int_s^t d\tau \| \varphi_\tau \|_{\ell^\infty( \mathbb{Z}^d)} \bigg).
			\label{65}
		\end{align}
		Next, 
		by assumption \ref{disCond} and the fact that $\norm{\varphi_t}_{\ell^\infty}\le \norm{\varphi_t}_{\ell^2}\equiv 1$, there exists some  $C>0$ 
		depending only on the constant $c$ in \eqref{dis} and $\norm{\varphi_0}_{\ell^1}$ 
		such that
		\begin{align}
			\label{65''}
			\int_s^t d\tau \; \| \varphi_\tau \|_{\ell^\infty ( \mathbb{Z}^d)}^{j} \le C ,\quad 0\le s\le t, \ j\ge 1.
		\end{align}
		Combining \eqref{65}--\eqref{65''} and using \thmref{thmNLSMVB} again, we find that the third term in the r.h.s.~of \eqref{61} is bounded as 
		\begin{align}
			\label{65'}
			\|  \varphi_t \|_{\ell^2( {B_r})} \int_{0}^t ds \;  e^{ \int_s^t d\tau \; \| \varphi_\tau \|_{\ell^\infty ( \mathbb{Z}^d)}^2} \langle  ( \mathcal{N} + 1) \rangle_{(t;s)}\le C \tilde\rho^{-n}\langle \mathcal{N} + 1 \rangle_{0}.
		\end{align}
		
		Finally,  plugging    \eqref{62}, \eqref{63}, \eqref{65'} back to \eqref{lemma:trace-diff}, we find
		\begin{align}
			\big\vert \Tr  \del{O \big( \gamma_{\psi_{N,t}} -  N \vert \varphi_t \rangle \langle \varphi_t \vert \big)} \big\vert  \le& C   \|O \|_{\rm op}     \tilde\rho^{-n } \br{\del{\cN+1}^2}_0    \; . \label{66}
		\end{align}
With the definition \eqref{def31} and identity \eqref{Nid} on the truncated Fock space $\mathcal{F}_{\perp \varphi_t}^{\leq N}$, we have 
		$$
		\langle \mathcal{N}_{B_r} \rangle_t = \langle \psi_{N,t}, \mathcal{N}_{B_r}^+ (t) \psi_{N,t} \rangle .
		$$
		Since the initial state is purely factorized, it follows that
		$$\br{\del{\cN+1}^2}_0 =1. $$
		These, together with \eqref{66} and the fact that $\tilde \rho^{-n}  \le C_n\rho^{-n}$, yields the desired estimate \eqref{110}. 
	\end{proof} 
	\begin{remark}
		From the proof we see that, similar to \eqref{512}, without assuming condition \ref{disCond} but instead using \lemref{lem23}, estimate \eqref{110} becomes 
		\begin{align}
			\label{}
			\big\vert \Tr  \del{O \big( \gamma_{\psi_{N,t}} -  N \vert \varphi_t \rangle \langle \varphi_t \vert \big)} \big\vert  \le& \frac{C e^{C\abs{\l}t}  \|O \|_{\rm op}      }{\rho^n} \br{\del{\cN+1}^2}_0   , \label{110'}
		\end{align}
		where $C=C(n,v)>0$. 
	\end{remark}
	
	\subsection{Proof of Lemma \ref{lemma:trace-diff}}
	\label{secPfLem61}
	The proof of Lemma \ref{lemma:trace-diff} is based on the observation  that the remainder terms $\mathcal{R}_{N,t}^{(j)}$ of the generator of the fluctuation dynamics in \eqref{eq:L-sum} are sub-leading in the large particle limit. We will show that the flucutation can be approximated by an asymptotic dynamic with generator that is quadratic in modified creation and annihilation operators only, i.e. 
	\begin{align}
		\mathcal{W}_N (t;0) \approx \mathcal{W}_\infty (t;0), \quad \text{with} \quad i \partial_t \mathcal{W}_\infty (t;0) = \mathbb{H} \mathcal{W}_\infty (t;0) \; .  
	\end{align}
	For the proof we furthermore use that the action of the asymptotic dynamics $\mathcal{W}_\infty (t;0)$ on modified creation and annihilation operators is approximately known and given by 
	\begin{align}
		\label{eq:action-Winfty}
		\mathcal{W}_\infty^* (t;0) \big[ b^*(h) + b(h) \big] \mathcal{W}_\infty (t;0) ) \approx b^*( \mathcal{L}_{(t;0)} h) + b( \mathcal{L}_{(t;0)} h) 
	\end{align}
	for any $h \in \ell^2( \mathbb{Z}^d)$, where the operator $\mathcal{L}_{(t;s)}$ satisfies 
	\begin{align}
		\label{fEq}
		\quad i \partial_s \mathcal{L}_{(t;s)} = \bigg( - \Delta + \vert \varphi_s \vert^2 + \lambda \widetilde{K}_{1,s} (x) - \lambda \widetilde{K}_{2,s} J\bigg) \mathcal{L}_{(t;s)}, \quad\mathcal{L}_{(t;t)} = 1 .
	\end{align}
	Note that \eqref{eq:action-Winfty} is exact, if the modified creation and annihilation operators are replaced by the standard ones,  in the definition of the generator of the asymptotic dynamics $\mathcal{W}_\infty (t;0)$. 
	
	In the Lemma below we prove basic properties of the generator $\mathcal{L}_{(t;s)}$ that we use later in the proof of Lemma \ref{lemma:trace-diff}. For similar results in the continuum see \cite{RS22,RL23}.

	\begin{lemma} \label{lemma:f}
	Let $\varphi_t,\,t\ge0$ denote the solution to the Hartree equation \eqref{Hartree} with initial data $\varphi_0 \in \ell^2( \mathbb{Z}^d)$, and let $\mathcal{L}_{(t;s)}$ be given for $s \in [0,t]$ by \eqref{fEq}. Then, we have 
		\begin{align}
			\label{eq:estimate-f}
			{\|\mathcal{L}_{(t;s)} f \|_{\ell^2( \mathbb{Z}^d)}^2 \leq \|f \|_{\ell^2( \mathbb{Z}^d)}^2} \exp \bigg(  2 \vert \lambda \vert \int_s^t d\tau \; \| \varphi_\tau \|_{\ell^\infty ( \mathbb{Z}^d)}^2   \bigg)   \; . 
		\end{align}
	\end{lemma}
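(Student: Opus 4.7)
The plan is to derive a pointwise differential inequality for the scalar function $u(s) := \|\mathcal{L}_{(t;s)}f\|_{\ell^2}^2$ and close it by Gr\"onwall's inequality. Setting $g(s) := \mathcal{L}_{(t;s)}f$, I would split the generator of \eqref{fEq} as
\begin{equation*}
A(s) := -\Delta + |\varphi_s|^2 + \lambda \widetilde{K}_{1,s}, \qquad B(s) := -\lambda \widetilde{K}_{2,s}J,
\end{equation*}
so that $i\partial_s g = (A(s)+B(s))g$. The operator $A(s)$ is $\mathbb{C}$-linear and self-adjoint on $\ell^2(\mathbb{Z}^d)$: $-\Delta$ and multiplication by $|\varphi_s|^2$ are obviously so, $\widetilde K_{1,s} = q_s |\varphi_s|^2 q_s$ is a symmetric sandwich of a real multiplication operator between projections, and $\lambda \in \mathbb{R}$. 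By contrast, $B(s)$ is only $\mathbb{R}$-linear, because $J$ denotes complex conjugation. This antilinear piece is precisely what prevents $\mathcal{L}_{(t;s)}$ from being an $\ell^2$-isometry; the whole content of the lemma is to quantify how non-isometric it is.

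Next, I would compute $\partial_s u(s) = 2\,\mathrm{Re}\langle g, \partial_s g\rangle = -2\,\mathrm{Re}\langle g, iA(s)g\rangle - 2\,\mathrm{Re}\langle g, iB(s)g\rangle$. Self-adjointness of $A(s)$ makes $\langle g, A(s)g\rangle$ real, so the first term vanishes. Only the antilinear contribution survives, giving
\begin{equation*}
\partial_s u(s) = 2\lambda\, \mathrm{Im}\langle g(s), \widetilde{K}_{2,s} Jg(s)\rangle.
\end{equation*}
To bound this term I would use the factorization $\widetilde K_{2,s} = (Jq_sJ) K_{2,s} q_s$ from \eqref{tKdef} together with the fact that $J$ is an $\mathbb{R}$-linear isometry, $q_s$ and $Jq_sJ$ are orthogonal projections of norm $\le 1$, and $K_{2,s}$ is multiplication by $\varphi_s^2$ with norm $\|\varphi_s\|_{\ell^\infty}^2$. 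The Cauchy--Schwarz inequality then yields
\begin{equation*}
|\partial_s u(s)| \le 2|\lambda|\,\|\widetilde K_{2,s}\|_{\mathrm{op}}\,\|Jg(s)\|\,\|g(s)\| \le 2|\lambda|\,\|\varphi_s\|_{\ell^\infty}^2\, u(s).
\end{equation*}

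Finally, since the initial condition $\mathcal{L}_{(t;t)} = 1$ gives $u(t) = \|f\|_{\ell^2}^2$, running Gr\"onwall's inequality backward in $s$ from $t$ produces
\begin{equation*}
u(s) \le \|f\|_{\ell^2}^2 \exp\!\left(2|\lambda|\int_s^t \|\varphi_\tau\|_{\ell^\infty}^2\,d\tau\right),
\end{equation*}
which is exactly \eqref{eq:estimate-f}. The only mild obstacle I anticipate is careful bookkeeping of real and imaginary parts due to the antilinearity of $J$, which forces one to work with the real $\ell^2$ structure rather than with an ordinary unitary group; beyond that, the argument is essentially a one-line energy estimate, and the particular form of the bound (exponent $2$ on $\|\varphi_\tau\|_{\ell^\infty}$, prefactor $2|\lambda|$, and constant $1$ in front of the exponential) is dictated entirely by the operator norm of the single antilinear term $\lambda\widetilde K_{2,s}J$.
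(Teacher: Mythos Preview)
Your proposal is correct and follows essentially the same route as the paper: differentiate $\|\mathcal{L}_{(t;s)}f\|_{\ell^2}^2$ in $s$, observe that the self-adjoint part $-\Delta+|\varphi_s|^2+\lambda\widetilde K_{1,s}$ drops out and only the antilinear piece $\lambda\widetilde K_{2,s}J$ survives, bound that contribution by $2|\lambda|\,\|\varphi_s\|_{\ell^\infty}^2\,\|\mathcal{L}_{(t;s)}f\|_{\ell^2}^2$, and close with Gr\"onwall from the terminal condition at $s=t$. The paper merely presents the computation in integrated form rather than as a differential inequality, but the argument is the same.
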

	
	\begin{proof}
		We compute 
		\begin{align}
			\|\mathcal{L}_{(t;s)} f  \|_{\ell^2( \mathbb{Z}^d)}^2 = \| f \|_{\ell^2( \mathbb{Z}^d)}^2 + 2 \lambda  i \Im \int_s^t d\tau \; \langle \mathcal{L}_{(t;\tau)} f, \widetilde{K}_{2,\tau} J \mathcal{L}_{(t;\tau)} f \rangle .
		\end{align}
		Since $\vert \widetilde{K}_{2,\tau} (x) \leq \vert \varphi_\tau ( x) \vert^2$ by definition \eqref{tkDef2}, we get 
		\begin{align}
			\|\mathcal{L}_{(t;s)} f  \|_{\ell^2( \mathbb{Z}^d)}^2 \leq \| f \|_{\ell^2( \mathbb{Z}^d)}^2  + 2 \vert \lambda\vert \int_s^t d\tau \| \varphi_\tau \|_{\ell^\infty ( \mathbb{Z}^d)}^2 \|\mathcal{L}_{(t;\tau)} f \|_2^2 
		\end{align}
		and we conclude \eqref{eq:estimate-f} by Gr\"onwall's inequality. 
	\end{proof}
	
	Recall the definition of $\br{\cdot}_{(t;s)}$ in  \eqref{def:Ast}. We are now ready to prove Lemma \ref{lemma:trace-diff}.

	\begin{proof}[Proof of Lemma \ref{lemma:trace-diff}]
		
		1. Denote by $\Om$ the vacuum in $\Fl$. By definition of the reduced one-particle density, we write with the definition of fluctuation dynamics 
		\begin{align}
			\Tr  \gamma_{\psi_{N,t}}   O 
			=& \langle \psi_{N,t}, \dG (O) \psi_{N,t} \rangle \notag \\
			=& \langle \Omega, \mathcal{W}_N^* (t;0) \mathcal{U}_{N,t} 
			\dG (O) \mathcal{U}_{N,t}^*  \mathcal{W}_N (t;0) \Omega \rangle \notag \\
			=& \langle \Omega, \mathcal{W}_N^* (t;0) \mathcal{U}_{N,t}^*  \; \sum_{x,y \in \Zb^d} O(x,y) \; a_x^*a_y  \; \mathcal{U}_{N,t} \mathcal{W}_N(t;0)  \Omega \rangle \notag \\
			=& \bigg\langle  \mathcal{U}_{N,t}^*  \; \sum_{x,y \in \Zb^d} O(x,y) \; a_x^*a_y  \; \mathcal{U}_{N,t} \bigg\rangle_{(t;0)} \; . 
		\end{align}
		From \eqref{def:b} and \eqref{eq:propU} it follows 
		\begin{align}
			\mathcal{U}_{N,t}^*  \; \sum_{x,y \in \Zb^d} O(x,y) \; a_x^*a_y  \; \mathcal{U}_{N,t}  = \dG (q_t O q_t) + \sqrt{N} \phi_+ (q_t O \varphi_t) + N \langle \varphi_t, O \varphi_t \rangle
		\end{align}
		where we introduced the notation, with the modified annhilation creation operators from \eqref{def:b},
		\begin{align}
			\phi_+ (h) = b^*(h) + b(h) , \quad \text{for} \quad h \in \ell^2( \Zb^d) \;  , 
		\end{align}
		and we arrive at 
		\begin{align}
			\label{eq:trace-diff2}
			\Tr  \gamma_{\psi_{N,t}}   O   - N \langle \varphi_t, O \varphi_t  \rangle  =\big\langle  \dG (q_t O q_t ) \big\rangle_{(t;0)}  + \sqrt{N}\big\langle  \phi_+ (q_t O \varphi_t) \big\rangle_{(t;0)}  \; . 
		\end{align}

		2. 	To prove Lemma \ref{lemma:trace-diff}, we need to control both terms in the r.h.s.~of \eqref{eq:trace-diff2}. We start with the first one and write, using that $q_t = 1- \vert \varphi_t \rangle \langle \varphi_t \vert$,
		\begin{align}
			\dG & (q_t O q_t ) \notag \\
			=& \sum_{x,y \in \Zb^d}\mathds{1}_{x \in B_r}   \bigg( O(x,y)  -   \overline{( O \varphi_t )}(x)\varphi_t (y)  -   \overline{\varphi_t (x)}  ( O \varphi_t )(y)+ \langle \varphi_t, O \varphi_t \rangle \varphi_t (x) \overline{\varphi_t} (y) \bigg)  \mathds{1}_{y \in B_r} \notag \\
			&+ \sum_{x,y \in \Zb^d}\mathds{1}_{x \in B_r^c} \bigg( \overline{\varphi_t (x)}  ( O \varphi_t )(y) + \langle \varphi_t, O \varphi_t \rangle \varphi_t (x) \overline{\varphi_t} (y) \bigg)  \mathds{1}_{y \in B_r}  \notag \\
			&+ \sum_{x,y \in \Zb^d}\mathds{1}_{x \in B_r} \bigg( \overline{( O \varphi_t )}(x)\varphi_t (y) + \langle \varphi_t, O \varphi_t \rangle \varphi_t (x) \overline{\varphi_t} (y) \bigg)  \mathds{1}_{y \in B_r^c} \notag \\
			&+\langle \varphi_t, O \varphi_t \rangle  \sum_{x,y \in \Zb^d}\mathds{1}_{x \in B_r^c}  \varphi_t (x) \overline{\varphi_t} (y)  \mathds{1}_{y \in B_r^c} .
		\end{align}
		We recall the notation $\mathcal{N}_{B_r} = \sum_{x \in B_r} a_x^* a_x$ and $\mathcal{N}_{B_r^c} = \sum_{x \in B_r^c} a_x^*a_x $, that allows to estimate the first term of the r.h.s. of \eqref{eq:trace-diff2} by 
		\begin{align}
			\big\vert\langle& \Omega, \mathcal{W}_N^*(t;0) \dG (q_t O q_t) \mathcal{W}_N (t;0) \Omega \rangle \vert \notag \\
			\leq& {\bigg( \|O \|_{\rm op} + 2\| O \varphi_t \|_{\ell^2( B_r)} \| \varphi_t \|_{\ell^2(B_r)}  +\vert \langle \varphi_t, O \varphi_t \rangle \vert \;  \| \varphi_t \|_{\ell^2(B_r)}^3 \bigg)} \langle \Omega, \mathcal{W}_N^*(t,0) \mathcal{N}_X \mathcal{W}_N (t;0) \Omega \rangle   \notag \\
			&+ 2\| \varphi_t \|_{\ell^2(B_r^c)} \| O \varphi_t \|_{\ell^2(B_r)}  \big( 1+ \vert \langle \varphi_t, O \varphi_t \rangle\vert  \big) \|\mathcal{N}_{B_r}^{1/2} \mathcal{W}_N(t;0) \Omega \| \; \|\mathcal{N}_{B_r^c}^{1/2} \mathcal{W}_N (t;0) \Omega \| \notag \\
			&+ \vert \langle \varphi_t, O \varphi_t \rangle \vert \; \| \varphi_t \|_{\ell^2 ( B_r^c)}^2 \langle \Omega, \mathcal{W}_N^* (t;0) \mathcal{N}_{B_r^c} \mathcal{W}_N (t,0) \Omega \rangle \; .
		\end{align}
		Since $\| \varphi_t \|_{\ell^2( B_r^c)} \leq \|\varphi_t \|_{\ell^2 ( \Zb^d)}$ and $\mathcal{N}_{B_r^c} \leq \mathcal{N}$, we furthermore have 
		\begin{align}
			\big\vert\langle \Omega, & \mathcal{W}_N^*(t;0) \dG (q_t O q_t) \mathcal{W}_N (t;0) \Omega \rangle \vert \notag \\
			\leq& C  \|O \|_{\rm op}  \langle \Omega, \mathcal{W}_N^*(t,0) \mathcal{N}_{B_r} \mathcal{W}_N (t;0) \Omega \rangle    + C  \| O \|_{\rm op} \| \varphi_t \|_{\ell^2(B_r)}^2    \langle \Omega, \mathcal{W}_N^* (t;0) \mathcal{N} \mathcal{W}_N (t,0) \Omega \rangle \; .
			\label{eq:trace-diff-bound1}
		\end{align}
		This bounds the first term  in the r.h.s.~of \eqref{eq:trace-diff2}.
		
		
		3.	For the second term of \eqref{eq:trace-diff2} a straight forward estimate in the same spirit as for the first term would yield a bound of $\mathcal{O}(\sqrt{N})$. To improve the scaling in $N$, we use that the flucutation dynamics effectively acts as an asymptotic Bogoliubov dynamics on the observable $\phi_+ (q_t O \varphi_t)$. For this we define the function 
		\begin{align}
			\label{def:f}
			f_{(t,s)} = \mathcal{L}_{(t;s)} q_t O \varphi_t,
		\end{align}
		where the generator $\mathcal{L}_{(t;s)}$ is defined by \eqref{fEq}. The goal is now to show that for the second term of the r.h.s. of \eqref{eq:trace-diff2} we have 
		\begin{align}
			\sqrt{N}  \langle \Omega,  \mathcal{W}_N^*(t;0) \phi_+ (q_t O \varphi_t ) \mathcal{W}_N (t;0) \Omega \rangle \approx  \sqrt{N}\langle \Omega, \phi_+ ( f_{(t,0)} ) \Omega \rangle,  
		\end{align}
		while the r.h.s.~vanishes as {$b(h) \Omega  =0$.}

		To be more precise, we have 
		\begin{align}
			\sqrt{N}  \langle \Omega,  &\mathcal{W}_N^*(t;0) \phi_+ (q_t O \varphi_t ) \mathcal{W}_N (t;0) \Omega \rangle \notag \\ 
			=& 	\sqrt{N}  \langle \Omega,  \mathcal{W}_N^*(t;0) \phi_+ (q_t O \varphi_t ) \mathcal{W}_N (t;0) \Omega \rangle - \sqrt{N}\langle \Omega, \phi_+ ( f_{(t,0)} ) \Omega \rangle \notag \\
			=& i\sqrt{N} \int_0^t ds \; \langle \Omega, \mathcal{W}_N (t;s) \bigg( \big[\mathbb{H}, \phi_+ ( f_{(t;s)} ) \big] -\phi_- ( i \partial_s f_{(t;s)} ) \bigg) \mathcal{W}_N (t;s) \Omega \rangle \notag \\
			&+ i \sqrt{N} \int_0^t ds \; \langle \Omega, \mathcal{W}_N^*(t;s) \big[ \mathcal{R}_{N,t}, \phi_+ ( f_{(t,s)} )\big] \mathcal{W}_N (t;s) \Omega \rangle  \label{eq:comm-L-Phi}
		\end{align}
		where we introduced the notation 
		\begin{align}
			i \phi_- (h ) = b(h) - b^*(h) \quad \text{for all} \quad h \in \ell^2( \mathbb{Z}^d) \; . 
		\end{align}
		Below we bound the two terms in the r.h.s.~of \eqref{eq:comm-L-Phi} respectively. 
		
		3.1. For the first term of the r.h.s. of \eqref{eq:comm-L-Phi}, we compute 
		\begin{align}
			\big[\mathbb{H}, \phi_+ ( f_{(t;s)} ) \big] -\phi_- ( i \partial_s f_{(t;s)} ) =& \frac{\lambda}{2} b^*\big( \widetilde{K}_{2,s} J f_{(t;s)} \big)   \frac{\mathcal{N}_+(t)}{N} - {\rm h.c.} \notag \\
			&+ \frac{\lambda}{2N} \sum_{x \in \mathbb{Z}^d} \widetilde{K}_{2,s} (x) a^*(f_{(t,s)} ) a^*_xb_x - {\rm h.c.} \notag \\
			&+ \frac{\lambda}{2N}\sum_{x \in \mathbb{Z}^d}  \widetilde{K}_{2,s} (x) b_x^*a_x^* a( f_{(t;s)} ) - {\rm h.c.} \label{eq:comm-H-Phi}
		\end{align}
		and estimate the terms separately. For the first, we get with $ \sup_{x \in \mathbb{Z}^d} \vert \widetilde{K}_{2,s} (x) \vert \leq C$  (see \eqref{tkDef2}) that
		\begin{align}
			\vert \langle \Omega,  &\mathcal{W}^*_N (t;s) b^*\big( \widetilde{K}_{2,s} J f_{(t;s)} \big)   \frac{\mathcal{N}^+(t)}{N} \mathcal{W}_N (t;s) \Omega \rangle \vert \notag \\
			\leq&  \frac{C}{N}\| b^*( \widetilde{K}_{2,s} J f_{(t;s)} )( \mathcal{N} + 1)^{1/4} \mathcal{W}_N (t;s) \Omega \| \; \|  \mathcal{N}^{3/4} \mathcal{W}_N (t;s) \Omega \| \notag \\
			\leq& \frac{C}{N} \| f_{(t;s)} \|_{\ell^2( \mathbb{Z}^d)} \langle \Omega, \mathcal{W}^*_N (t;s) ( \mathcal{N} + 1 )^{3/2} \mathcal{W}_N (t;s) \Omega \rangle
		\end{align}
		and the hermitian conjugate can be estimated similarly. Moreover, 
		\begin{align}
		&	\frac{1}{2N}  \vert \langle \Omega,   \mathcal{W}_N( t;s)  \sum_{x \in \mathbb{Z}^d} \widetilde{K}_{2,s} (x) a^* ( f_{(t;s)} ) a_x^*b_x \mathcal{W}_N (t;s) \Omega \rangle \vert \notag \\
			\leq&  \frac{1}{2N} \| \sum_{x \in \mathbb{Z}^d}   \widetilde{K}_{2,s} (x)  a^*_x b_x ( \mathcal{N} +1)^{-1/4}\mathcal{W}_N (t;s) \Omega \| \; \| a( f_{(t;s)} ) \mathcal{N}^{1/4}\mathcal{W}_N (t;s) \Omega \| \notag \\
			=&  \;  \frac{1}{2N}\| \sum_{x \in \mathbb{Z}^d} \widetilde{K}_{2,s} (x)   \frac{ \sqrt{N - \mathcal{N}^+ (t) +1}}{\sqrt{N}} a_x^* a_x ( \mathcal{N} +1)^{-1/4}\mathcal{W}_N (t;s) \Omega \| \; \| a( f_{(t;s)} ) \mathcal{N}^{1/4}\mathcal{W}_N (t;s) \Omega \| \notag \\ 
			\leq& \frac{C}{N} \| f_{(t;s)} \|_{\ell^2( \mathbb{Z}^d)} \|( \mathcal{N} + 1)^{3/4} \mathcal{W}_N (t;s) \Omega \|^2 \; 
		\end{align}
		and the hermitian conjugate as well as the remaining term of the r.h.s.~of \eqref{eq:comm-H-Phi} can be estimated similarly. Thus we arrive at the following bound for the first term in the r.h.s.~of \eqref{eq:comm-L-Phi}:
		\begin{align}
			\vert \langle \Omega,  & \; \mathcal{W}_N (t;s) \bigg( \big[\mathbb{H}, \phi_+ ( f_{(t;s)} ) \big] -\phi_- ( i \partial_s f_{(t;s)} ) \bigg) \mathcal{W}_N (t;s) \Omega \rangle \vert \notag \\
			\leq& {\frac{C\abs{\l}}{N}} \| f_{(t;s)} \|_{\ell^2( \mathbb{Z}^d)} \langle \Omega, \mathcal{W}^*_N (t;s) ( \mathcal{N} + 1 )^{3/2} \mathcal{W}_N (t;s) \Omega \rangle \;. \label{eq:comm-phi-H-end}
		\end{align}
		
		3.2. For the second term in the r.h.s.~of \eqref{eq:comm-L-Phi}, we recall the splitting {$\mathcal{R}_{N,s} = \sum_{j=1}^3 \mathcal{R}_{N,s}^{(j)}$ from \eqref{def:Ri}}
		and bound their contributions separately. 
		
		\textbf{Contribution from $\mathcal{R}_{N,s}^{(1)}$.}  We compute 
		\begin{align}
			\big[\mathcal{R}_{N,s}^{(1)}, \phi_+ (f_{(t;s)}) \big] =& {\frac{\lambda}{2} i \phi_- ( q_s \big[ \vert \varphi_s\vert^2 + \widetilde{K}_{1,s} - \mu_s \big] f_{(t;s)} \big) \frac{1 - \mathcal{N}^+ (t)}{N} }\notag \\
			&- \frac{\lambda}{N} \dG \big( q_s \big[ \vert \varphi_s \vert^2 + \widetilde K_{1,s} - \mu_s \big] q_s \big) \; i \phi_- ( f_{(t;s)} ) \notag \\
			&- {\rm h.c.}  
		\end{align}
		where we introduced the notation $i \phi_- (h) = b(h) - b^*(h)$. Since $\sup_{x \in \mathbb{Z}^d} \vert\vert \varphi_s (x) \vert^2 +\widetilde K_{1,s}(x) - \mu_s  \vert \leq C $ (see \eqref{tKdef} and \eqref{mutDef}),  we get 
		\begin{align}
			\vert \langle \Omega,  & \mathcal{W}^*_N (t;s) \big[\mathcal{R}_{N,s}^{(1)}, \phi_+ (f_{(t;s)}) \big]  \mathcal{W}_N (t;s) \Omega \rangle \vert \notag \\
			&\leq  {\frac{C\abs{\l}}{N}} \|f_{(t;s)} \|_{\ell^2( \mathbb{Z}^d)} \langle \Omega, \mathcal{W}^*_N (t;s) ( \mathcal{N} + 1)^{3/2} \mathcal{W}_N (t;s) \Omega \rangle \; .  \label{eq:comm-phi-R1-end}
		\end{align}
		
		\textbf{Contribution from $\mathcal{R}_{N,s}^{(2)}$.} Using that
		\begin{align}
			\big[ \mathcal{R}_{N,s}^{(2)}, \phi_+ (f_{(t;s)} )\big] =& \frac{\lambda}{\sqrt{N}} \sum_{x \in \mathbb{Z}^d} \varphi_s (x) \overline{f}_{(t;s)} (x) b^* (q_{s,x}) b(q_{s,x} ) \notag \\
			&- \frac{\lambda}{\sqrt{N}} \sum_{x \in \mathbb{Z}^d} \varphi_s (x) f_{(t;s)} (x) b(q_{s,x} ) b(q_{s,x}) \notag \\
			&- \frac{\lambda}{\sqrt{N}} \sum_{x \in \mathbb{Z}^d} \varphi_s (x) f_{(t;s)} (x) a^*(q_{s,x}) a(q_{s,x}) \bigg( 1- \frac{\mathcal{N}}{N} \bigg) \notag \\
			&- \frac{\lambda}{N^{3/2}} \sum_{x \in \mathbb{Z}^d} \varphi_s (x) a^*(q_{s,x}) a(q_{s,x}) a^*( f_{(t;s)} ) b(q_{s,x} ) \notag \\
			&-{\rm h.c.} \label{eq:comm-R2-Phi}
		\end{align}
		and  that $\sup_{x \in \mathbb{Z}^d} \vert f_{(t;s)}  (x) \vert^2 \leq  \|f_{(t;s)} \|_{\ell^2(\Zb^d)}^2$, we get via the Cauchy-Schwartz inequality that
		\begin{align}
			\frac{1}{\sqrt{N}}  &  \vert \langle \Omega, \mathcal{W}^*_N (t;s)   \sum_{x \in \mathbb{Z}^d} \varphi_s (x) \overline{f}_{(t;s)} (x) b^* (q_{s,x}) b(q_{s,x} ) \mathcal{W}_N (t;s) \Omega \rangle \vert \notag \\
			\leq& \frac{C}{\sqrt{N}} \bigg( \sum_{x \in \mathbb{Z}^d } \vert \varphi_s (x) \vert^2 \|b (q_{s,x} ) \mathcal{W}_{N} (t;s) \Omega \|^2 \bigg)^{1/2} \bigg( \sum_{x \in \mathbb{Z}^d } \vert f_{(t;s)} (x) \vert^2 \|b (q_{s,x} ) \mathcal{W}_{N} (t;s) \Omega \|^2 \bigg)^{1/2} \notag \\ 
			\leq& \frac{C}{\sqrt{N}} \| f_{(t;s)} \|_{\ell^2( \mathbb{Z}^d)} \langle \Omega, \mathcal{W}^*_N (t;s) ( \mathcal{N} + 1) \mathcal{W}_N (t;s) \Omega \rangle  \; .
		\end{align}
		Since $\| \varphi_s \|_{\ell^2( \mathbb{Z}^d)} \leq 1$, the second and third term of the r.h.s.~of \eqref{eq:comm-R2-Phi} can be estimated similarly. For the forth term of the r.h.s.~of \eqref{eq:comm-R2-Phi} we find 
		\begin{align}
			&\frac{1}{N^{3/2}}  \vert \langle \Omega, \mathcal{W}^*_N (t;s) \sum_{x \in \mathbb{Z}^d} \varphi_s (x) a^*( q_{s,x}) a(q_{s,x}) a^*(f_{(t;s)} ) b(q_{s,x} ) \mathcal{W}_N (t;s) \Omega \rangle \vert \notag \\
			\leq& \frac{1}{N^{3/2}}\bigg( \sum_{x \in \mathbb{Z}^d} \vert \varphi_s (x) \vert^2 \|a^*(q_{s,x} ) a(q_{s,x} ) \mathcal{W}_{N} (t;s) \Omega \|^2 \bigg)^{1/2} \bigg( \sum_{x \in \mathbb{Z}^d} \|a^*(f_{(t;s)}) b(q_{s;x} ) \mathcal{W}_N (t;s) \Omega \|^2 \bigg)^{1/2} \notag \\
			\leq& \frac{C}{\sqrt{N}} \|f_{(t;s)} \|_{\ell^2( \mathbb{Z}^d)} \langle \Omega, \mathcal{W}^*_N (t;s) ( \mathcal{N} + 1) \mathcal{W}_N (t;s) \Omega \rangle  \; . 
		\end{align}
		Thus, we arrive at 
		\begin{align}
			\vert \langle \Omega,  & \mathcal{W}_N (t;s) \big[\mathcal{R}_{N,s}^{(2)}, \phi_+ (f_{(t;s)}) \big]  \mathcal{W}_N (t;s) \Omega \rangle \vert \notag \\
			&\leq  \frac{C\abs{\l}}{\sqrt{N}} \|f_{(t;s)} \|_{\ell^2( \mathbb{Z}^d)} \langle \Omega, \mathcal{W}^*_N (t;s) ( \mathcal{N} + 1) \mathcal{W}_N (t;s) \Omega \rangle \; . \label{eq:comm-phi-R2-end}
		\end{align}
		
		\textbf{Contribution from $\mathcal{R}_{N,s}^{(3)}$.}
		It remains to control 
		\begin{align}
			\big[  \mathcal{R}_{N,s}^{(3)}, \phi_+ (f_{(t;s)} )\big] = \frac{\lambda}{N} \sum_{x \in \mathbb{Z}^d } \; \del{  \overline{f}_{(t;s)} (x) b_x^*a^*(q_{s,x}) a(q_{s,x}) - \rm{h.c.}}  .
		\end{align}
		We estimate this by
		\begin{align}
		  &  \big\vert \langle \mathcal{W}_N (t;s)\Omega,  \big[  \mathcal{R}_{N,s}^{(3)}, \phi_+ (f_{(t;s)} )\big] \mathcal{W}_N (t;s)\Omega \rangle \big\vert \notag \\
			\leq& \frac{2 \abs{\l}}{N}  \sum_{x \in \mathbb{Z}^d} \vert f_{(t;s)}(x) \vert \; \| b_x \mathcal{W}_N(t;s) \Omega \| \; \| a^*(q_{s,x})a(q_{s,x}) \mathcal{W}_N (t;s) \Omega \| \notag \\
			\leq& \frac{2 \abs{\l} }{N} \|f_{(t;s)} \|_{\ell^2( \mathbb{Z}^d)} \bigg( \sum_{x \in \mathbb{Z}^d} \| b_x \mathcal{W}_N(t;s) \Omega \|^2 \| a^*(q_{s,x})a(q_{s,x}) \mathcal{W}_N (t;s) \Omega \|^2 \big)^{1/2} \notag \\
			\leq& \frac{2 \abs{\l}}{N} \|f_{(t;s)} \|_{\ell^2(\mathbb{Z}^d)} \| (\mathcal{N}+1)^{1/2} \mathcal{W}_N (t;s) \Omega \| \; \| ( \mathcal{N}+1) \mathcal{W}_N (t;s) \Omega \| \;. \label{eq:comm-phi-R3-end}
		\end{align}
		Since $\cN\le N$ on $\Fl$ and $\norm{(\cN+1)^\al\W(t;s)\Om}^2= \langle \Omega, \mathcal{W}^*_N (t;s) ( \mathcal{N} + 1)^{2\al}  \mathcal{W}_N (t;s) \Omega \rangle$, we have
\begin{align}
	\label{}
			 &\frac{1}{N } \| (\mathcal{N}+1)^{1/2} \mathcal{W}_N (t;s) \Omega \| \; \| ( \mathcal{N}+1) \mathcal{W}_N (t;s) \Omega \|\notag\\=&\frac{1}{\sqrt{N}}\langle \Omega, \mathcal{W}^*_N (t;s) ( \mathcal{N} + 1)  \mathcal{W}_N (t;s) \Omega \rangle^{1/2} \langle \Omega, \mathcal{W}^*_N (t;s) \frac{ \mathcal{N} + 1}{N}  \mathcal{W}_N (t;s) \Omega \rangle^{1/2}\notag\\
			\le& \frac{1}{\sqrt{N}}\langle \Omega, \mathcal{W}^*_N (t;s) ( \mathcal{N} + 1)  \mathcal{W}_N (t;s) \Omega \rangle\notag.
\end{align}
	Plugging this back to  \eqref{eq:comm-phi-R3-end} yields
		\begin{align}
					  &  \big\vert \langle \mathcal{W}_N (t;s)\Omega,  \big[  \mathcal{R}_{N,s}^{(3)}, \phi_+ (f_{(t;s)} )\big] \mathcal{W}_N (t;s)\Omega \rangle \big\vert \notag \\
			\leq& \frac{2 \abs{\l}}{\sqrt N} \|f_{(t;s)} \|_{\ell^2(\mathbb{Z}^d)} \langle \Omega, \mathcal{W}^*_N (t;s) ( \mathcal{N} + 1)  \mathcal{W}_N (t;s) \Omega \rangle \label{eq:comm-phi-R3-end'}.
		\end{align}

		Combining \eqref{eq:comm-phi-R1-end}, \eqref{eq:comm-phi-R2-end} and \eqref{eq:comm-phi-R3-end'}, we conclude 
		\begin{align}
			\label{2bdd}
		 &\abs{ \langle \Omega, \mathcal{W}_N^*(t;s) \big[ \mathcal{R}_{N,t}, \phi_+ ( f_{(t,s)} )\big] \mathcal{W}_N (t;s) \Omega \rangle}\notag\\\le&  \frac{C\abs{\l}}{\sqrt{N}} \|f_{(t;s)} \|_{\ell^2( \mathbb{Z}^d)} \langle \Omega, \mathcal{W}^*_N (t;s) ( \mathcal{N} + 1)  \mathcal{W}_N (t;s) \Omega \rangle \; .
		\end{align}
		This bounds the second term in the r.h.s.~of \eqref{eq:comm-L-Phi}.
		
		4.	Summarizing \eqref{eq:comm-phi-H-end} and \eqref{2bdd}, we finally arrive at the following bound for  \eqref{eq:comm-L-Phi}:
		\begin{align}\label{638}
			\big\vert \langle\Omega,  &\mathcal{W}_N^*(t;0) \phi_+ ( q_t O \varphi_t) \mathcal{W}_N (t;0) \Omega \rangle \big\vert  \notag \\ 
			&\leq  \frac{C\abs{\l}}{\sqrt{N}} \int_0^t \|f_{(t;s)} \|_{\ell^2( \mathbb{Z}^d)} \langle \Omega, \mathcal{W}^*_N (t;s) ( \mathcal{N} + 1)  \mathcal{W}_N (t;s) \Omega \rangle \; . 
		\end{align}
		Furthermore, from Lemma \ref{lemma:f} we  get, recalling $\| f_{(t;t)} \|_{\ell^2( \mathbb{Z}^d)} = \| q_t O \varphi_t \|_{\ell^2( \mathbb{Z}^d)} \leq \|O \|_{\rm op} \|\varphi_t \|_{\ell^2(  B_r)} $,
		\begin{align}
			&\big\vert  \langle\Omega, \mathcal{W}_N^*(t;0) \phi_+ ( q_t O \varphi_t) \mathcal{W}_N (t;0) \Omega \rangle\big\vert  \notag \\ 
			\leq&  \frac{C\abs{\l} \| O \|_{\rm op}}{\sqrt{N}}  \|  \varphi_t \|_{\ell^2( B_r)}{ \int_{0}^t ds \;  \exp \bigg( \abs{\l}\int_s^t d\tau \; \| \varphi_\tau \|_{\ell^\infty ( \mathbb{Z}^d)}^2   \bigg)} \langle \Omega, \mathcal{W}^*_N (t;s) ( \mathcal{N} + 1) \mathcal{W}_N (t;s) \Omega \rangle \; .\label{eq:trace-diff-bound2}   \;  
		\end{align} 
		This completes the bound for the second term in the r.h.s.~of \eqref{eq:trace-diff2}.
		
		Lastly, combining \eqref{eq:trace-diff-bound1} and \eqref{eq:trace-diff-bound2} yields Lemma \ref{lemma:trace-diff}. 
		
	\end{proof}
	
	\subsection*{Acknowledgment}	
	The research of M.L. is supported by the Deutsche Forschungsgemeinschaft (DFG, German Research Foundation) through grant TRR 352--470903074. 
	S.R. is supported by the European Research Council via the ERC CoG RAMBAS--Project--Nr. 10104424. 
	J.Z.~is supported by the National Key R \& D Program of China 2022YFA100740, China Postdoctoral Science Foundation Grant 2024T170453, and the Shuimu Scholar program of Tsinghua University. He thanks Jacky Chong and Yulin Gong for very helpful remarks and suggestions. 
	
	\subsection*{Declarations}
	\begin{itemize}
		\item Conflict of interest: The Authors have no conflicts of interest to declare that are relevant to the content of this article.
		\item Data availability: Data sharing is not applicable to this article as no datasets were generated or analysed during the current study.
	\end{itemize}

	\appendix
	\section{Dispersive estimates for discrete Hartree equation}\label{App1}
	In this section we prove the dispersive estimate needed to control various time integrals of the condensate states $\varphi_t$.
	
	We consider discrete Hartree equation  on $\ell^2(\Zb^d),\,d\ge1$ of the form:
	\begin{align}
		\label{Hart}
		i\di_t \varphi = -\Lap \varphi + \l\abs{\varphi}^{2}\varphi.
	\end{align}
	{Here $\abs{\l}\le \l_0$} for some $\l_0>0$  to be determined later. 
	Write $\br{t}:=\sqrt{1+t^2}$ and $\norm{\cdot}_p:= \norm{\cdot}_{\ell^p(\Zb^d)}$.  The key ingredient in proving dispersive estimate for \eqref{Hart} is the $\ell^1\to\ell^\infty$ estimate for the free propagator obtained in Stefanov--Kevrekidis \cite[Thm.~3]{SK}: there exists $C_1=C_1(d)>0$ s.th.~for all $f\in\ell^1(\Zb^d)$, 
	\begin{align}\label{DisEst1}
		\norm{e^{it\Lap}f}_1\le C_1 \br{t}^{-d/3} \norm{f}_\infty.
	\end{align}
	By the $\ell^2$-mass conservation  for \eqref{Hart} and real interpolation , \eqref{DisEst1} implies that for any $2\le p\le\infty$ and $p'=\frac{p}{p-1}$, there exists $C_2=C_2(d,p)>0$ s.th.
	\begin{align}
		\label{DisEst2}
		\norm{e^{it\Lap}f}_{p}\le C_2 \br{t}^{-\al} \norm{f}_{p'},\quad \al:=\frac{d}{3}\frac{p-2}{p}.
	\end{align}
	By a standard fixed-point argument, it follows from \eqref{DisEst2} that
	\begin{corollary}[\cite{KT}, \cite{SK}]\label{corA2}
		Let $d\ge3$. Then for all $2\le q,\,r\le\infty$ satisfying
		\begin{align}
			\label{adm}
			\frac{1}{q}+\frac{d}{3r}\le \frac{d}{6},\quad (q,r,d)\ne(2,\infty,3),
		\end{align}
		there exist $\l_0>0$ and $C>0$ depending only on $d,\,q,\,r$  s.th.~if $\abs{\l}\le \l_0$ in \eqref{Hart} and $\norm{\varphi_0}_2\le1$, then there exists a unique global solution to \eqref{Hart}, $\varphi\in C(\Rb_{\ge0},\ell^q(\Zb^d))$, with initial condition $\varphi_0$, satisfying
		\begin{align}
			\label{phiEst2}
			\norm{\varphi}_{L^q_t\ell^r_x}\le C\norm{\varphi_0}_2.
		\end{align}
	\end{corollary}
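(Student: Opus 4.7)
The plan is to establish Strichartz estimates for the linear propagator by invoking the abstract Keel-Tao machinery on top of the pointwise decay \eqref{DisEst2}, and then close a standard Picard iteration for the Duhamel formula of \eqref{Hart} in a carefully chosen Strichartz norm. Smallness of $|\lambda|$ will make the contraction work for initial data with $\|\varphi_0\|_2\leq 1$.

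\textbf{Linear Strichartz.} First I would note that by \eqref{DisEst1} and $\ell^2$-unitarity of $e^{it\Delta}$, the linear propagator fits the hypotheses of Keel-Tao with effective dispersion dimension $n:=2d/3$. This yields, for any two Keel-Tao admissible pairs $(q,r),(\tilde q,\tilde r)$ obeying $\tfrac{2}{q}+\tfrac{n}{r}=\tfrac{n}{2}$, i.e.\ $\tfrac{1}{q}+\tfrac{d}{3r}=\tfrac{d}{6}$, the homogeneous and inhomogeneous bounds
\[
\|e^{it\Delta}f\|_{L^q_t\ell^r_x}\lesssim \|f\|_{\ell^2},\qquad \Big\|\int_0^t e^{i(t-s)\Delta}F(s)\,ds\Big\|_{L^q_t\ell^r_x}\lesssim \|F\|_{L^{\tilde q'}_t\ell^{\tilde r'}_x}.
\]
The Keel-Tao endpoint in dimension $n=2$ fails; translated back, this is precisely the excluded case $(q,r,d)=(2,\infty,3)$. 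The non-sharp admissibility $\tfrac{1}{q}+\tfrac{d}{3r}\leq\tfrac{d}{6}$ follows by complex interpolation between any sharp admissible pair and the energy pair $(\infty,2)$.

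\textbf{Nonlinear fixed point.} The Duhamel formula for \eqref{Hart} reads
\[
\Phi(\varphi)(t):=e^{it\Delta}\varphi_0-i\lambda\int_0^t e^{i(t-s)\Delta}|\varphi(s)|^2\varphi(s)\,ds.
\]
A convenient choice is the pair $(q,r)=(4,4)$, which is admissible for every $d\geq 3$ since $\tfrac{1}{4}+\tfrac{d}{12}=\tfrac{d+3}{12}\leq\tfrac{d}{6}$. By H\"older on $\Zb^d\times\Rb_{\geq 0}$,
\[
\big\||\varphi|^2\varphi\big\|_{L^{4/3}_t\ell^{4/3}_x}=\|\varphi\|_{L^4_t\ell^4_x}^3.
\]
Setting $X:=L^\infty_t\ell^2_x\cap L^4_t\ell^4_x$ and applying the homogeneous Strichartz estimate to $e^{it\Delta}\varphi_0$ together with the inhomogeneous Strichartz estimate with $(\tilde q,\tilde r)=(4,4)$, I would obtain
\[
\|\Phi(\varphi)\|_X\leq C_0\|\varphi_0\|_{\ell^2}+C_0|\lambda|\,\|\varphi\|_X^3,\quad \|\Phi(\varphi)-\Phi(\psi)\|_X\leq C_0|\lambda|\big(\|\varphi\|_X^2+\|\psi\|_X^2\big)\|\varphi-\psi\|_X.
\]
Choosing $\lambda_0$ so that $C_0\lambda_0(2C_0)^2\leq\tfrac12$, the map $\Phi$ is a contraction on the ball $\{\varphi\in X:\|\varphi\|_X\leq 2C_0\|\varphi_0\|_{\ell^2}\}$ whenever $|\lambda|\leq\lambda_0$ and $\|\varphi_0\|_{\ell^2}\leq 1$, producing a unique global solution $\varphi\in X$.

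\textbf{Extension to general admissible $(q,r)$ and main obstacle.} To upgrade from the distinguished pair $(4,4)$ to any admissible $(q,r)$ satisfying \eqref{adm}, apply the Strichartz inequality for that pair to the two pieces of the Duhamel identity evaluated at the fixed-point solution, bounding the nonlinearity via the already-proven $L^4_t\ell^4_x$ control. The main obstacle I anticipate is the honest verification of the Keel-Tao framework in the endpoint dimension $n=2$, i.e.\ $d=3$: there the sharp endpoint $(q,r)=(2,\infty)$ must be excluded (as stated), and all other admissible pairs are recovered by interpolating a non-endpoint sharp pair with the trivial pair $(\infty,2)$. Everything else is routine small-data theory, with the smallness budget dictated entirely by the cubic power of $\|\varphi\|_X$.
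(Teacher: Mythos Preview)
Your proposal is correct and is precisely the ``standard fixed-point argument'' the paper invokes without giving details; the corollary is stated with citations to Keel--Tao and Stefanov--Kevrekidis and no proof is supplied beyond the remark that it follows from the dispersive estimate \eqref{DisEst2}. One minor point: your working pair $(4,4)$ is sharp admissible only when $d=3$; for $d\ge4$ it lies strictly inside the admissible region, so the inhomogeneous Strichartz estimate with dual pair $(4/3,4/3)$ is not literally the Keel--Tao statement. On the lattice this is harmless because of the embedding $\ell^{r_0}\hookrightarrow\ell^4$ for the sharp admissible $r_0<4$, but you may want to note this explicitly.
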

	
	For our purpose, we often need to control the integral $\int_0^t \norm{\varphi_\tau}_\infty\,d\tau$ as $t\to\infty$. In dimension $3\le d\le5$, this integral cannot be controlled directly using \eqref{phiEst2} and therefore requires more refined estimate.  
	
	In the next result, we use \eqref{DisEst1} to prove an upper bound on the $\ell^\infty$-decay for solutions to \eqref{NLS} with weak nonlinearity  for $d\ge3$. Related results were obtained in \cite[Thm.~7 ]{SK} for $d=1,2$. See also \cite{HY}.
	
	\begin{theorem}\label{thmA1}
		Let $d\ge3$,  $\varphi_0\in\ell^1(\Zb^d)$, 
		and  $C_1>0$ be the constant from \eqref{DisEst1}. Then there exists $C_2=C_2(d)>0$ s.th.~if  $\norm{\varphi_0}_2\le1$, $\abs{\l}\le \l_0$, and  either
		\begin{align}
			\label{d4}
			\text{ 	$d\ge4$,  $0<\l_0\le \frac12 (C_1C_2) ^{-1}$,   and $t>0$,}
		\end{align}
		or
		\begin{align}
			\label{d3}
			{\text{	 $d=3$, $0<\l_0\le \frac14 (C_1C_2) ^{-2}$,  and $0< t\le T:=e^{1/\sqrt{\abs{\l}}}-1$},}
		\end{align}
		then  the unique solution  $\varphi_t$ to \eqref{Hart} with initial state $\varphi_0$ obeys
		\begin{align}
			\label{phiEst1}
			\norm{\varphi_t}_\infty\le 2C_1\br{t}^{-\al}\norm{\varphi_0}_{1},\quad \al=\frac{d}{3}.
		\end{align}
	\end{theorem}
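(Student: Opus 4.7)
The plan is to run a bootstrap/continuity argument based on the Duhamel formula, using the pointwise linear dispersive estimate \eqref{DisEst1} as the sole spacetime ingredient. Global (or long-time) existence of $\varphi_t$ in a Strichartz-type space is already supplied by \corref{corA2}, so the only issue is upgrading this existence to the pointwise-in-time decay bound \eqref{phiEst1}.

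First, I would write $\varphi_t$ via Duhamel:
\begin{align*}
\varphi_t = e^{it\Delta}\varphi_0 - i\lambda \int_0^t e^{i(t-s)\Delta}\,|\varphi_s|^2\varphi_s\,ds,
\end{align*}
apply $\|\cdot\|_\infty$, and invoke \eqref{DisEst1} on each slice together with the elementary inequality
\begin{align*}
\|\,|\varphi_s|^2\varphi_s\,\|_1 \;=\; \|\varphi_s\|_3^3 \;\le\; \|\varphi_s\|_\infty \,\|\varphi_s\|_2^2 \;\le\; \|\varphi_s\|_\infty,
\end{align*}
where in the last step I use the $\ell^2$-conservation for \eqref{Hart} together with $\|\varphi_0\|_2\le 1$. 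This yields
\begin{align*}
\|\varphi_t\|_\infty \;\le\; C_1\br{t}^{-d/3}\|\varphi_0\|_1 + C_1|\lambda|\int_0^t \br{t-s}^{-d/3}\,\|\varphi_s\|_\infty\,ds.
\end{align*}

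Next, define $M(t):=\sup_{0\le s\le t}\br{s}^{d/3}\|\varphi_s\|_\infty$, which is finite and continuous in $t$ on the maximal existence interval thanks to \corref{corA2}. Substituting $\|\varphi_s\|_\infty \le \br{s}^{-d/3}M(t)$ into the inequality above and multiplying by $\br{t}^{d/3}$ gives
\begin{align*}
\br{t}^{d/3}\|\varphi_t\|_\infty \;\le\; C_1\|\varphi_0\|_1 + C_1|\lambda|\, I_d(t)\, M(t),\qquad
I_d(t):=\br{t}^{d/3}\int_0^t\br{t-s}^{-d/3}\br{s}^{-d/3}ds,
\end{align*}
so that, taking the supremum in $t$ on the left, $M(t)\le C_1\|\varphi_0\|_1 + C_1|\lambda|\,I_d(t)\, M(t)$.

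The heart of the argument is then the splitting of $I_d(t)$ into $\int_0^{t/2}$ and $\int_{t/2}^t$ and the observation that one factor is at least of order $\br{t}^{-d/3}$ on each piece; this gives a universal bound $I_d(t)\le C_2(d)$ for $d\ge 4$ (both integrals are absolutely convergent since $d/3>1$), and $I_3(t)\le C_2\log(2+t)$ for $d=3$ (where the threshold case $d/3=1$ produces the well-known logarithmic loss). Plugging these in, one obtains
\begin{align*}
M(t)\le C_1\|\varphi_0\|_1+|\lambda|C_1C_2\,M(t)\qquad(d\ge 4),
\end{align*}
which under $|\lambda|\le \tfrac{1}{2}(C_1C_2)^{-1}$ as in \eqref{d4} yields $M(t)\le 2C_1\|\varphi_0\|_1$ for all $t\ge 0$ by absorbing the $M(t)$-term. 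For $d=3$ one instead gets $M(t)\le C_1\|\varphi_0\|_1+|\lambda|C_1C_2\log(2+t)\,M(t)$, and the choice $T=e^{1/\sqrt{|\lambda|}}-1$ together with $|\lambda|\le \tfrac{1}{4}(C_1C_2)^{-2}$ as in \eqref{d3} makes $|\lambda|C_1C_2\log(2+T)\le \sqrt{|\lambda|}C_1C_2\le \tfrac12$, giving again $M(t)\le 2C_1\|\varphi_0\|_1$ on $[0,T]$. This is precisely \eqref{phiEst1}. The main obstacle is the threshold case $d=3$: it is the reason one cannot get global-in-time dispersive decay by this method and is the source of the appearance of the time cutoff $T=e^{1/\sqrt{|\lambda|}}-1$; one has to track the logarithmic factor in $I_3(t)$ carefully and tune the two smallness parameters $|\lambda|$ and $T$ so that $|\lambda|\log(2+T)$ remains below the threshold fixed by $C_1,C_2$.
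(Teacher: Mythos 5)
Your proposal is correct and follows essentially the same route as the paper: Duhamel formula, the linear $\ell^1\to\ell^\infty$ dispersive estimate \eqref{DisEst1}, the bootstrap quantity $M(T):=\sup_{0\le s\le T}\langle s\rangle^{d/3}\|\varphi_s\|_\infty$, the split-at-$t/2$ estimate on the convolution integral (uniformly bounded for $d\ge4$, logarithmic for $d=3$), and absorption via the smallness of $|\lambda|$, respectively of $|\lambda|\log(1+T)$, to close. One cosmetic slip: with $T=e^{1/\sqrt{|\lambda|}}-1$ you wrote $\log(2+T)$, but the clean identity is $\log(1+T)=1/\sqrt{|\lambda|}$; using $\log(2+T)$ instead gives $|\lambda|C_1C_2\log(2+T)>\sqrt{|\lambda|}\,C_1C_2$, so the claimed chain $|\lambda|C_1C_2\log(2+T)\le\sqrt{|\lambda|}\,C_1C_2\le\tfrac12$ does not hold verbatim. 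This is easily repaired either by using $\log(1+T)$ as in the paper or by absorbing the extra $\log 2$ into a slightly smaller $\lambda_0$; it does not affect the substance of the argument.
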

	\begin{proof}
		Given a solution $\varphi_t,\,t\ge0$ to \eqref{Hart}, by the Duhamel principle, we have
		\begin{align}
			\label{Duh}
			\varphi_t= e^{it\Lap}\varphi_0-i\l\int_0^t e^{i(t-s)\Lap}\abs{\varphi_s}^{2}\varphi_s\,ds,\quad t\ge0.
		\end{align}
		%
		%
		%
		%
		Taking $L^\infty$-norm on both sides of \eqref{Duh} and applying \eqref{DisEst1} to the homogeneous part,	 we obtain
		\begin{align}
			\label{A8}
			\norm{\varphi_t}_\infty\le& C_1 \br{t}^{-\al}\norm{\varphi_0}_{1}+ \abs{\l} \int_0^t   \norm{e^{i(t-s)\Lap}\abs{\varphi_s}^2\varphi_s}_{\infty} \,ds,\quad t\ge0.
		\end{align}
		For $T>0$, define the quantity
		\begin{align}
			\label{}
			M(T):=\sup _{0\le s\le T}\br{s}^\al\norm{{\varphi_s}}_\infty.
		\end{align}
		Our goal is to show that $M(T)$ is a bounded function, globally for $d\ge4$ and up to large $T$ for $d=3$. 
		The remainder proof will be different for $d\ge 4$ and $d=3$, since in the latter case the dispersive estimate \eqref{DisEst1} is too weak to yields the necessary decay.
		
		\textbf{The case \eqref{d4}.}  We claim that the inhomogeneous part in \eqref{A8} is bounded, for $C_1>0$ from \eqref{DisEst1}, some $C_2=C_2(d)>0$ to be specified below, and all $T>0$, as
		\begin{align}
			\label{inhomBdd}
			\int_0^t  \norm{e^{i(t-s)\Lap}\abs{\varphi_s}^{2}\varphi_s}_\infty \,ds\le	 \norm{\varphi_0}_2^2C_1C_2 M(T)\br{t}^{- \al},\quad t\le T. 
		\end{align}
		Indeed, suppose \eqref{inhomBdd} holds.  Then plugging \eqref{inhomBdd} back to \eqref{A8} yields
		$$\br{t}^\al\norm{\varphi_t}_\infty \le C_1\norm{\varphi_0}_1+\abs{\l}\norm{\varphi_0}_2^2C_1C_2M(T),\quad t\le T.$$
		Taking supremum over $t$ on the l.h.s., we arrive at
		\begin{align}
			\label{MTest}
			M(T)\le C_1\norm{\varphi_0}_1 + \abs{\l}\norm{\varphi_0}_2^2C_1C_2 M(T). 
		\end{align} By condition \eqref{d4} and the assumption $\norm{\varphi_0}_2\le1$, we have $\abs{\l}\le  \frac12(\norm{\varphi_0}_2^2C_1C_2)^{-1}$. Plugging this back to \eqref{MTest} and rearranging yields $M(T)\le 2C_1\norm{\varphi_0}_1$. Since this bound holds uniformly for all $T\ge0$, we conclude the desired result \eqref{phiEst1} for the case \eqref{d4} by sending $T\to\infty$.

		Thus it remains to prove \eqref{inhomBdd}. 
		To this end we apply \eqref{DisEst1} to bound the integrand in the l.h.s.~of \eqref{inhomBdd}, yielding
		\begin{align}
			\label{a16}
			\norm{e^{i(t-s)\Lap}\abs{\varphi_s}^2\varphi_s}_{\infty}\le&  \br{t-s}^{-\al} \norm{\abs{\varphi_s}^2\varphi_s}_{1}\notag\\
			\le& \br{t-s}^{-\al}\norm{\varphi_s}_2^2\norm{\varphi_s}_\infty.
		\end{align}
		Using the trivial  $\ell^\infty$-estimate 
		\begin{align}
			\label{infDec}
			\norm{\varphi_s}_\infty\le M(T) \br{s}^{-\al},\quad s\le  T, 
		\end{align}
		the conservation of $\ell^2$-norm, and the fact that $\norm{\varphi}_2\le \norm{\varphi_0}_1$,  inequality \eqref{a16} becomes
		\begin{align}
			\label{a15}
			\norm{e^{i(t-s)\Lap}\abs{\varphi_s}^2\varphi_s}_{\infty}\le\norm{\varphi_0}_2^2M (T)\br{t-s}^{-\al}\br{s}^{-\al},\quad 0\le s\le t\le T.
		\end{align}
		Inserting \eqref{a15} back to \eqref{inhomBdd} yields
		\begin{align}
			\label{inhomBdd1}
			\int_0^t  \norm{e^{i(t-s)\Lap}\abs{\varphi_s}^{2}\varphi_s}_\infty \,ds\le  \norm{\varphi_0}_2^2M (T)\int_0^t\br{t-s}^{-\al}\br{s}^{-\al}\,ds,\quad 0\le t\le T.
		\end{align}
		It remains to prove decay estimate for the integral in the r.h.s.~of \eqref{inhomBdd1}.  We observe that for $0\le s\le t$, either $s>t/2$, or $0\le s\le t/2$ and $t-s\ge t/2$. This implies  the following  identity: for any $\al>0$, $\beta>1$, and $t\ge0$,
		\begin{align}
			\label{convId'}
			\int_{0}^t \br{t-s}^{-\al}\br{s}^{-\beta}\,ds\le&  \br{\tfrac t2}^{-\al}\int_{t/2}^t \br{t-s}^{-\al}\br{s}^{-(\beta-\al)}\,ds+\br{\tfrac t2}^{-\al}\int_{0}^{t/2} \br{s}^{-\beta} \,ds.
		\end{align}
		Consequently, since $\beta>1$, for all $t\ge0$ we have
		\begin{align}\label{convId}
			\int_{0}^t \br{t-s}^{-\al}\br{s}^{-\beta}\,ds\le  & \br{t}^{-\al}C_{\al,\beta},\quad C_{\al,\beta}=2^{1+\al}\int_{0}^\infty\br{s}^{-\beta}\,ds. 
		\end{align}
		{Under condition \eqref{d4}, we have $d\ge4$ and so $d/3>1$ in \eqref{inhomBdd1}.} Therefore the desired estimate \eqref{inhomBdd} follows from \eqref{inhomBdd1}, the bound \eqref{convId} with $\al=\beta=d/3$, and the choice $C_2=C_{d/3,d/3}$. 

		\textbf{The case  \eqref{d3}.} 
		For $d=3$, we have $\al=1$ and therefore proceeding exactly as in the previous case would result in a divergent integral in \eqref{convId}. Thus we instead use \eqref{convId'} to derive the estimate
		\begin{align}
			\label{a20}
			\int_{0}^t \br{t-s}^{-1}\br{s}^{-1}\,ds\le&  4\br{t}^{-1}\int_0^{t/2}\br{s}^{-1}\,ds\le  2\br{\tfrac t2}^{-1}\log (1+T) ,\quad 0\le t\le T.
		\end{align}
		Using \eqref{a20} in place of \eqref{convId} and proceeding as in the derivation of \eqref{MTest}, we arrive at
		\begin{align}
			\label{MTest'}
			M(T)\le C_1\norm{\varphi_0}_1 + \abs{\l}\norm{\varphi_0}_2^2C_1C_2\log(1+T) M(T). 
		\end{align}
		By condition  \eqref{d3} and the assumption $\norm{\varphi_0}_2\le1$, we have  {$\abs{\l}\le \frac12(\norm{\varphi_0}_2^2C_1C_2)^{-1}\sqrt{\abs{\l}}$ and $\sqrt{\abs{\l}}\log(1+T)\le1$.} These, together with \eqref{MTest'}, imply   $M(T)\le 2C_1\norm{\varphi_0}_1$. This establishes \eqref{phiEst1} for the case \eqref{d3} and the proof is complete.

	\end{proof}

	\section{Proof of Lemma \ref{lemma:commutator}} \label{sec:commutator-estimates}

	We prove all three esimates separately and start with the first one.   
	
	\textbf{Proof of \eqref{217}:}  We recall the splitting of the generator  in \eqref{eq:L-sum},
	\begin{align}
		\mathcal{L}_N (t) = \mathbb{H} + \sum_{j=1}^3 \mathcal{R}_{N,t}^{(j)},
	\end{align}
	and compute the commutator of each constituent part  of $\mathcal{L}_N (t)- \dG ( -\Delta)$ with $\sum_{z \in \mathbb{Z}^d} h(z) n_z$ separately. 
	
	We start with the observation that the commutation relations \eqref{eq:CCR} imply 
	\begin{align}\label{218}
		[ \mathbb{H} - \dG ( -\Delta), \sum_{z \in \Zb^d} h(z) a_z^*a_z] =& \frac{\lambda}{2}  \bigg[ \sum_{x \in \Zb^d} \big(  \widetilde{K}_{2,t} (x) b_x^*b_x^* + \overline{\widetilde{K}}_{2,t} (x) b_xb_x \big), \sum_{z \in \Zb^d} h(z) n_z \bigg] .
	\end{align}
	This, together with \eqref{def:b} and  \eqref{eq:CCR}, implies
	\begin{align}
		[ \mathbb{H} - \dG ( -\Delta), \sum_{z \in \Zb^d} h(z) a_z^*a_z] =&  - \lambda \sum_{x \in \Zb^d} \big(  \widetilde{K}_{2,t} (x) h(x) b_x^*b_x^* - \overline{\widetilde{K}}_{2,t} (x) h(x) b_xb_x \big) \; . \label{eq:comm-H}
	\end{align}
	Furthermore, by \eqref{def:b}, we have
	\begin{align}
		\vert b_x b_x \vert =& \sqrt{b_x^*b_x^*b_xb_x} \notag \\
		=& \frac{1}{N}\sqrt{ a_x^* \sqrt{ N - \mathcal{N}^+ (t)}a_x^* \big( N - \mathcal{N}^+ (t) \big) a_x \sqrt{ N - \mathcal{N}^+ (t)} a_x} \notag \\
		\leq& \frac{1}{\sqrt{N}}\sqrt{ a_x^* \sqrt{ N - \mathcal{N}^+ (t)}a_x^*a_x \sqrt{ N - \mathcal{N}^+ (t)} a_x}
	\end{align}
	as an operator inequality on $\mathcal{F}_{\perp \varphi_t}^{\leq N}$. Since $\sqrt{N  - \mathcal{N}^+(t)} a_x^* = a_x^* \sqrt{N  - \mathcal{N}^+(t)-1}$, we obtain 
	\begin{align}
		\vert b_x b_x \vert = \sqrt{b_x^*b_x^*b_xb_x} 
		\leq \frac{1}{\sqrt{N}}\sqrt{ a_x^* a_x^*(N - \mathcal{N}^+ (t)-1) a_x a_x} \leq \sqrt{ a_x^* a_x^* a_x a_x}\; . 
	\end{align}
	Recalling that $a_x^*a_x = n_x$ and $a_x^*n_x a_x = (n_x -1) a_x^*a_x = (n_x -1)n_x$ from \eqref{eq:CCR}, we arrive at 
	\begin{align}
		\vert b_x b_x \vert \leq \sqrt{(n_x-1)n_x} \leq n_x \; . \label{eq:estimate-bb}
	\end{align}
	Proceeding similarly for the hermitian conjugate, we find from \eqref{eq:comm-H} 
	\begin{align}\label{eB7}
		i [ \mathbb{H} - \dG ( -\Delta), \sum_{z \in \Zb^d} h(z) a_z^*a_z] \leq  2 \vert \lambda \vert \sum_{x \in \Zb^d} \vert h(x) \vert \vert \varphi_t (x) \vert^2 \; n_x  
	\end{align}
	where we used that $\vert \widetilde{K}_{2,t}(x) \vert \leq \vert \varphi_t (x) \vert^2$ (see \eqref{tkDef2}). 

	Next we estimate of the remainders $\mathcal{R}_{N,t}^{(j)}$. For this, we note that on $\mathcal{F}_{\perp \varphi_t}^{\leq N}$, the first remainder term defined in \eqref{213} evaluated in a quadratic form on $\mathcal{F}_{\perp \varphi_t}^{\leq N}$ reads 
	\begin{align}
		\mathcal{R}_{N,t}^{(1)} =& \frac{\lambda}{2} \dG ( \vert \varphi_t \vert^2 \varphi_t + \widetilde{K}_{1,t} - \mu_t ) \frac{1-\mathcal{N}_+ (t) }{N} + \lambda \frac{\mathcal{N}^+ (t) }{\sqrt{N}} b(  \vert \varphi_t \vert^2 \varphi_t ) + {\rm h.c.} 
	\end{align}
	On the one hand we have, denoting  $F_t := \dG ( \vert \varphi_t \vert^2 \varphi_t + \widetilde{K}_{1,t} - \mu_t )$, that 
	\begin{align}
		\big[ \sum_{z \in \Zb^d} n_z, \dG ( F_t ) \big] = \sum_{z \in \Zb^d}  \del{ h(z) F_t (z) - F_t (z) h (z)} a_z^*a_z =0. \label{eq:comm-R1-1}
	\end{align}
	On the other hand, with $f_t =  \vert \varphi_t\vert^2 \varphi_t $ , we have
	\begin{align} \label{eq:comm-R1}
		\big[ \sum_{z \in \Zb^d} h(z) n_z, \mathcal{R}_{N,t}^{(1)} \big] =&  \lambda \frac{\mathcal{N}^+ (t)}{\sqrt{N}} b(hf_t) - {\rm h.c.} 
	\end{align}
	These, together with the identity $\mathcal{N}^+ (t) = \sum_{z \in \Zb^d}a^*_za_z$, leads to  
	\begin{align}
		\mathcal{N}^+ (t) \vert b_x \vert = \sum_{z \in \Zb^d} a_z^*a_z \sqrt{b_x^*b_x} = \frac{1}{\sqrt{N}}\sum_{z \in \Zb^d} a_z^* \sqrt{a_x^* (N- \mathcal{N}^+(t)+1 ) a_x } a_z \leq \sum_{z \in \Zb^d} n_z(\sqrt{n_x})  ,
	\end{align}
	and analogously for the hermitian conjugate that leads with $\vert f_t (x) \vert \leq \vert \varphi_t (x) \vert^3$ to
	\begin{align}
		i \big[ \sum_{z \in \Zb^d} h(z) n_z, \mathcal{R}_{N,t}^{(1)} \big] \leq \frac{\vert \lambda \vert }{\sqrt{N}}\sum_{x \in \Zb^d} \vert h(x) \vert \; \vert \varphi (x) \vert^3 \;  {\sqrt{n_x}  \mathcal{N}^+ }\label{225}
	\end{align}
	as an operator inequality on $\mathcal{F}_{\perp \varphi_t}^{\leq N}$. 
	

	Moreover, on $\mathcal{F}_{\perp \varphi_t}^{\leq N}$ we have the identity 
	\begin{align}
		\mathcal{R}_{N,t}^{(2)} = \frac{\lambda}{\sqrt{N}} \sum_{x \in \Zb^d} \varphi_t (x) a^*_xa_xb_x   + {\rm h.c.}
	\end{align}
	leading to 
	\begin{align}
		\big[ \sum_{z \in \Zb^d} h(z) n_z, \mathcal{R}_{N,t}^{(2)} \big] =&  -\frac{\lambda}{\sqrt{N}} \sum_{x \in \Zb^d} \varphi_t (x) h(x) a^*_x a_x b_x - {\rm h.c.} \;  
	\end{align}
Since
	\begin{align}
		\vert a^*_x a_x b_x \vert =& \sqrt{ a_x^*a_x b_x b_x^* a_x^* a_x}\notag \\
		=& \frac{1}{\sqrt{N}}\sqrt{n_x \sqrt{N - \mathcal{N}^+ (t)}(n_x +1 )  \sqrt{N - \mathcal{N}^+ (t)} n_x } \notag \\
		=& \frac{1}{\sqrt{N}}\sqrt{n_x (n_x +1 )^{1/2} \big(N - \mathcal{N}^+ (t) \big)  a_x (n_x +1 )^{1/2}n_x } \notag \\
		\leq& \sqrt{ n_x^2 (n_x +1 )}, \label{eq:estimate-R2}
	\end{align}
	we   conclude that 
	\begin{align}
		i \big[ \sum_{z \in \Zb^d} h(z) n_z, \mathcal{R}_{N,t}^{(2)} \big]  \leq \frac{\vert \lambda \vert }{\sqrt{N}} \sum_{x \in \Zb^d} \vert \varphi_t (x) \vert \; \vert h(x)  \vert n_x(n_x +1 )^{1/2} \; .  \label{eB16}
	\end{align} 
	
	Finally, owning to the identity 
	\begin{align}
		\mathcal{R}_{N,t}^{(3)} = \frac{\lambda}{N} \sum_{x \in \Zb^d} a_x^*a_x^*a_xa_x ,
	\end{align}
	we have 
	\begin{align}
		\big[ \sum_{z \in \Zb^d} h(z) n_z, \mathcal{R}_{N,t}^{(3)} \big] =0 \label{eB18}.
	\end{align}. 

Combining  \eqref{eB7}, \eqref{225}, \eqref{eB16}, and \eqref{eB18}  yields the desired estimate \eqref{217} in Lemma \ref{lemma:commutator}. 
	
	\textbf{Proof of \eqref{eq:comm2}:} 
	%
	%
	%
	%
		Similarly as before, we compute the double commutator of the single contributions of $\mathcal{L}_N (t)$ with $\mathcal{N}$ separately. We start with 
		\begin{align}
			\big[ \mathcal{N}, \big[ \mathcal{N},  \mathbb{H} \big] \big] = 2 \lambda \sum_{x \in \Zb^d} \big( \widetilde{K}_{2,t} (x) b_x^*b_x^* + \overline{\widetilde{K}}_{2,t} (x) b_x b_x \big) 
		\end{align}
		where we used that $\big[\dG ( -\Delta), \mathcal{N} \big] =0$. Since $\big( \mathcal{N} + 3 \big)^{-1/2} b_x b_x =  b_x b_x \big( \mathcal{N} +1  \big)^{-1/2}$ and $\vert b_x b_x \vert \leq n_x$ from \eqref{eq:estimate-bb}, we have on the one hand 
		\begin{align}
			\big\| \big( \mathcal{N} + 3\big)^{-1/2} \sum_{x \in \Zb^d} \overline{\widetilde{K}}_{2,t} (x) b_x b_x  \psi \big\| \leq  \| \varphi_t \|_{\ell^\infty}^2 \| \big( \mathcal{N} + 1)^{1/2} \psi \|  \label{eq:comm-2-K2}
		\end{align}
		where we used that $\vert \widetilde{K}_{2,t} ( x) \vert \leq \vert \varphi_t (x) \vert^2$. On the other hand with $\big( \mathcal{N} + 3 \big)^{1/2} b_x^* b_x^* =  b_x^* b_x^* \big( \mathcal{N} +5  \big)^{1/2}$ we find for any $\xi,\psi \in \mathcal{F}_{\perp \varphi_t}^{\leq N}$ that
		\begin{align}
			\langle \xi, &  \big( \mathcal{N} + 3\big)^{-1/2} \sum_{x \in \Zb^d} \widetilde{K}_{2,t} (x) b_x^* b_x^*  \psi \rangle \leq \|  \sum_{x \in \Zb^d} \overline{\widetilde{K}}_{2,t} (x) b_x b_x ( \mathcal{N} + 3)^{-1} \xi \| \; \| ( \mathcal{N} + 5)^{1/2} \psi \|   \notag \\ 
			\leq&  \| \varphi_t \|_{\ell^\infty}^2 \| \xi \| \;  \| (\mathcal{N} + 5)^{1/2} \psi \|
		\end{align}
		where we used \eqref{eq:comm-2-K2}. With $\| ( \mathcal{N} + 5 )^{1/2} \psi \| \leq C \| ( \mathcal{N} + 1)^{1/2} \psi \|$ we thus arrive at 
		\begin{align}
			\big\vert \big\langle \xi,  ( \mathcal{N} + 3)^{-1/2} \big[ \mathcal{N}, \big[ \mathcal{N}, \mathbb{H} - \dG ( - \Delta ) \big] \big] \psi \big\rangle \big\vert \leq 20  \vert \lambda \vert  \| \varphi_t \|_{\ell^\infty}^2 \| ( \mathcal{N} + 1)^{1/2} \psi \| \;  \| \xi \| \; . 
		\end{align}
		
		To estimate the double commutator of $\mathcal{R}_{N,t}^{(1)}$ with $\mathcal{N}$, we find with \eqref{eq:comm-R1-1} and \eqref{eq:comm-R1} that
		\begin{align}
			\big[ \mathcal{N}, \big[ \mathcal{N}, \mathcal{R}_{N,t}^{(1)} \big] \big] =&  \lambda \frac{\mathcal{N}^+ (t)}{\sqrt{N}} b( f_t ) + \lambda  b^*( f_t ) \frac{\mathcal{N}^+ (t)}{\sqrt{N}}
		\end{align}
		where we used the notation $f_t =  \vert \varphi_t \vert^2 \varphi_t$. Since $ ( \mathcal{N} + 3)^{-1/2} b (f_t) =  b(f_t) ( \mathcal{N} + 2)^{-1/2}$ resp. $( \mathcal{N} + 3)^{-1/2} b^*(f_t)  = b^*(f_t) ( \mathcal{N} + 4)^{-1/2}$,  we find by Cauchy Schwarz for any $\psi \in \mathcal{F}_{\perp \varphi_t}^{\leq N}$ that
		\begin{align}
			\| ( \mathcal{N} + 3)^{-1/2} \big[ \mathcal{N}, \big[ \mathcal{N}, \mathcal{R}_{N,t}^{(1)} \big] \big] \psi \| \leq 4  \vert \lambda\vert \| f_t \|_{\ell^2} \| ( \mathcal{N} + 1)^{1/2} \psi \| 
			\leq 4 \vert \lambda \vert  \| \varphi_t \|_{\ell^\infty}^2  \| ( \mathcal{N} + 1)^{1/2} \psi \|
		\end{align}
		where we used that $\| f_t \|_{\ell^2} \leq \| \varphi_t \|_{\ell^\infty}^2 \| \varphi_t \|_{\ell^2} = \| \varphi_t \|_{\ell^\infty}^2 $. 
		
		The double commutator of $\mathcal{R}_{N,t}^{(2)}$ with $\mathcal{N}$ is given by 
		\begin{align}
			\big[ \mathcal{N}, \big[ \mathcal{N}, \mathcal{R}_{N,t}^{(2)}\big] \big]  = \frac{\lambda}{\sqrt{N}}\sum_{x \in \Zb^d} \big(  \varphi_t (x) a_x^*a_x b_x + \overline{\varphi}_t (x) b_x^*a^*_x a_x \big) \; . 
		\end{align} 
		On the one hand we have by Cauchy Schwarz 
		\begin{align}
			\big\vert \big\langle \xi,  & ( \mathcal{N} + 3)^{-1/2}\frac{1}{\sqrt{N}}\sum_{x \in \Zb^d}  \varphi_t (x) a_x^*a_x b_x \psi \big\rangle \big\vert \notag \\
			\leq& \frac{1}{\sqrt{N}} \bigg( \sum_{x\in \Zb^d} \| a_x^*a_x ( \mathcal{N} + 3)^{-1/2}\xi \|^2 \big)^{1/2} \big( \sum_{x \in \Zb^d} \vert \varphi_t (x) \vert^2 \; \| b_x \psi \|^2 \big)^{1/2} \notag \\
			\leq&  \| \varphi_t \|_{\ell^\infty} \| ( \mathcal{N} + 1 )^{1/2} \psi \|\; \| \xi \| 
		\end{align},
		and on the other hand, with similar arguments 
		\begin{align}
			\big\vert \big\langle \xi,  & ( \mathcal{N} + 3)^{-1/2}\frac{1}{\sqrt{N}}\sum_{x \in \Zb^d}  \overline{\varphi}_t (x) b_x^*a_x^* a_x \psi \big\rangle \big\vert \notag \\
			\leq& \frac{1}{\sqrt{N}} \bigg( \sum_{x\in \Zb^d} \| a_x^*a_x \psi \|^2 \big)^{1/2} \big( \sum_{x \in \Zb^d} \vert \varphi_t (x) \vert^2 \; \| b_x ( \mathcal{N} + 3)^{-1/2}\xi \|^2 \big)^{1/2} \notag \\
			\leq&  \| \varphi_t \|_{\ell^\infty} \| ( \mathcal{N} + 1 )^{1/2} \psi \|  \; \| \xi \|  
		\end{align},
		so that we get 
		\begin{align}
			\big\vert \big\langle \xi, ( \mathcal{N} + 3)^{-1/2} \big[ \mathcal{N}, \big[ \mathcal{N}, \mathcal{R}_{N,t}^{(2)}\big] \big]  \psi \big\rangle \big\vert \leq  \vert \lambda \vert \;  \| \varphi_t \|_{\ell^\infty} \| ( \mathcal{N} + 1 )^{1/2} \psi \|  \; \| \xi \|   \; .
		\end{align}
		With the observation $[\mathcal{N}, \mathcal{R}_{N,t}^{(3)}] =0$ this finally leads to \eqref{eq:comm2}.
		
		\textbf{Proof of \eqref{eq:comm1}:} As before, we estimate the single contribution of $\mathcal{L}_N (t)$ 
		separately. We start with 
		\begin{align}
			\big[ \mathcal{N},  \mathbb{H} \big]  =  \lambda \sum_{x \in \Zb^d} \big( \widetilde{K}_{2,t} (x) b_x^*b_x^* - \overline{\widetilde{K}}_{2,t} (x) b_x b_x \big)  \; . 
		\end{align}
		As in the proof of \eqref{eq:comm2}, we have on the one hand 
		\begin{align}
			\big\| \sum_{x \in \Zb^d} \overline{\widetilde{K}}_{2,t} (x) b_x b_x  \psi \big\| \leq  \| \varphi_t \|_{\ell^\infty}^2 \| \big( \mathcal{N} + 1) \psi \|  .
		\end{align}
		On the other hand with $ (\mathcal{N}  +1)b_x^* b_x^* =  b_x^* b_x^* \big( \mathcal{N} +3  \big) $ we find for any $\xi,\psi \in \mathcal{F}_{\perp \varphi_t}^{\leq N}$
		\begin{align}
			\langle \xi, &  \sum_{x \in \Zb^d} \widetilde{K}_{2,t} (x) b_x^* b_x^*  \psi \rangle \leq \|  \sum_{x \in \Zb^d} \overline{\widetilde{K}}_{2,t} (x) b_x b_x ( \mathcal{N} + 1)^{-1}\xi \| \; \| ( \mathcal{N} + 3)^{1} \psi \|   \notag \\ 
			\leq&  \| \varphi_t \|_{\ell^\infty}^2 \| \xi \| \;  \| (\mathcal{N} + 3) \psi \| \; . 
		\end{align}
		The remaining terms of the commutator $[\mathcal{L}_N (t), \mathcal{N}]$ can be estimated similarly along the lines of the proof of \eqref{eq:comm2} before, choosing different weights of $( \mathcal{N} + 1)^j$ as in the previous estimates.


		\section{Proof of \lemref{lem23}}\label{secPfLem23}
		
		By Gr\"onwall's inequality, it suffices to prove that there exists some universal constant $C_*>0$ s.th.~for $j=1,2,3$, 
		\begin{align} 
			{\od{}{t}	\big\langle ( \mathcal{N} + 1)^j \big\rangle_{(t;s)}} \leq C_* \abs{\lambda}\norm{\varphi_t}_{\ell^\infty} \langle (\mathcal{N} + 1 )^j \rangle_{(t;s)},\quad t\ge s. \label{252}
		\end{align} 
		We compute, using that $[\dG(-\Lap),\cN]=\dG([ -\Lap ,1])=0$,
		\begin{align}\label{233}
			\frac{d}{dt} \langle ( \mathcal{N} + 1)^j \rangle_{(t;s)} =  \langle i\big[ \mathcal{L}_N(t) - \dG ( -\Delta), (\mathcal{N}+1)^j  \big] \rangle_{(t;s)}.
		\end{align}
		Therefore \eqref{252} follow as long as 
		\begin{align}
			\label{252'}
			\langle \xi,i\big[ \mathcal{L}_N(t) - \dG ( -\Delta), (\mathcal{N}+1)^j \xi \big] \rangle\le  C_* \abs{\lambda}\norm{\varphi_t}_{\ell^\infty} \langle\xi, (\mathcal{N} + 1 )^j \xi\rangle,\quad \xi\in\Fl.
		\end{align} 
		
		\textbf{Proof of \eqref{252'} $j=1$:} 
		By \eqref{217} for $h(x)\equiv1$, we have
		\begin{align}\label{217'}
			i\big[ \mathcal{L}_N(t) - \dG ( -\Delta), \mathcal{N}  \big]
			\leq& 2 \vert \lambda \vert  \sum_{x \in \Zb^d}   \bigg( 2 \vert \varphi_t (x) \vert^2 n_x  + \frac{\vert \varphi_t (x) \vert^3}{\sqrt{N}} \mathcal{N} n_x^{1/2} + \frac{\vert \varphi_t (x) \vert}{\sqrt{N}} n_x (n_x+1)^{1/2} \bigg).
		\end{align}
		Through a string of elementary inequalities, the terms in the r.h.s.~of \eqref{217'} can be bounded as follows:
		\begin{align}
			\sum_{x \in \Zb^d}    \vert \varphi_t (x) \vert^2 n_x   \le & { \norm{\varphi_t}_{\ell^\infty}^2 }\cN,\\
			{ \sum_{x \in \Zb^d}   	\frac{\vert \varphi_t (x) \vert^3}{\sqrt{N}} \mathcal{N} n_x^{1/2}} 
			\le &   \frac\cN{\sqrt{N}} \del{\sum \abs{\varphi_t(x)}^6} ^{1/2} \del{\sum  n_x }^{1/2}\notag\\&\le \norm{\varphi_t}_{\ell^\infty}^2\norm{\varphi_t}_{\ell^2}\frac{\cN^{3/2}}{\sqrt{N}},\\
			2 { \sum_{x \in \Zb^d}  	 \frac{\vert \varphi_t (x) \vert}{\sqrt{N}} n_x (n_x+1)^{1/2}}
			\le& \sum \abs{\varphi_t(x)}  \del{ n_x+\frac{1}{N}    {n_x(n_x+1)}}\notag\\& 	
			\le{\norm{\varphi_t}_{\ell^\infty} }\del{\cN+\frac{\cN+\cN^2}{N} }. 		
		\end{align}
		Combining the estimates above and using that $\norm{\varphi_t}_{\ell^\infty}\le \norm{\varphi_t}_\ell^2= \norm{\varphi_t}_{\ell^2}\le 1$, we find that {for any $\xi\in\Fl$,}
		\begin{align}
			\label{413}
			\br{\xi,i\big[ \mathcal{L}_N(t) - \dG ( -\Delta), \mathcal{N}  \big]\xi}\le 8\abs{\l}\norm{\varphi_t}_{\ell^\infty} \br{\xi,(\cN+1)\xi}.
		\end{align}
		This yields \eqref{252'} for $j=1$.

		\textbf{Proof of\eqref{252'} for $j=2$:} We compute, by the Leibniz rule for commutators,
		\begin{align}
			& \big[ \mathcal{L}_N (t) - \dG (- \Delta), (\mathcal{N}+1)^2 \big]\notag\\ =&  (\mathcal{N}+1) \big[ \mathcal{L}_N (t)  - \dG (-\Delta), \mathcal{N}\big]   +  \big[ \mathcal{L}_N (t)  - \dG (-\Delta), \mathcal{N}\big] (\mathcal{N}+1)   \notag \\
			=& 2    (\mathcal{N}+1) \big[ \mathcal{L}_N (t)  - \dG (-\Delta), \mathcal{N}\big]   +  \big[ \big[ \mathcal{L}_N (t)  - \dG (-\Delta), \mathcal{N}\big], \mathcal{N}  \big]   \; . \label{101}
		\end{align}
		To bound the first term, we use \eqref{eq:comm1} to get
		\begin{align}
			\label{}
			&\br{\xi, (\mathcal{N}+1) \big[ \mathcal{L}_N (t)  - \dG (-\Delta), \mathcal{N}\big]\xi}\notag\\=&\br{(\mathcal{N}+1)\xi,  \big[ \mathcal{L}_N (t)  - \dG (-\Delta), \mathcal{N}\big]\xi}\notag\\
			\le& C \vert \lambda \vert  \| \varphi_t \|_{\ell^\infty} \norm{(\cN+1)\xi}^2.
		\end{align}
		To bound the second term in the r.h.s.~of \eqref{101}, we use the double commutator bound \eqref{eq:comm2} to obtain
		\begin{align}
			&\xidel{\big[ \big[ \mathcal{L}_N (t)  - \dG (-\Delta), \mathcal{N}\big], \mathcal{N}  \big]}\notag\\=&\br{(\mathcal{N}+3)^{1/2}\xi,  (\mathcal{N}+3)^{-1/2}\big[ \big[ \mathcal{L}_N (t)  - \dG (-\Delta), \mathcal{N}\big], \mathcal{N}  \big]\xi}\notag\\
			\le& C \vert \lambda \vert  \| \varphi_t \|_{\ell^\infty}  \norm{(\cN+3)^{1/2}\xi}\norm{(\cN+1)^{1/2}\xi}. 
		\end{align}
		Since $\norm{(\cN+1)\xi}^2=\xidel{(\cN+1)^2}$ and $\norm{(\cN+3)^{1/2}\xi}\norm{(\cN+1)^{1/2}\xi}\le C\xidel{(\cN+1)}$, we conclude \eqref{252'} for $j=2$. 
		
		\textbf{Proof of \eqref{252'} for $j=3$:} As before, we compute   with the Leibniz rule 
		\begin{align}
			\big[ (\mathcal{N} + 1)^3, \mathcal{L}_N (t)  - \dG ( -\Delta) \big] =& 3  ( \mathcal{N} + 1) \big[  \mathcal{N}, \mathcal{L}_N (t)  - \dG ( -\Delta) \big]( \mathcal{N} + 1) \notag \\
			&+ ( \mathcal{N} + 1) \big[ \mathcal{N}, \big[  \mathcal{N}, \mathcal{L}_N (t)  - \dG ( -\Delta) \big]\big] \notag \\
			&+ {\big[ \big[  \mathcal{N}, \mathcal{L}_N (t)  - \dG ( -\Delta) \big], \mathcal{N} \big] (\mathcal{N} +1)} \; . 
		\end{align}
		Using   \eqref{413} for the first term on the r.h.s.~and the bound  \eqref{eq:comm2} for the second and third, we find
		\begin{align}
			&\xidel{ \big[ (\mathcal{N} + 1)^3, \mathcal{L}_N (t)  - \dG ( -\Delta) \big]}\notag \\
			&\leq C \vert \lambda \vert \;  \| \varphi_t \|_{\ell^\infty}  \del{\xidel{( \mathcal{N} + 1)^3} +    \xidel{( \mathcal{N} + 1)}  ^{1/2}    \xidel{( \mathcal{N} + 1)^{3} } ^{1/2}}  \notag \\
			&\leq C \vert \lambda \vert \; \| \varphi_t \|_{\ell^\infty}  \xidel{( \mathcal{N} + 1)^3} \; . 
		\end{align}
		This yields \eqref{252'} for $j=3$.
		
		\nocite{}
		\bibliographystyle{alpha}
		\bibliography{LocFlucEst}

\begin{thebibliography}{SHOE11}

\bibitem[AFPS23]{arbunich2023maximal}
Jack Arbunich, J{\'e}r{\'e}my Faupin, Fabio Pusateri, and Israel~Michael Sigal.
\newblock {Maximal speed of quantum propagation for the Hartree equation}.
\newblock {\em Communications in Partial Differential Equations},
  48(4):542--575, 2023.

\bibitem[AKS13]{AKS}
G{\'e}rard~Ben Arous, Kay Kirkpatrick, and Benjamin Schlein.
\newblock A central limit theorem in many-body quantum dynamics.
\newblock {\em Communications in Mathematical Physics}, 321(2):371--417, May
  2013.

\bibitem[BdOS15]{benedikter2015quantitative}
Niels Benedikter, Gustavo de~Oliveira, and Benjamin Schlein.
\newblock {Quantitative Derivation of the Gross-Pitaevskii Equation}.
\newblock {\em Communications on Pure and Applied Mathematics},
  68(8):1399--1482, 2015.

\bibitem[BPPS22]{bossmann2022beyond}
Lea Bo{\ss}mann, S{\"o}ren Petrat, Peter Pickl, and Avy Soffer.
\newblock Beyond {B}ogoliubov dynamics.
\newblock {\em Pure and Applied Analysis}, 3(4):677--726, 2022.

\bibitem[BPS14]{benedikter2014mean}
Niels Benedikter, Marcello Porta, and Benjamin Schlein.
\newblock Mean--field evolution of fermionic systems.
\newblock {\em Communications in Mathematical Physics}, 331:1087--1131, 2014.

\bibitem[BPS16]{BPS}
Niels Benedikter, Marcello Porta, and Benjamin Schlein.
\newblock {\em Effective Evolution Equations from Quantum Dynamics}.
\newblock Springer International Publishing, 2016.

\bibitem[BS19]{brennecke2019gross}
Christian Brennecke and Benjamin Schlein.
\newblock {Gross--Pitaevskii dynamics for Bose--Einstein condensates}.
\newblock {\em Analysis \& PDE}, 12(6):1513--1596, 2019.

\bibitem[CH16]{chen2016klainerman}
Xuwen Chen and Justin Holmer.
\newblock On the {K}lainerman--{M}achedon conjecture for the quantum {BBGKY}
  hierarchy with self-interaction.
\newblock {\em Journal of the European Mathematical Society}, 18(6):1161--1200,
  2016.

\bibitem[CH19]{chen2019derivation}
Xuwen Chen and Justin Holmer.
\newblock The derivation of the $\mathbb{T}^3$ energy-critical {NLS} from
  quantum many-body dynamics.
\newblock {\em Inventiones mathematicae}, 217:433--547, 2019.

\bibitem[DCS23]{deuchert2023dynamics}
Andreas Deuchert, Marco Caporaletti, and Benjamin Schlein.
\newblock Dynamics of mean-field bosons at positive temperature.
\newblock {\em Annales de l'Institut Henri Poincar\'{e} (C) Analyse Non
  Lin\'{e}aire}, 4(41):995--1054, 2023.

\bibitem[DL23]{DL}
Charlotte Dietze and Jinyeop Lee.
\newblock {Uniform in Time Convergence to Bose-Einstein Condensation for a
  Weakly Interacting Bose Gas with an External Potential}.
\newblock In {\em Quantum Mathematics II}, pages 267--311. Springer Nature
  Singapore, 2023.

\bibitem[EESY04]{elgart2004nonlinear}
Alexander Elgart, L{\'a}szl{\'o} Erd{\H{o}}s, Benjamin Schlein, and Horng-Tzer
  Yau.
\newblock Nonlinear {H}artree equation as the mean field limit of weakly
  coupled fermions.
\newblock {\em Journal de math{\'e}matiques pures et appliqu{\'e}es},
  83(10):1241--1273, 2004.

\bibitem[ES09]{erdHos2009quantum}
L{\'a}szl{\'o} Erd{\H{o}}s and Benjamin Schlein.
\newblock Quantum dynamics with mean field interactions: a new approach.
\newblock {\em Journal of Statistical Physics}, 134(5):859--870, 2009.

\bibitem[ESY09]{erdHos2009rigorous}
L{\'a}szl{\'o} Erd{\H{o}}s, Benjamin Schlein, and Horng-Tzer Yau.
\newblock {Rigorous derivation of the Gross-Pitaevskii equation with a large
  interaction potential}.
\newblock {\em Journal of the American Mathematical Society}, 22(4):1099--1156,
  2009.

\bibitem[ESY10]{erdos2010derivation}
L{\'a}szl{\'o} Erdos, Benjamin Schlein, and Horng-Tzer Yau.
\newblock {Derivation of the Gross-Pitaevskii equation for the dynamics of
  Bose-Einstein condensate}.
\newblock {\em Annals of Mathematics}, 172:291--370, 2010.

\bibitem[FLS22a]{FLSa}
J\'{e}r\'{e}my Faupin, Marius Lemm, and Israel~Michael Sigal.
\newblock Maximal speed for macroscopic particle transport in the
  {B}ose-{H}ubbard model.
\newblock {\em Phys. Rev. Lett.}, 128(15):Paper No. 150602, 6, 2022.

\bibitem[FLS22b]{FLS}
J\'{e}r\'{e}my Faupin, Marius Lemm, and Israel~Michael Sigal.
\newblock On {L}ieb-{R}obinson bounds for the {B}ose-{H}ubbard model.
\newblock {\em Commun. Math. Phys.}, 394(3):1011--1037, 2022.

\bibitem[FPS23]{fresta2023effective}
Luca Fresta, Marcello Porta, and Benjamin Schlein.
\newblock Effective dynamics of extended fermi gases in the high-density
  regime.
\newblock {\em Communications in Mathematical Physics}, 401(2):1701--1751,
  2023.

\bibitem[GMM09]{GMM}
Manoussos~G. Grillakis, Matei Machedon, and Dionisios Margetis.
\newblock Second-order corrections to mean field evolution of weakly
  interacting bosons. {I}.
\newblock {\em Communications in Mathematical Physics}, 294(1):273--301, 2009.

\bibitem[GMM11]{GMMa}
M.~Grillakis, M.~Machedon, and D.~Margetis.
\newblock Second-order corrections to mean field evolution of weakly
  interacting bosons. {II}.
\newblock {\em Advances in Mathematics}, 228(3):1788--1815, October 2011.

\bibitem[HS21]{huang2021uncertainty}
Shanlin Huang and Avy Soffer.
\newblock Uncertainty principle, minimal escape velocities, and observability
  inequalities for {S}chr{\"o}dinger equations.
\newblock {\em American Journal of Mathematics}, 143(3):753--781, 2021.

\bibitem[HY19]{HY}
Younghun Hong and Changhun Yang.
\newblock Uniform {S}trichartz estimates on the lattice.
\newblock {\em Discrete and Continuous Dynamical Systems - A},
  39(6):3239--3264, 2019.

\bibitem[KL24]{kuwahara2024enhanced}
Tomotaka Kuwahara and Marius Lemm.
\newblock Enhanced {L}ieb-{R}obinson bounds for a class of {B}ose-{H}ubbard
  type hamiltonians.
\newblock {\em arXiv preprint, arXiv:2405.04672}, 2024.

\bibitem[KP10]{knowles2010mean}
Antti Knowles and Peter Pickl.
\newblock Mean-field dynamics: singular potentials and rate of convergence.
\newblock {\em Communications in Mathematical Physics}, 298:101--138, 2010.

\bibitem[KS21]{kuwahara2021lieb}
Tomotaka Kuwahara and Keiji Saito.
\newblock Lieb-{R}obinson bound and almost-linear light cone in interacting
  boson systems.
\newblock {\em Physical Review Letters}, 127(7):070403, 2021.

\bibitem[KT98]{KT}
Markus Keel and Terence Tao.
\newblock Endpoint {S}trichartz estimates.
\newblock {\em American Journal of Mathematics}, 120(5):955--980, October 1998.

\bibitem[Kuz15]{Kuz}
Elif Kuz.
\newblock Rate of convergence to mean field for interacting bosons.
\newblock {\em Communications in Partial Differential Equations},
  40(10):1831--1854, June 2015.

\bibitem[KVS24]{kuwahara2024effective}
Tomotaka Kuwahara, Tan~Van Vu, and Keiji Saito.
\newblock Effective light cone and digital quantum simulation of interacting
  bosons.
\newblock {\em Nature Communications}, 15(1):2520, 2024.

\bibitem[Lee19]{Lee}
Jinyeop Lee.
\newblock {On the Time Dependence of the Rate of Convergence Towards Hartree
  Dynamics for Interacting Bosons}.
\newblock {\em Journal of Statistical Physics}, 176(2):358--381, May 2019.

\bibitem[LNS15]{lewin2015fluctuations}
Mathieu Lewin, Phan~Th{\`a}nh Nam, and Benjamin Schlein.
\newblock Fluctuations around {H}artree states in the mean-field regime.
\newblock {\em American Journal of Mathematics}, 137(6):1613--1650, 2015.

\bibitem[LNSS15]{LNSS15}
Mathieu Lewin, Phan~Thanh Nam, Sylvia Serfaty, and Jan~Philip Solovej.
\newblock {B}ogoliubov spectrum of interacting {B}ose gases.
\newblock {\em Communications on Pure and Applied Mathematics}, 68(3):413--471,
  2015.

\bibitem[LR72]{lieb1972finite}
Elliott~H Lieb and Derek~W Robinson.
\newblock The finite group velocity of quantum spin systems.
\newblock {\em Communications in mathematical physics}, 28(3):251--257, 1972.

\bibitem[LR23]{RL23}
Marius Lemm and Simone Rademacher.
\newblock {Out-of-time-ordered correlators of mean-field bosons via Bogoliubov
  theory}.
\newblock {\em arXiv preprint, arXiv:2312.01736}, 2023.

\bibitem[LRSZ23]{LRSZ}
Marius Lemm, Carla Rubiliani, Israel~Michael Sigal, and Jingxuan Zhang.
\newblock Information propagation in long-range quantum many-body systems.
\newblock {\em Phys. Rev. A}, 108:L060401, Dec 2023.

\bibitem[LRZ23]{LRZ}
Marius Lemm, Carla Rubiliani, and Jingxuan Zhang.
\newblock On the microscopic propagation speed of long-range quantum many-body
  systems.
\newblock {\em arXiv preprint, arXiv:2310.14896}, 2023.

\bibitem[LS23]{lafleche2023strong}
Laurent Lafleche and Chiara Saffirio.
\newblock Strong semiclassical limits from {H}artree and {H}artree--{F}ock to
  {V}lasov--{P}oisson equations.
\newblock {\em Analysis \& PDE}, 16(4):891--926, 2023.

\bibitem[MPP19]{mitrouskas2019bogoliubov}
David Mitrouskas, S{\"o}ren Petrat, and Peter Pickl.
\newblock Bogoliubov corrections and trace norm convergence for the {H}artree
  dynamics.
\newblock {\em Reviews in Mathematical Physics}, 31(08):1950024, 2019.

\bibitem[Nap23]{napiorkowski2023dynamics}
Marcin Napi{\'o}rkowski.
\newblock Dynamics of interacting bosons: a compact review.
\newblock In {\em Density Functionals For Many-particle Systems: Mathematical
  Theory And Physical Applications Of Effective Equations}, pages 117--154.
  World Scientific, 2023.

\bibitem[NN17]{nam2017bogoliubov}
Phan~Th{\`a}nh Nam and Marcin Napi{\'o}rkowski.
\newblock Bogoliubov correction to the mean-field dynamics of interacting
  bosons.
\newblock {\em Advances in Theoretical and Mathematical Physics},
  21(3):683--738, 2017.

\bibitem[NRSS07]{nachtergaele2007lieb}
Bruno Nachtergaele, Hillel Raz, Benjamin Schlein, and Robert Sims.
\newblock Lieb-{R}obinson bounds for harmonic and anharmonic lattice systems.
\newblock {\em arXiv preprint, arXiv:0712.3820}, 2007.

\bibitem[Pic11]{pickl2011simple}
Peter Pickl.
\newblock A simple derivation of mean field limits for quantum systems.
\newblock {\em Letters in Mathematical Physics}, 97:151--164, 2011.

\bibitem[RS09]{RS}
Igor Rodnianski and Benjamin Schlein.
\newblock Quantum fluctuations and rate of convergence towards mean field
  dynamics.
\newblock {\em Communications in Mathematical Physics}, 291(1):31--61, July
  2009.

\bibitem[RS22]{RS22}
Simone Rademacher and Robert Seiringer.
\newblock {L}arge deviation estimates for weakly interacting bosons.
\newblock {\em Journal of Statistical Physics}, 188(1):9, 2022.

\bibitem[SHOE11]{schuch2011information}
Norbert Schuch, Sarah~K Harrison, Tobias~J Osborne, and Jens Eisert.
\newblock {Information propagation for interacting-particle systems}.
\newblock {\em Physical Review A}, 84(3):032309, 2011.

\bibitem[SK05]{SK}
Atanas Stefanov and Panayotis~G Kevrekidis.
\newblock {Asymptotic behaviour of small solutions for the discrete nonlinear
  Schr\"odinger and Klein-Gordon equations}.
\newblock {\em Nonlinearity}, 18(4):1841--1857, May 2005.

\bibitem[YL22]{yin2022finite}
Chao Yin and Andrew Lucas.
\newblock Finite speed of quantum information in models of interacting bosons
  at finite density.
\newblock {\em Physical Review X}, 12(2):021039, 2022.

\bibitem[Zha24a]{Zhaa}
Jingxuan Zhang.
\newblock Spectral localization estimates for abstract linear {S}chr\"odinger
  equations.
\newblock {\em arXiv preprint, arXiv:2409.10873}, 2024.

\bibitem[Zha24b]{Zha}
Jingxuan Zhang.
\newblock Upper bounds in non-autonomous quantum dynamics.
\newblock {\em arXiv preprint, arXiv:2409.13762}, 2024.

\end{thebibliography}
	\end{document}